\def\@copyrightspace{}
\def\@mkbibcitation{}
\def\BState{\State\hskip-\ALG@thistlm}
\def\addlegendimage{\pgfplots@addlegendimage}
\definecolor{ao(english)}{rgb}{0.0, 0.5, 0.0}
\definecolor{bananayellow}{rgb}{1.0, 0.88, 0.21}
\definecolor{amber}{rgb}{1.0, 0.75, 0.0}
\newcommand{\yat}[1]{[[\emph{\color{red}Yasser: #1}]]}
\newcommand{\dg}[1]{[[\emph{\color{ao(english)}Dhrub: #1}]]}
\newtheorem{theorem}{Theorem}
\newtheorem{lemma}{Lemma}
\newcounter{definitionctr}
\newenvironment{definition}{%      define a custom environment
   \medskip\noindent         \refstepcounter{definitionctr}% increment the environment's counter
   \textbf{Definition \thedefinitionctr .}% or \textbf, \textit, ...
   %\newline%
   }{}  %          create a vertical offset to following material
\numberwithin{definitionctr}{section}
\newcounter{examplectr}
\newenvironment{example}{%      define a custom environment
   \smallskip \noindent         \refstepcounter{examplectr}% increment the environment's counter
   \textbf{Example \theexamplectr .}% or \textbf, \textit, ...
   %\newline%
   }{}  %          create a vertical offset to following material
\numberwithin{examplectr}{section}
\pgfplotsset{compat=1.14}
\begin{document}

%\title{PIQUE: Online Enrichment for Progressive Quality Improvement in Query Evaluation}

\title{PIQUE: Progressive Integrated QUery Operator with Pay-As-You-Go Enrichment}

\author{
			Dhrubajyoti Ghosh,
			Roberto Yus,
			Yasser Altowim,
			Sharad Mehrotra \\
			University of California, Irvine, King Abdulaziz City for Science and Technology \\
			\texttt{\{dhrubajg,ryuspeir\}@uci.edu}, \texttt{yaltowim@kacst.edu.sa},
			\texttt{sharad@ics.uci.edu}  \\
		}

\maketitle

\begin{abstract}

Big data today in the form of text, images, video, and sensor data needs to be enriched (i.e., annotated with tags) prior to be effectively queried or analyzed. Data enrichment (that, depending upon the application could be compiled code, declarative queries, or expensive machine learning and/or signal processing techniques) often cannot be performed in its entirety as a pre-processing step at the time of data ingestion. Enriching data as a separate offline step after ingestion makes it unavailable for analysis during the period between the ingestion and enrichment. To bridge such a gap, this paper explores a novel approach that supports progressive data enrichment  during query processing in order to support interactive exploratory analysis.
Our approach is based on integrating an operator, entitled PIQUE, to support a prioritized execution of the enrichment functions during query processing. Query processing with the PIQUE operator significantly outperforms the baselines in terms of rate at which answer quality improves during query processing.

\end{abstract}

\section{Introduction}
\label{sect:introduction}

Today, organizations have access to potentially limitless information in the form of web data repositories, continuously generated sensory data, social media posts, captured audio/video data, click stream data from web portals, and so on~\cite{OnlineLiveStat}. Before such data can be used, it often needs to be enriched (i.e., associated with tags) using appropriate machine learning, signal processing techniques, complex queries, and/or compiled code. Examples of such enrichment include sentiment analysis extraction over social media~\cite{TweetSentimentAnalysis3}, named entity extraction in text~\cite{NamedEntity_Extraction1}, face detection in images~\cite{face_detection_2}, and sensor interpretation and fusion over sensory inputs~\cite{sensor_fusion_1}. 

%Data enrichment is a critical part of data processing. 

%Functions used to enrich data (which we refer to as {\em tagging functions}) often exhibit a cost vs. quality trade-off where cheaper functions may produce lower quality results whereas more expensive functions may result in improved quality.

%Before such data can be analyzed, it first needs to be enriched (i.e., associated with tags) using appropriate data processing techniques. 

Traditionally data enrichment is treated as a separate offline process performed in the back-end. Recently, there is an inherent drive towards real time analysis, making data enrichment an integral part of online data processing. Several industrial systems (e.g., Storm~\cite{storm}, Spark Streaming~\cite{SparkStreaming}, Flink~\cite{Flink}, Kafka Stream~\cite{KafkaStream}) were introduced that support data enrichment at the ingestion time. Recent work~\cite{IDEA_ingestion_framework} has explored an ingestion framework that aims at optimizing enrichment by batching such operations. While limited data enrichment (e.g., by running cheap functions) at the ingestion time is typically feasible, running a suite of computationally expensive functions to generate the best possible tags is often impossible due to the speed and volume of the incoming data which will make data ingestion a bottleneck.

One such application where ingestion can become a bottleneck is that of a smart building wherein a diverse set of sensors are deployed to track the movement of people inside it\footnote{This sample application is based on our experience in creating a smart IoT test bed at Donald Bren Hall of UCI~\cite{tippers}.}. Analysis of WiFi signal might provide a cheap (but relatively coarse) localization of people whereas face detection/recognition over images captured by hundreds of cameras within the building may offer an accurate, albeit more expensive, alternative. Exhaustive analysis of continuously captured video is likely computationally infeasible and/or wasteful.

%Instead, PIQUE answers queries based on subjects' location progressively by performing expensive video analysis on a selective portion of the data that helps in answering the query.

Another application is a tweet monitoring system that tracks events in real-time over social media~\cite{tweet_topic_follow_up_demo}. Different enrichment functions may be used to determine the tweet topics, references to locations in the text (if any), the sentiment/tone, etc. Although cheap variants of enrichment functions for such tasks may exist, enrichment functions achieving a high-quality labelling may incur significant computational cost. A user may be interested in analyzing social media posts (e.g., determining the geographical regions where the sentiment of tweets for a specific topic/event is mostly positive). Such analysis will require executing several enrichment functions such as topic determination, location determination, and sentiment analysis.

%PIQUE selectively chooses a set of tweets for each of these enrichment tasks with the goal of progressively answering the query. 

%but the majority of the enrichment needs to be performed at the query time. In such a setup, the problem we address in this paper is how can we perform such enrichment progressively at the query time. 

Motivated by such scenarios, we develop a mechanism to enrich data {\em progressively} at the time of query processing. Our mechanism is based on implementing a PIQUE ``{\it Progressive Integrated Query Operator  with Pay-as-you-go Enrichment}'' operator that integrated with databases, supports progressive query answering in a way that maximizes the rate at which the quality of the query answer improves $w.r.t$ time. Our implementation of PIQUE operator assumes a setup wherein some enrichment may have been applied on the data at the ingestion time but data still needs to be further enriched at the query time to meet the quality requirement of end-applications. A query containing PIQUE operator executes in epochs wherein a PIQUE operator chooses which objects to enrich and to what degree (i.e., using which enrichment functions). To determine such object function pairs, PIQUE prioritizes/ranks objects and enrichment functions to improve the quality of the answer as quickly as possible. When a PIQUE operator is called, it first analyzes the evaluation progress to generate an execution plan for the current epoch that has highest potential of improving the answer quality in that epoch. While PIQUE can be integrated into any database engine to support progressive computation, it offers the most benefit if the underlying query processing engine is made PIQUE-aware. In such a system, PIQUE can collaboratively filter objects based on other deterministic/precise attributes and restrict its selection of objects to only those objects whose enrichment can influence the quality of the query answer.

\vspace{0.4cm}
\noindent In summary, our contributions in this paper are as follows:

\begin{itemize}[leftmargin=*]
%\vspace{-0.2cm}
%\itemsep-0.2em 

%\item We develop a mechanism to jointly optimize data enrichment and query processing in order to progressive compute answers.

\item We propose a progressive approach to data enrichment and query evaluation entitled PIQUE that quantizes the execution time into epochs that are adaptively enriched. 
   
\item We present an algorithm for the problem of generating an execution plan that has the highest potential of improving the quality of the answer set in the corresponding epoch.
       
\item We develop an efficient probabilistic strategy for estimating the benefit of applying various enrichment functions on different objects.

%\item We present an efficient strategy to estimate the benefits of the possible triples (i.e., object-predicate-function triple) \yat{ rest should be removed} based on different properties (i.e., output of previous functions, cost and quality of remaining functions).    
    
\item We experimentally evaluate PIQUE in different domains (i.e., images and tweets) using real datasets and enrichment functions and demonstrate the efficiency of our solution.

%%%%%%%%%%%%%%

\end{itemize}

\section{Preliminaries}
\label{sect:problemDefinition}

In this section, we describe the high level view of the  PIQUE operator and its semantics. We also specify 
the requirements of the query processing engine to support PIQUE. PIQUE operator is defined for datasets where objects can be associated with one (or more) tags each of which is computed using enrichment functions associated with the tag.

\vspace{0.1cm}
\noindent
\textbf{Datasets.} Let $O = \{ o_1, o_2, \dots, o_{|O|}\}$ be a dataset of objects that have objects of different types such as images, tweets, etc. Each object has a certain number of \emph{precise} attributes and can be associated with a number of \emph{tags} of different \emph{tag types}. Each tag type is denoted by $T_i$. Table~\ref{table:runningExampleImage} shows an example with a dataset of six images. In this dataset, there are two precise attributes, {\em Timestamp} and {\em Location}, and two types of tags are associated with each objects: $T_1 = $ {\em Person} and $T_2 = $ {\em Expression}. (Note that the values in brackets in the table for the tags correspond to the ground truth values.)

\setlength\tabcolsep{1.5pt} 
\begin{table}[t]
\small
\centering
%\vspace{-0.2cm}
\begin{center}
 \begin{tabular}{|
 >{\centering\arraybackslash}p{10mm}
|>{\centering\arraybackslash}p{13mm}|>{\centering\arraybackslash}p{14mm}|>{\centering\arraybackslash}p{16mm}|>{\centering\arraybackslash}p{27mm}|} 
 \hline
 \textbf{Object} & \textbf{Location} & \textbf{Person} & \textbf{Expression} & \textbf{Timestamp} \\ 
 \hline
 $o_1$ & 2065 & (John) & (Smile)  & 2019-05-15 15:48:00 \\ 
 \hline
 $o_{2}$ & 2082 & (David) & (Smile)  & 2019-05-15 15:52:00 \\
 \hline
 $o_3$ & 2088 & (Jim) & (Neutral) & 2019-05-15 15:54:00 \\
 \hline
 $o_{4}$ & 2090 & (Harry) & (Neutral)  & 2019-05-16 11:08:00 \\
 \hline
 $o_{5}$ & 2065 & (David)  & (Smile)  & 2019-05-16 11:10:30 \\  
 \hline
 $o_{6}$ & 2086 & (John) & (Frown)  & 2019-05-16 11:12:00 \\
 \hline
\end{tabular}
%\vspace{-0.8em}
\caption{Running Example.}\label{table:runningExampleImage}
\vspace{-1em}
\label{table:1}
\end{center}
\end{table}
\setlength\tabcolsep{6pt} 

\begin{comment}
%Those tages are of these types.
% Commented By Yasser The set of possible tags depends upon the type of object and also on the set of analysis/extraction functions that are available for this purpose. We denote these \emph{tagging functions} of tag $t_i$ as $U_i=\{u_{ij}\}$, where $i$ represents the tag $t_i$ and $j$ represents $j$th function. For example, imagine that $D$ is a collection of tweets and $t_1=sentiment$ and $t_2=topic$, $u_{11}$ and $u_{12}$ could be the approaches presented in~\cite{??} and~\cite{??}, respectively.
%Image x is associated with y tages: and .
%These functions are probabilistic in nature. They take two inputs: an object and a required value of the tag from the query and they output a probability value. A function for detecting $topic$ will take two inputs: a tweet object and a required tag value $"presidential \ \ debate"$ and it will output the probability value of the object(e.g., 0.9) containing the tag. 
% \yat{The notation of the probability output of a tagging function is conusing and difficult to follow.. maybe we need to clean it a little bit..
% We can use $p^{i,k}_{j,l}$ to denote the probability output of function $f^{i}_{j}$ for object $o_k$ containing tag $t_l$. 
% Or we can denote the output of function $f^i_j$ on object $o_k$ for tag $t_l$, as $p^{i}_{j}(o_k,t_l)$
% }
% Can we use the notation of  $p^{i,j}_{k,l}$ to denote a tag type i, function j, object k and tag l. We can use $p^{i,j}_{k,l}$, output of a tagging function $f^i_j$ for an object $o_k$ and tag $t_l$ is $p^i_{j,t_k,l}$
\end{comment}

\setlength\tabcolsep{1.5pt} 
\begin{table}[t]
\small
\centering
\begin{center}
  \begin{tabular}{|c|c|}
 \hline
  \textbf{Notation} & \textbf{Definition} \\ 
 \hline
  $O$  & Dataset of objects \\ 
 \hline
 $T_{i}$ & Tag type \\ 
  \hline
 $t_{l} \in T_{i}$ & Tag $t_l$ of tag type $T_i$ \\
 \hline
 $F_i$ & Set of all enrichment functions of tag type $T_i$\\
 \hline
 $f^i_j$ & The $j$-th enrichment function of tag type $T_i$  \\
 \hline
 $q^i_j$ & Quality of enrichment function $f^i_j$  \\   
 \hline
  $c^i_j$ & Cost of enrichment function $f^i_j$  \\  
  \hline
  $Q$ & Query  \\ 
  \hline 
  $exp$ & An expression in PIQUE \\
 \hline
  $R^i_l$ & A predicate with tag $t_l \in T_i$ \\ 
  \hline
  $\mathds{R}$ & Set of predicates in a PIQUE Expression \\
 \hline
  $p_k$ & Predicate probability of $o_k$\\
 \hline
 
 $M_i$ & Combine function of
 			tag type $T_i$ \\  
 \hline
 $ s_k$ & State of $o_k$\\  
 \hline
 $\mathcal P_k$ & Expression satisfiability probability
 					of $o_k$\\  
 \hline
 $h_k$ & Uncertainty value of $o_k$ \\  
 \hline
  $\mathsf{A}_w$ & Answer set of epoch $w$ \\ 
  \hline
  $F_\alpha(\mathsf{A}_w)$ & $F_\alpha$ measure of answer set in epoch $w$ \\
  \hline
  $E(F_\alpha(\mathsf{A}_w))$ & Expected $F_\alpha$ measure of answer set in epoch $w$ \\
  \hline
  $\mathcal P^{\tau}_{w}$ & The threshold probability value of epoch $w$\\ 
  \hline
  $\mathsf{CS}_w$ & Candidate set chosen in epoch $w$ \\ 
 \hline
 $\mathsf{TS}_w$ & Set of triples generated in epoch $w$ \\ 
 \hline
 $\mathsf{EP}_w$ & Execution plan generated in epoch $w$ \\ 
 \hline
 \end{tabular}
 %\vspace{-0.8em}
\caption{Frequently used notations.}
\label{table:notation_table}
\end{center}
%\vspace{-0.6cm}
\end{table}

\vspace{0.1cm}
\noindent
\textbf{Enrichment Functions.} Let $F_i = \{f^i_1, f^i_2, \dots, f^i_k\}$ be a set of enrichment functions that takes as input an object (i.e., image or tweet) and a tag of a particular tag type and outputs the probability value of the object having that tag. E.g., a function $f^i_j$ takes as input an image object $o_k$ and a tag {\tt Person = John}, and returns the probability of the object containing the tag. Such an output is denoted by the notation of $f^i_j(o_k,t_l)$. Given the dataset in Table~\ref{table:runningExampleImage}, the tag {\tt Person = John} can be evaluated by performing face recognition using multiple functions such as functions based on a Decision Tree (DT) classifier, a Random Forest (RF) classifier, a Neural Network (NN) classifier, and an ensemble of the above respectively. Each of these functions takes as input an image and a tag (e.g., {\tt Person = John}) and outputs the probability of the object containing that tag (e.g., $0.8$).

%For example, given the dataset in Table~\ref{table:runningExampleImage}, the imprecise attribute {\em Person} can be evaluated by performing face recognition using four tagging functions that are based on a Decision Tree (DT) classifier, a Random Forest (RF) classifier, a Neural Network (NN) classifier, and an ensemble of the above respectively. Each of these functions takes as input an image and a tag (e.g., {\tt John}) and outputs the probability of the image containing that tag (e.g., $0.8$). 

Each function $f^i_j$ is associated with a quality and cost value, which we denote as $q^i_j$ and $c^i_j$, respectively. The quality of a function measures the accuracy of the function in detecting the tags. For example, the quality of a function that is internally implemented using a machine learning classifier is typically measured using the area under the curve (AUC) score of the classifier~\cite{BRADLEY19971145}. The cost of a function represents the average cost of running the function on an object. Our approach is agnostic to the way the quality and cost values are set. In Section~\ref{sect:experimental setup}, we show how these values can be determined either as a pre-processing step or during the query evaluation.

For each of the tag types $T_i$, one enrichment function $f^i_j $ is identified as the initial {\em seed} function. %We denote a seed function of $T_i$ by $f^i_0$. 
Such a function is expected to be of low computation cost and can thus be executed on all the objects at the ingestion time or as a pre-processing step prior to the query evaluation. In the above example, a decision tree classifier, that is 10x faster than the RF classifier and 40x faster than the NN classifier, may serve as the seed function for the tag type {\em Person}\footnote{Depending on the face recognition algorithm (e.g., detecting patterns over a small number of pixels within a window or detecting them over a large window), a classifier cost could be as high as four seconds to forty seconds per image~\cite{face_recognition_deva}. Such expensive enrichment functions cannot be applied on the entire data either upon the ingestion time or as a pre-processing step. PIQUE, in contrast, can execute such functions on a need basis on a small number of objects during the query execution time and is, thus, able to achieve high quality results without having to execute such functions on all objects.}.

%\sm{make this powerful.. give examples }

%possible tag values; e.g., in this case, the function can output the following distribution \emph{\{male:0.8, female:0.2\}} for a particular image. 
% $\mathfrak{p}^j_{k,l}$, where j is the function, k is the object and l is the tag
%explain them..

For clarity, we will present the paper as if only the seed enrichment functions were run on the objects prior to the arrival of the query. In Section~\ref{sect:experiments}, we discuss how PIQUE deals with the case where the outputs of other enrichment functions, run in previous queries, are cached and study the impact of different levels of caching on PIQUE's performance.

%\dg{Find the people who were in room 2065 and with John. It will require expensive face recognition algorithm to find John. The location 2065 is certain.}

%as they are the most common types of queries. %\ry{what are our settings of interest?}

\begin{comment}
%% Sharad's comment 14th October 2019, afternoon

PIQUE(O1.location),
User gives a duration of PIQUE(10 sec). PIQUE can be viewed as in execution on the operator. Combined execution. PIQUE((o1,person(John), Location("2082")), 10 sec). Per query there will be a single PIQUE operator. The goal becomes finding the appropriate combinations. PIQUE can be potentially implemented on join condition too. That will be future work. Nested query remove it. 
Desiderata:
answer semantic: 
query processing engine has to give an answer, since pique answer changes from epoch to epoch.. a mechanism to provide an 
answer and then take back the answer. I am giving answer based on the probability values. If I gave you answer, I provide a diff between these two. 
join query will be different.
complex query processing will be different. we have talked about query execution, not the query optimization. One can use PIQUE as expensive operator. 
make pique an operator. replace the imprecise attribute with PIQUE operator applied.
\end{comment}

\vspace{0.2cm}
\noindent
\textbf{PIQUE Operator.} The PIQUE operator in 
queries is akin to  table-level functions specified in  database systems such as SQL server \cite{SQLServer}.
The syntax for a PIQUE operator is as follows: 
%\vspace{-0.2cm}
\begin{equation*}
%\vspace{-0.2cm}
    PIQUE(\langle object\ \ set \rangle, epoch, \langle expression \rangle )
\end{equation*}

An example of a PIQUE operator in a query is shown below. This query retrieves all the images of John smiling inside of rooms 2080, 2081, and 2082 between 2 p.m. and 6 p.m. on the $14^{\text{th}}$ of October, 2019. 

\begin{footnotesize}
 \begin{lstlisting}[mathescape = true,label={lst:QueryExampleImage4}]
  SELECT * 
  FROM PIQUE(ImageDataset, 4,(Person("John") AND Expression("Smile"))) as O1 
  WHERE O1.location IN ["2080","2081,"2082"] 
  AND O1.Timestamp >= "2019-10-14 14:00:00" AND O1.Timestamp <= "2019-10-14 18:00:00"
 \end{lstlisting}
\end{footnotesize}

The PIQUE operator, defined over a set of objects 
(in this example, the ImageDataset) 
chooses the appropriate subset of objects from the
object set and applies appropriate enrichment functions (associated with the corresponding tags) 
to generate the set of objects that satisfy the associated expression (in this example, {\tt Person = John AND Expression = Smile}). 
The selected object-enrichment function pairs
are chosen in a way such that they can be computed in $epoch$ duration
(i.e., $4$ units of time in the example). In general, several PIQUE operators can be associated with a given query (e.g., one for each object collection with tags). Although, we will assume that the epoch associated with each of the PIQUE operators in a query is the same.

\vspace{0.1cm}
\noindent
\textbf{Query Processing with PIQUE Operator.}\label{query definition} 
PIQUE operators can be incorporated into existing database systems simply as table level functions. Used this way, they will first execute on the table containing objects, select the set of object-enrichment function pairs, execute the corresponding enrichment functions, modify the tags associated with the objects appropriately, and then execute the query\footnote{We assume that the functions already executed for each object (formalized as {\em state} in Section ~\ref{sect:overview}), are encoded as a hidden attribute with the objects. PIQUE utilizes the state to choose appropriate enrichment functions. This ensures that subsequent invocation to PIQUE will select different enrichment functions in different epochs.}. However, an optimized PIQUE-aware query processor could optimize the execution further. For instance, by pushing the selection conditions in the WHERE clause of the query, prior to selecting objects to enrich, can bring substantial improvements, specifically if the predicates in the WHERE clause are very selective. 
Furthermore, PIQUE returns a set of answers at the end of an epoch. One could continuously call the corresponding query repeatedly to refine the result sets returned by PIQUE or, if the underlying query engine is PIQUE-aware (as assumed in the rest of the paper), the system automatically executes the query and refines answers in epochs\footnote{Answers returned by a query processor using PIQUE at the end of each epoch may invalidate some of the answers returned in previous epochs. For instance, if a classifier marks a person incorrectly as "John" in one iteration, it may choose to rectify the answer in a later epoch. Dealing with such changes across epochs is assumed to be dealt by the database engine. }.

% We will assume that the state of the object (formalized in Section ~\ref{sect:overview}) that reflects which enrichment functions have already been executed on the object, is encoded as a hidden attribute with the object itself. This state is used by the PIQUE implementation for choosing the appropriate enrichment functions. This way, subsequent calls to PIQUE will choose different enrichment functions across different epochs.

% We assume that the state of the objects (formalized in Section ~\ref{sect:overview}), i.e., the functions already executed on it, are encoded as a hidden attribute with the objects itself. State is used by the PIQUE implementation to choose appropriate enrichment functions. This ensures subsequent invocation to PIQUE will select different enrichment functions in different epochs.
\noindent
\textbf{Aside.} In the main body of the paper, for 
simplicity of exposition, we consider that the objects that needs to be enriched by PIQUE can fit in memory. In Section~\ref{sect:disk_based_approach}, we show how PIQUE supports the scenario where the objects do not fit in memory (i.e., disk-resident objects) and present the experimental results related to this scenario in Section~\ref{sect:experimental_result}. 

Also for  notational simplicity,  we will represent expression associated with PIQUE operator as a set of predicates connected by boolean connective operators: And ($\Lambda$) or Or ($\vee$) and refer to this set as $\mathds{R}$. (A predicate is used henceforth to refer to a predicate on tags.) A predicate, denoted as $R^i_l \in \mathds{R}$, consists of a tag type $T_i$, a tag $t_l$, and an operator, which can be either ``=" (equal) or ``!=" (not equal). An example of a predicate can be {\tt Person=  John}.

\section{Overview of PIQUE}
\label{sect:overview}

Given a dataset $O$, an $epoch$, and an expression $exp$ containing predicates that require enrichment by using a set of enrichment functions for each tag type, PIQUE's goal is to maximize the rate at which the quality of the answer set $A$ improves. We first define the quality metric for $A$, then we explain the progressive answer semantics based on the quality metric, and finally, explain the different high-level steps to evaluate the PIQUE operator.

\vspace{0.1cm}
\noindent
\textbf{Quality Metric.} 
Each query answer has a notion of quality associated with it. A quality metric represents how close the answer set $A$ is to the ground truth set $G$. In a set-based answer, the $F_\alpha$-measure is the most popular and widely used quality metric. It is the harmonic mean of precision and recall. Precision ($Pre$) is defined as the fraction of correct objects in $A$ to the total number of objects in $A$ whereas recall ($Rec$) is defined as the fraction of correct objects in $A$ to the total number of objects in $G$. More formally, the $F_\alpha$-measure is computed as follows:
\begin{equation}
F_\alpha (A)= \frac{(1 + \alpha) \cdot Pre (A) \cdot Rec(A)}{(\alpha \cdot Pre(A) + Rec(A))}
\end{equation}

\noindent
where $Pre(A) = |A \cap G|/|A| $, $ Rec(A) =|A \cap G|/|G|$, and $ \alpha \in [0,1]$ is the weight factor assigned to the precision value. 

\vspace{0.1cm}
\noindent
\textbf{Progressive Answer Semantics.} 
The quality of a progressive approach can be measured by the following discrete sampling function~\cite{progressive-duplicate-detection}: 
\begin{equation}\label{progressiveness-metric}
\vspace{-0.2cm}
\begin{split}
\vspace{-0.4cm}
& Qty(A) = \sum\limits_{i=1}^{|V|} W(v_i) \cdot Imp(v_i)
\end{split}
\end{equation}
where $V = \{ v_1, v_2, \dots, v_{|V|}\}$ is a set of sampled cost values (s.t. $v_i > v_{i-1}$),  $W$ is a weighting function that assigns a weight value $W(v_i) \in [0,1]$ to every cost value $v_i$ (s. t. $W(v_i) > W(v_{i-1})$), and $Imp(v_i)$ is the improvement in the $F_\alpha$ measure of $A$ that occurred in the interval $(v_{i-1}, v_i]$. (For convenience, we assume the existence of a cost value $v_0 = 0$ henceforth.) In other words, $Imp(v_i)$ is the $F_\alpha$ measure of $A$ at $v_i$ minus the $F_\alpha$ measure of $A$ at $v_{i-1}$\footnote{{Our approach can be used to optimize for the case where the goal is to generate the highest possible quality result given an evaluation cost budget $BG$. This is achieved by having $V = \{ v_1 = BG\}$, setting $W(v_1) = 1$, and configuring the approach to terminate when the budget $BG$ is consumed.}}.

PIQUE divides the execution time into multiple epochs. Each epoch consists of three phases: plan generation, plan execution, and answer set selection. A plan consists of a list of triples where each triple consists of an object $o_k$, a predicate $R^i_l$, and an enrichment function $f^i_m$. Identifying the list of triples in the plan depends on a trade-off between the cost of evaluating those triples and the expected improvement in the quality of the answer set that will result from evaluating them. Figure~\ref{fig:progressive algorithm} presents a high-level overview of the PIQUE. Given a PIQUE expression, in the following we explain how the predicate probability values of the objects are calculated $w.r.t.$ a predicate in the expression and then we explain each of the previously mentioned steps in details.

\vspace{0.1cm}
\noindent
\textbf{Predicate Probability.} 
Each object $o_k \in O$ is associated with a \emph{predicate probability} {\em w.r.t.} each predicate $R^i_l$, denoted as $p(o_k,R^i_l)$. This predicate probability is defined as the probability of the object $o_k$ satisfying the predicate $R^i_l$ and is computed based on the probability outputs of the enrichment functions in $F_i$ as follows:
\begin{equation}
%p(o_k,R) = M_i( f^i_1(o_k,t_l), f^i_2(o_k,t_l) ,  \dots, f^i_{|F_i|}(o_k,t_l))
p(o_k,R^i_l) = M_i(g^i_1(o_k, t_l), g^i_2(o_k,t_l) , \; \dots, \; g^i_{|F_i|}(o_k,t_l))
\end{equation}
\noindent
where $g^{i}_j(o_k, t_l) = f^i_j(o_k, t_l)$ if the operator in $R^i_l$ is ``=''. Otherwise, $g^i_j(o_k, t_l) = 1 - f^i_j(o_k, t_l)$. Function $M_i$ is a combine function that takes the outputs of the enrichment functions in $F_i$ and returns the predicate probability of $o_k$ satisfying predicate $R^i_l$. Our approach is agnostic to the way the function $M_i$ is implemented. In Section~\ref{sect:experimental setup}, we show one example implementation of such a function.

\vspace{0.1cm}

\subsection{Plan Generation Phase}
\vspace{-0.1cm}
The process of generating a plan takes three inputs: the state, expression satisfiability probability, and uncertainty values of each object in \emph{O}. We define these terms as follows:

\begin{definition}\label{DefState}
$State$ of an object $o_k$ {\em w.r.t.} a predicate $R^i_l$ is denoted as $s(o_{k},R^{i}_{l})$ and defined as the set of enrichment functions in $F_i$ that have already been executed on $o_k$ to determine if it contains tag $t_l$. 
\end{definition}

PIQUE maintains for each object a state value for each predicate mentioned in $exp$. The set of all the state values of an object $o_k$ {\em w.r.t.} a query $Q$ is referred to as a \emph{state vector} and denoted as $S_k$.

\begin{definition}\label{Defjoint}
\emph{The Expression Satisfiability Probability} (ESP) value of an object $o_k$ is denoted as $\mathcal P_k$ and defined as the probability of $o_k$ satisfying all the predicates present in the PIQUE expression. The computation of ESP depends on the type of the query. 
\end{definition}

%\sm{should you stante that we will illustrate how it is computed for different queries as we discuss our method}

For a PIQUE expression with a single predicate, we use the predicate probability of an object as the ESP value of the object. For a PIQUE expression with multiple predicates, we use the individual predicate probability values to determine the ESP value of an object. We assume that any two predicates containing two different tags of different tag types (e.g., $\textit{Person} $ = $ {\tt John}$ $\Lambda$ $\textit{ Expression} $ = $ {\tt Neutral}$) to be independent and any two predicates containing tags of the same type (e.g., $\textit{ Person} $ = $ {\tt John}$ $ \vee$ $\textit{ Person} $ = $ {\tt David}$) to be mutually exclusive.

\begin{definition}
The \emph{Uncertainty} of each $o_k \in O$ is measured using the {\em entropy} of the object. Given a discrete random variable $X$, the entropy is defined as follows :  
\begin{equation}
H(X) = - \displaystyle\sum_{x} Pr(X = x) \cdot log(Pr(X = x))
\vspace{-0.2cm}
\end{equation}
where $x$ is a possible value of $X$ and $Pr(X = x)$ is the probability of $X$ taking the value $x$. In our setup, we consider a tag type $T_i$ as a random variable and the tags of this type are considered as the possible values of the random variable.  
\end{definition}

The uncertainty value of an object $o_k$ {\em w.r.t.} predicate $R^i_l$ is denoted as $h(o_k, R^{i}_{l})$ and calculated using the predicate probability value as follows:

\vspace{-0.2cm}
\begin{equation}\label{eqn:uncertainty}
\vspace{-0.2cm}
\begin{split}
h(o_k, R^{i}_{l}) &= -p(o_k,R^i_l)\cdot log(p(o_k,R^i_l)) - \\
&(1-p(o_k,R^i_l)) \cdot  log(1-p(o_k,R^i_l))
\end{split}
\end{equation}

\begin{comment}
\begin{equation}\label{eqn:uncertainty}
h(o_k, R^{i}_{j}) = 
\begin{cases} 
      -p^k_{i,j} \cdot log(p^k_{i,j})&-(1-p^k_{i,j}) \cdot log(1-p^k_{i,j}), \\ &if \ \                    0\leq p^k_{i,j}\leq 1 \\
      0 & if \ \  p^k_{i,j}=\{0,1\} 
\end{cases}
\end{equation}
$T_j$ mentioned in a query $Q$, as $h_{i,j}$ and it is calculated as follows:
\begin{equation}\label{eqn:uncertainty}
h_{i,j} = 
\begin{cases} 
      -p_{l,i} \cdot log(p_{l,i})-(1-p_{l,i}) \cdot log(1-p_{l,i}) & if \ \ 0\leq p_{l,i}\leq 1 \\
      0 & if \ \  p_{l,i}=\{0,1\} 
\end{cases}
\end{equation}
where $p_{l,i}$ is the tag probability of object $o_l$ with respect to tag $t_i$. We maintain an uncertainty value for each tags mentioned in $Q$. The set of uncertainty values of all the objects, with respect to a particular tag $t_i$, is defined as an uncertainty vector and it is denoted as follows:
$H_i$ = $[h_{1,i},h_{2,i}, ... h_{N,i}]$.
\end{comment}

%%%%%%%%%%%%%%%%%%%%%%%%%%%%%%%%%%%%%
%%%%%%%%%%%%% Pseudo code %%%%%%%%%%%
%%%%%%%%%%%%%%%%%%%%%%%%%%%%%%%%%%%%
\begin{algorithm}[t]
\begin{algorithmic}[1]
\footnotesize
\Procedure{PIQUE}{$O$, $epoch$, $exp$, $\{F_1, F_2, \dots \}$}
\State $\mathsf{A} \gets \varnothing$
%\State \Call{Initialize-Data-Structure}{$S$, $P$, $H$} // Initialize the state, predicate probability and uncertainty data structures
\State $S$, $P$, $H$ $\gets$ \Call{Initialize-Data-Structure}
{} // Initialize the state, predicate probability and uncertainty data structures 
%\ry{Maybe it's better to express it this way to differentiate when the parameters are just parameters or also outputs} \dg{Yes this looks better}
 \While{ \Call{Is-Not-Fully-Tagged}{$O$}} \label{alg:stopping criteria}
 		\State $curEpochTime \gets 0$
 		\State $<\mathsf{EP}_w, t_0>$ $\gets$ \Call{Generate-Plan}{$S$, $P$, $H$} // $t_0$: plan generation time
        \While{$curEpochTime < (epoch - t_0)$} \label{epoch}
          \State $ t_1 \gets time() $
          \State $(o_k,R^i_l,f^i_m) \gets  \mathsf{EP}_w.pop()$
          \State \Call{Execute-Triple}{$o_k,R^i_l,f^i_m$} 
          %\State \Call{Update-Data-structure}{$S$, $P$, $H$} 
          \State $S$, $P$, $H$ $\gets$\Call{Update-Data-structure}{$S$, $P$, $H$} %\ry{same here} %Update the data structures S, P and H       	 
          \State $t_2 \gets time() $
          \State $curEpochTime \gets curEpochTime + (t_2 - t_1) $ 
         \EndWhile\label{epoch}
         \State $ \mathsf{A}$ $\gets$ \Call{Select-Answer-Set}{$S$, $P$, $H$} // $\mathsf{A}$ can be consumed by the application
         
  \EndWhile\label{alg:stopping criteria}
\State \Return $ \mathsf{A}$    
\EndProcedure
\captionof{figure}{Progressive Approach}\label{fig:progressive algorithm}
\end{algorithmic}
\end{algorithm}
%%%%%%%%%%%%%%%%%%%%%%%%%%%%%%%%%%%%%%%%%%%%%%%%%%%%%%%%%%%%%%

%%%%%%%%%%%%%%%%%%%%%%%%%%%%%%%%%%%%%%%%
%% Plan Generation Phase algorithm %%%%

%%%%%%%%%%%%%%%%%%%%%%%%%%%%%%%%%
\begin{comment}

\begin{algorithm}
\begin{algorithmic}[1]
\small
\Procedure{Generate-Plan(S, P, H)}{}
    \State candidateSet $\gets$ \Call{Choose-Candidate-Set}{S,P,H}
	\State tripleList $\gets$ 
    \Call{Generate-Triple}{candidateSet}
    \State benefitList $\gets$ 
    \Call{Estimate-Benefit}{tripleList}
	\State plan $\gets$ 
    \Call{Order-Triple}{tripleList,benefitList}
    \State \Return plan  
\EndProcedure
\captionof{figure}{Plan Generation}\label{fig:PlanGenPhase}
\end{algorithmic}
\end{algorithm}

\end{comment}

Given the three inputs above, the process of generating a plan in PIQUE can be viewed as consisting of the following four steps 
(described in details in Section~\ref{sect:plan generation}): 

\begin{enumerate}[leftmargin=*]
\item \textbf{Candidate Set Selection.} Chooses a subset of objects from $O$ which will be considered for the selection of a plan. We denote the candidate set of epoch $w$ as $\mathsf{CS}_w$. This step aims at reducing the complexity of the plan generation phase by considering a lower number of objects for selection.

%and generating less number of pairs for comparing between them.

%number of comparisons among objects, due to time constraint associated with every epoch.  

% \vspace{0.2cm}
% \noindent  
% \textbf{Pair Generation.} 

\item \textbf{Generation of Triples.} Generates triples for the objects present in $\mathsf{CS}_w$. For each object $o_k \in \mathsf{CS}_w$, it determines an enrichment function for $o_k$ for each predicate in the query $Q$. Since $Q$ consists of $|\mathds{R}|$ predicates, this step will generate $|\mathds{R}|$ triples for each object $o_k$ $\in$ $\mathsf{CS}_w$. The set of those $ |\mathds{R}| \cdot |\mathsf{CS}_w| $ triples is denoted as $\mathsf{TS}_{w}$. 

\begin{comment}
We note that, given an object, and a set of functions, the optimal set of functions to execute, has previously been studied in the context of dynamic ensemble selection \cite{DynamicClassifierSurvey,META-DES,META-DES.Oracle} and ensemble pruning problems \cite{EnsemblePruning,EnsemblePruning2} in machine learning literature. 
\yat{This part is not clear to me.. these work determine the best function to run next? if so then what is our contribution in this step?} \dg{I have updated this part. Please let me know about your comments.} 
In PIQUE, we use a technique similar to the one used in dynamic ensemble selection in META-DES framework \cite{META-DES} for choosing a function for an object w.r.t. a predicate. 
\end{comment}

\item \textbf{Benefit Estimation of Triples.} Estimates the benefit of each triple in $\mathsf{TS}_{w}$. As our quality metric is the $F_{\alpha}$-measure, we define a metric called the {\em expected} $F_\alpha$-measure to estimate the quality of the answer set at each epoch. Based on this, we define a triple {\em Benefit} as the improvement in the expected $F_\alpha$ value of the answer set (that will be result from executing that triple) per unit cost.  

%Then we define the benefit of a pair as the amount of improvement of $E(F_1)$ measure that can be brought per unit of cost, if we execute this pair. 
  
%   the $benefit$ of a $pair$ by measuring improvement of \emph{$F_1$} measure of the answer set caused by the pair. For this purpose, we first estimate the \emph{$F_1$- measure} of the answer set in an expected sense and then we determine the improvement in \emph{$F_1$}-measure that can be caused by executing this particular pair.
  
%   First, we define a metric for estimating quality(expected $F_1$-measure) of the answer set in an expected sense. Then we define a \emph{benefit} metric of a pair using the defined quality metric. The \emph{benefit} is defined
 
% \vspace{0.2cm}
% \noindent 
% \textbf{Plan Selection.} 

\item \textbf{Selection of Triples.} Compares between the benefit values of the triples and generates a plan that consists of the triples with the highest benefit values. The triples in the plan will be sorted in a decreasing order according to that value. We denote the plan selected in epoch $w$ as $\mathsf{EP}_w$.

%Those triples are sorted in a decreasing order based on their benefit values. 

%We create $PQ$ from scratch in the first epoch and in the later epochs, we efficiently update it. For this purpose, we maintain a priority queue \emph{PQ} of the triples present in $\mathsf{Triples}_{i}$ and order them based on their benefit values. From PQ, top $k$ triples with highest benefit values are chosen and considered as a plan. The value of $k$ is different among different epochs and depends on the amount of time left in the epoch. 
 
\end{enumerate}

%\noindent We will describe these four steps in details in Section~\ref{sect:plan generation}.
 
% \vspace{0.2cm}
% \noindent
% \textbf{Plan Execution Phase.}  
\subsection{Plan Execution Phase} This phase iterates over the list of triples in $\mathsf{EP}_w$ and, for each triple ($o_k$, $R^i_l$, $f^i_m$), executes it by calling the function $f^i_m$ with these two parameters: the object $o_k$ and the tag $t_l$ present in the predicate $R^i_l$. Triples are executed until the allotted time for the epoch is consumed. After the plan is executed, PIQUE updates the state, predicate probability, and uncertainty values for the objects involved in the executed triples of $\mathsf{EP}_w$.

\subsection{Answer Set Selection Phase}
\label{Answercomputealgorithm}
At the end of each epoch $w$, PIQUE determines an answer set $\mathsf{A}_w$ from $O$ based on the output of the plan execution phase. Our strategy of constructing $\mathsf{A}_w$ chooses a subset from $O$ which optimizes the quality of the answer set. Since the ground truth of the objects in $O$ is not known, we use the expression satisfiability probability values of the objects to measure the quality metrics (i.e., the $F_\alpha$-measure) in an expected sense. Given an answer set $\mathsf{A}_w$, the expected quality of the answer set $\mathsf{A}_w$, denoted as $E(F_\alpha(\mathsf{A}_w))$, is defined as follows: 
\begin{equation}\label{eqn:expected_f1_measure}
E(F_\alpha(\mathsf{A}_w)) =\frac{(1+\alpha)\cdot \sum\limits_{o_i \in \mathsf{A}_w}\mathcal P_i}{\alpha \cdot \sum\limits_{o_j \in O} \mathcal P_j+ |\mathsf{A}_w|}
\vspace{-0.6cm}
\end{equation}
where
\begin{equation}
\begin{split}
\vspace{-1cm}
E(Pre(\mathsf{A}_w)) =  \frac{\sum\limits_{o_i \in \mathsf{A}_w} \mathcal P_i}{|\mathsf{A}_w|} \ \text{and} \    E(Rec(\mathsf{A}_w))= \frac{\sum\limits_{o_i \in \mathsf{A}_w} \mathcal P_i}{\sum\limits_{o_j \in O} \mathcal P_j}
\end{split}
\end{equation}

Our strategy sets $\mathsf{A}_w$ to the subset of $O$ that has the highest expected $F_\alpha$ value among all possible subsets of $O$. Our strategy is based on the following observation.

%Our strategy of choosing $\mathsf{A}_w$ is based on sorting the objects in $O$ in a descending order of their joint probability values and selecting those objects with joint probability value {\em greater} than a particular {\em threshold} probability value denoted as $\mathcal P^w_\tau$. 

\begin{theorem}\label{theorem:threshold}
Let $L$ be the list of all objects of $O$ sorted in a decreasing order of their ESP values and Let $L^k$ be the list that consists of the first $k$ objects in $L$. The expected $F_\alpha$ values of the possible subsets of $L$ follow this pattern: 
$E(F_\alpha(L^{1})) \, < \, E(F_\alpha(L^{2})) \, < \, \dots \, < \, E(F_\alpha(L^{\tau - 1})) \, > \,  E(F_\alpha(L^{\tau})) \,  > \, E(F_\alpha(L^{\tau + 1})) \, > \, \dots \, > \, E(F_\alpha(L))$.
\end{theorem}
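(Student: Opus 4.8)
The plan is to treat $E(F_\alpha(L^k))$ purely as a one-dimensional sequence indexed by the prefix length $k$ and to show that this sequence is \emph{unimodal} --- strictly increasing up to a single peak and strictly decreasing thereafter. First I would introduce two abbreviations: let $S = \sum_{o_j \in O}\mathcal P_j$, which is a constant independent of $k$, and let $P_k = \sum_{i=1}^{k}\mathcal P_i$ denote the cumulative ESP mass of the top $k$ objects in $L$. Substituting $\mathsf{A}_w = L^k$ into Equation~(\ref{eqn:expected_f1_measure}) reduces the whole object to the simple rational form $E(F_\alpha(L^k)) = (1+\alpha)\,P_k / (\alpha S + k)$, so the claim becomes a statement about this scalar sequence.

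Next I would reduce the comparison of two consecutive terms to a single sign condition. Computing the difference $E(F_\alpha(L^{k+1})) - E(F_\alpha(L^{k}))$ over the common denominator $(\alpha S + k + 1)(\alpha S + k)$ and using $P_{k+1} = P_k + \mathcal P_{k+1}$, the numerator collapses (after cancellation) to $\mathcal P_{k+1}(\alpha S + k) - P_k$, multiplied by the positive factor $(1+\alpha)$. Writing $D(k) := \mathcal P_{k+1}(\alpha S + k) - P_k$, the sequence therefore \emph{increases} at step $k$ precisely when $D(k) > 0$ and \emph{decreases} when $D(k) < 0$. The entire pattern claimed in the theorem is thus equivalent to showing that $D(k)$ is positive for small $k$ and negative for large $k$, with exactly one sign change.

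The key step --- and the part I expect to carry the whole argument --- is to show that $D$ itself is monotone, a kind of discrete second-derivative argument. A direct computation gives the clean identity $D(k+1) - D(k) = (\alpha S + k + 1)\,(\mathcal P_{k+2} - \mathcal P_{k+1})$. Because $L$ is sorted in decreasing order of ESP values, $\mathcal P_{k+2} \le \mathcal P_{k+1}$, so $D(k+1) \le D(k)$; under distinct ESP values this is strict. Hence $D$ is (strictly) non-increasing, which forces it to cross zero at most once and only from positive to negative. This immediately yields the sign pattern $D(1),\dots > 0$ followed by $D(\cdot),\dots < 0$, and therefore the unimodal chain $E(F_\alpha(L^{1})) < \dots < E(F_\alpha(L^{\tau-1})) > E(F_\alpha(L^{\tau})) > \dots > E(F_\alpha(L))$, with $\tau$ the unique peak index.

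The main obstacle is conceptual rather than computational: one must rule out multiple interior local maxima, and this is exactly what the monotonicity of $D$ settles. Two points of care remain. Strictness in the stated chain requires distinct ESP values (or a fixed tie-breaking convention) in order to upgrade the $\le$ above to a strict inequality. Finally, to guarantee that the peak $\tau$ is genuinely interior --- so that the decreasing branch is nonempty --- I would observe that appending the lowest-ESP objects eventually drives $D$ negative, since $\mathcal P_{k+1}$ shrinks toward the smallest values while $P_k$ keeps accumulating; this ensures the existence of $\tau$ and completes the proof.
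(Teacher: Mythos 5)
Your proof is correct in its core and takes essentially the same route as the paper: both arguments reduce the comparison of consecutive prefixes to the sign of $D(k)=\mathcal P_{k+1}(\alpha S+k)-P_k$ and then use the sorted order $\mathcal P_{k+2}\le \mathcal P_{k+1}$ to show that once this quantity turns negative it stays negative. Your identity $D(k+1)-D(k)=(\alpha S+k+1)\,(\mathcal P_{k+2}-\mathcal P_{k+1})$ is a cleaner, more explicit packaging of exactly what the paper proves by induction (``if the expected $F_\alpha$ decreases for the first time, it keeps decreasing''), and it has the added virtue of making the at-most-one-sign-change structure immediate rather than implicit.

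Two small corrections to your side remarks. First, your closing claim that appending the lowest-ESP objects ``eventually drives $D$ negative'' is false in general: if, say, every $\mathcal P_i=1$, then $D(k)=\alpha S>0$ for all $k$, the sequence $E(F_\alpha(L^k))=(1+\alpha)k/(\alpha S+k)$ increases all the way to $L^{|O|}$, and the decreasing branch is empty. This degenerate possibility is glossed over by the theorem statement and by the paper's own proof (which only establishes propagation of the decrease), so it does not put you behind the paper, but the existence argument as you state it does not go through. Second, strictness of the chain is not secured by distinct ESP values; what is needed is $D(k)\neq 0$ for every $k$ (distinctness is neither necessary nor sufficient for this), since ties among the $\mathcal P_i$ only make $D$ locally constant while each strict inequality in the chain is governed by the sign of the corresponding $D(k)$ alone.
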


\begin{proof}
% short proof.
In this proof, we show that if $E(F_\alpha)$ measure of the answer set decreases for the first time, due to the inclusion of a particular object in the answer set, then it will keep decreasing monotonically with the inclusion of any further objects. Based on the notations used in the above theorem, we can rewrite the expressions of $E(F_\alpha(L^{\tau}))$, $E(F_\alpha(L^{\tau+1}))$ and $E(F_\alpha(L^{\tau+2}))$ as follows:

\begin{equation}\label{eqn:f1_expression_3}
\vspace{-0.2cm}
\begin{split}
\vspace{-0.4cm}
&E(F_\alpha(L^{\tau})) =\frac{(1+\alpha).\frac{k_1}{\tau}.\frac{k_1}{k_2}}{\alpha \cdot\frac{k_1}{\tau}+\frac{k_1}{k_2}}
\\
&= \frac{(1+\alpha)\cdot k_1}{\alpha \cdot k_2+\tau},
 k_1 = \sum\limits_{i=1}^\tau \mathcal P_i,  
 k_2 = \sum\limits_{i=1}^{|O|} \mathcal P_i 
\end{split}
\end{equation}

Similarly, the value of $E(F_\alpha(L^{\tau+1})) = \frac{(1+\alpha)(k_1+\mathcal P_{\tau+1})}{(\alpha k_2 + \tau + 1)}$ and the value of $E(F_\alpha(L^{\tau+2}))= \frac{(1+\alpha)(k_1+  \mathcal P_{\tau+1} +\mathcal P_{\tau+2})}{(\alpha k_2+\tau+2)}$.

%Similarly the values of $F_{\tau+1}$ and $F_{\tau+2}$ will be as follows:
%\begin{equation}\label{f_measure_tau}
%\vspace{-0.3cm}
%\begin{split}
%E(F_\alpha(L^{\tau+1})) =\frac{(1+\alpha)(k_1+\mathcal P_{\tau+1})}{(\alpha k_2 + \tau + 1)}, F_{\tau+2}=\frac{\splitfrac{(1+\alpha)(k_1+ }{ \mathcal P_{\tau+1} +\mathcal P_{\tau+2})}}{(\alpha k_2+\tau+2)}
%\end{split}
%\end{equation}

\begin{equation}
\begin{split}
&E(F_\alpha(L^{\tau+1})) < E(F_\alpha(L^{\tau}))\\ 
&\Rightarrow \frac{(1+\alpha)\cdot(k_1+\mathcal P_{\tau+1})}{(\alpha  k_2 + \tau + 1)} < \frac{(1+\alpha)\cdot k_1}{\alpha  k_2+\tau}  \\
&\Rightarrow (k_1+\mathcal P_{\tau+1})(\alpha k_2+\tau) <  k_1(\alpha k_2+\tau + 1) \\
&\Rightarrow \alpha k_1 k_2+k_1 \tau+\alpha k_2\mathcal P_{\tau+1}+\tau\mathcal P_{\tau+1} < \alpha k_1k_2+k_1\tau+k_1
\end{split}
\end{equation}

Simplifying some more steps, we get the following condition: $\frac{(k_1+\mathcal P_{\tau+1}+\mathcal P_{\tau+2})}{(\alpha k_2+ \tau+2)} < \frac{(k_1+\mathcal P_{\tau+1})}{\alpha k_2+\tau +1}$. From this condition we can conclude that $E(F_\alpha(L^{\tau+2})) < E(F_\alpha(L^{\tau+1})) $.
%\vspace{-0.2cm}
\end{proof}

Based on the above observation, we set $\mathsf{A}_w$ to $L^{\tau - 1}$ and we refer to the ESP of the ${\tau}^{th}$ object as the {\em threshold} probability of epoch $w$ and denote it as $\mathcal P^\tau_w$. In the following we provide an example of determining threshold.

\begin{example}
Let $L = \{o_1, o_4, o_5, o_2, o_3\}$ such that $\mathcal P_1 = 0.9 > \mathcal P_4 = 0.8 > \mathcal P_5 = 0.75 > \mathcal P_2 = 0.3 > \mathcal P_3 = 0.2 $. In this case, $E(F_\alpha(L^1)) =0.46 < E(F_\alpha(L^2)) =0.68 < E(F_\alpha(L^3)) =0.82 > E(F_\alpha(L^4)) =0.79 > E(F_\alpha(L^5)) = 0.74$. Hence, $\mathsf{A}_w = L^3 = \{o_1, o_4, o_5\}$ and $\mathcal P^\tau_w = \mathcal P_5 = 0.75$. 
\end{example}

\section{Plan Generation}
%\section{Progressive Approach}
\label{sect:plan generation}

In this section, we explain in details the four steps of the plan generation phase.

\subsection{Candidate Set Selection}\label{section:candidate_set_selection}
The objective of this step is to choose a subset of objects $\mathsf{CS}_w$ from $O$ which will be considered for the generation of the plan $\mathsf{EP}_w$. Our strategy of choosing $\mathsf{CS}_w$ is based on the observation that evaluating a triple corresponding to an object $o_k \not \in \mathsf{A}_{w-1}$ in epoch $w$ ensures that $E(F_{\alpha}(\mathsf{A}_{w}))$ increases monotonically. That is, $E(F_\alpha (\mathsf{A}_{w})) >= E(F_\alpha (\mathsf{A}_{w-1}))$ irrespective of the outcome of executing that triple.
In contrast, evaluating a triple corresponding to an object $o_k \in A_w$ could either increase or decrease  $E(F_{\alpha}(\mathsf{A}_{w}))$  depending on the outcome of the triple execution. Let us illustrate this observation using the following example.

\begin{example}
Let $O = \{o_1, o_4, o_5, o_2, o_3\}$ be a dataset such that $\mathcal P_1 = 0.9, \mathcal P_4 = 0.8, \mathcal P_5 = 0.75, \mathcal P_2 = 0.3, \mathcal P_3 = 0.2 $. In this case, the answer set $\mathsf{A}_{w-1} = \{o_1, o_4, o_5\}$,  $\mathcal P^{\tau}_{w-1}$ = 0.75  and  $E(F_\alpha(A_{w-1})) =0.82$. 
Evaluating a triple for either objects $o_2$ or $o_3$  will cause $E(F_{\alpha}(\mathsf{A}_{w}))$ to increase or remain the same as $E(F_{\alpha}(\mathsf{A}_{w-1}))$. 
For instance, suppose the execution of a triple, corresponding to $o_2$, in epoch $w$ causes the probability value $\mathcal P_2$ to increase from $0.3$ to $0.78$ (which is $ > \mathcal P^{\tau}_{w-1} = 0.75$). Then, $E(F_\alpha(\mathsf{A}_{w}))$ will increase from $0.82$ to $0.87$. Even if $\mathcal P_2$ increases from $0.3$ to $0.5$ (which is $<\mathcal P^{\tau}_{w-1}$), $E(F_\alpha(\mathsf{A}_{w}))$ will increase from $0.82$ to $0.83$. If $\mathcal P_2$ decreases from $0.3$ to $0.1$,   $E(F_\alpha(\mathsf{A}_{w}))$ will remain the same as $E(F_\alpha(\mathsf{A}_{w-1}))$.

%Even if the increase of  $\mathcal P_2$ from 0.3 to 0.5  is below $\mathcal P^{\tau}_{w-1}$ (i.e., 0.75), it results in $E(F_\alpha(A_{w}))$ to increases to 0.83. 

 Now, consider object $o_4 \in \mathsf{A}_{w-1}$. If $\mathcal P_4$ decreases from $0.8$ to $0.76$ (which is $ > \mathcal P^{\tau}_{w-1}$), then $E(F_\alpha(\mathsf{A}_{w}))$ will decrease to $0.81$. Likewise, if $\mathcal P_4$ decreases to $0.7$ (which is $< \mathcal P^{\tau}_{w-1}$), $E(F_\alpha(\mathsf{A}_{w}))$ will decrease to $0.8$. $E(F_\alpha(\mathsf{A}_{w}))$ will only increase if $\mathcal P_4$ increases as a result of executing the triple; e.g.,  $E(F_\alpha(\mathsf{A}_{w}))$ will increase to $0.83$ if $\mathcal P_4$ increases to $0.85$.

%\begin{theorem}\label{theorem:thresholdFunctionInsideObjectAll2}
%If $\mathsf{CS}_w$ =  \{$o_k$ $\vert$ $o_k$ $\in$ $O$\ -\ $ \mathsf{A}_{w-1}$\}, then the quality of the answer set in epoch $w$, i.e., $E(F_\alpha(A_{w}))$ will always increase $w.r.t.$ $E(F_\alpha(A_{w-1}))$. If $\mathsf{CS}_w$ =  \{$o_k$ $\vert$ $o_k$ $\in$ $O$\ \}, then $E(F_\alpha(A_{w}))$ can increase, decrease, or remain the same $w.r.t.$ $E(F_\alpha(A_{w-1}))$.
%\end{theorem}
 %\sm{ explain example}.

\end{example}

%\begin{example}
%Let us consider the same example dataset as the previous example. In this case, $E(F_\alpha(A_{w-1})) =0.82$ and $\mathcal P^{\tau}_{w-1}$ = 0.75. Let us consider the object $o_2 \not\in \mathsf{A}_{w-1}$. Suppose in epoch $w$, the joint probability $\mathcal P_2$ of $o_2$ is increased from 0.3 to 0.78. Then $E(F_\alpha(A_{w}))$ increases to from the previous value of 0.82 to 0.87. If the joint probability $\mathcal P_2$ is increased from 0.3 to 0.5, then $E(F_\alpha(A_{w}))$ increases to 0.83. If the joint probability $\mathcal P_2$ is decreased from 0.3 to 0.25 then $E(F_\alpha(A_{w}))$ remains the same as to 0.82.
%\end{example}

%prev 
The intuition of the above example is captured in the following theorem that guides our choice of candidate set selection.

\begin{theorem}\label{theorem:thresholdFunctionInsideObjectAll}
 Let $(o_k, R^i_l, f^i_m)$ be a triple evaluated in epoch $w$ and let $\mathcal P_k$ and $\mathcal P_k'$ be the ESP values of $o_k$ at the end of  epoch $w-1$ and $w$ respectively. Then, the following conditions hold for $E(F_\alpha(\mathsf{A}_{w}))$:
\begin{enumerate}[leftmargin=*]
     \item If $\mathcal P_k >= \mathcal P^{\tau}_{w-1}$, then $E(F_\alpha(\mathsf{A}_{w})) >= E(F_\alpha(\mathsf{A}_{w-1}))$  iff $\mathcal P_k' > \mathcal P_k$.
     \item If $\mathcal P_k < \mathcal P^{\tau}_{w-1}$, then $E(F_\alpha(\mathsf{A}_{w})) >= E(F_\alpha(\mathsf{A}_{w-1}))$  irrespective of the values of $\mathcal P_k$ and $\mathcal P_k'$.
\end{enumerate}
\end{theorem}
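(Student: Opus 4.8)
The plan is to compare $E(F_\alpha(\mathsf{A}_w))$ against $E(F_\alpha(\mathsf{A}_{w-1}))$ by exploiting the optimality of the answer set established in Theorem~\ref{theorem:threshold}. Since only the single term $\mathcal P_k$ changes between epoch $w-1$ and epoch $w$, I would, exactly as in the proof of Theorem~\ref{theorem:threshold}, treat the recall normalizer $k_2 = \sum_{o_j \in O}\mathcal P_j$ as a fixed quantity and write the expected $F_\alpha$ of any candidate set $S$ as $\frac{(1+\alpha)\sum_{o_i \in S}\mathcal P_i}{\alpha k_2 + |S|}$. By Theorem~\ref{theorem:threshold}, $\mathsf{A}_w$ is the set maximizing this quantity under the updated probabilities (for a fixed size the top-probability objects maximize the numerator, and over sizes the value is unimodal), so $E(F_\alpha(\mathsf{A}_w)) \ge E(F_\alpha(S))$ for every candidate set $S$ evaluated with the new probabilities; I would use $S = \mathsf{A}_{w-1}$ as the comparison set. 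The whole argument then reduces to tracking how the numerator and denominator of this single fixed set move when $\mathcal P_k$ is replaced by $\mathcal P_k'$.

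For the second statement ($\mathcal P_k < \mathcal P^{\tau}_{w-1}$, i.e.\ $o_k \notin \mathsf{A}_{w-1}$) the argument is immediate. Because $o_k$ does not belong to $\mathsf{A}_{w-1}$, the numerator $\sum_{o_i \in \mathsf{A}_{w-1}}\mathcal P_i$ contains no $\mathcal P_k$ term and the denominator $\alpha k_2 + |\mathsf{A}_{w-1}|$ is fixed, so the value of the fixed set $\mathsf{A}_{w-1}$ is unchanged by the update. Combining this with the optimality of $\mathsf{A}_w$ gives $E(F_\alpha(\mathsf{A}_w)) \ge E(F_\alpha(\mathsf{A}_{w-1}))$ regardless of whether $\mathcal P_k'$ is larger or smaller than $\mathcal P_k$, which is exactly the claim.

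For the forward direction of the first statement (if $\mathcal P_k' > \mathcal P_k$ then the measure does not drop), I would again evaluate the fixed set $\mathsf{A}_{w-1}$ under the new probabilities. Since now $o_k \in \mathsf{A}_{w-1}$, its numerator increases by $\mathcal P_k' - \mathcal P_k > 0$ while its denominator is unchanged, so the value of $\mathsf{A}_{w-1}$ strictly increases; optimality of $\mathsf{A}_w$ then yields $E(F_\alpha(\mathsf{A}_w)) \ge E(F_\alpha(\mathsf{A}_{w-1}))$. Note that $\mathsf{A}_{w-1}$ remains a valid top-prefix after the update, since $\mathcal P_k$ only moves up in the sorted order.

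The reverse direction of the first statement is the main obstacle, because it asks for an \emph{upper} bound on $E(F_\alpha(\mathsf{A}_w))$ --- a statement about the best of \emph{all} candidate sets --- rather than a lower bound obtained from a single comparison set. Here I would use the following monotonicity observation: replacing $\mathcal P_k$ by $\mathcal P_k' < \mathcal P_k$ strictly lowers the value of every candidate set that contains $o_k$ (its numerator drops while its fixed denominator stays put) and leaves the value of every candidate set that omits $o_k$ completely unchanged. At epoch $w-1$ the maximizer $\mathsf{A}_{w-1}$ contained $o_k$ and, being optimal, dominated every other candidate set; after the decrease, the sets omitting $o_k$ retain their old (already dominated) values and the sets containing $o_k$ have strictly smaller values, so no candidate set can exceed $E(F_\alpha(\mathsf{A}_{w-1}))$. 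Hence $E(F_\alpha(\mathsf{A}_w)) < E(F_\alpha(\mathsf{A}_{w-1}))$, completing the ``only if'' part. It is worth flagging that this upper-bound step is exactly where holding $k_2$ fixed is essential: if the recall normalizer were allowed to shrink together with $\mathcal P_k$, the sets omitting $o_k$ would \emph{gain} value and the bound would fail. The borderline case $\mathcal P_k' = \mathcal P_k$ is excluded, since executing a fresh enrichment function on $o_k$ updates its predicate probability.
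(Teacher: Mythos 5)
Your route is genuinely different from the paper's: the paper proves Theorem~\ref{theorem:thresholdFunctionInsideObjectAll} by direct bookkeeping on the fraction in Equation~\ref{eqn:expected_f1_measure}, expanded in Appendix~\ref{appendix:proofs} into a case analysis (Lemmas~\ref{theorem:lemmaInsideObject1}--\ref{theorem:lemmaOutsideObject3}) over whether $o_k$ is inside or outside $\mathsf{A}_{w-1}$, whether its ESP rises or falls, and whether it crosses the threshold, tracking in each case how the threshold and the answer set move. You instead invoke the optimality of $\mathsf{A}_w$ (from Theorem~\ref{theorem:threshold}) against the single comparison set $\mathsf{A}_{w-1}$, plus a global monotonicity argument for the upper-bound direction. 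That would be a cleaner proof --- but it rests entirely on your decision to freeze the recall normalizer $k_2=\sum_{o_j\in O}\mathcal P_j$ at its epoch-$(w-1)$ value, and that is not the metric the theorem is about. Equation~\ref{eqn:expected_f1_measure} evaluates the denominator $\alpha\cdot\sum_{o_j\in O}\mathcal P_j + |\mathsf{A}_w|$ at the \emph{current} ESP values, and the paper's own proof manipulates exactly this moving denominator: in the inside case the numerator gains $(1+\alpha)\Delta$ while the denominator gains $\alpha\Delta$, and in the outside case the denominator explicitly grows by $\alpha(\mathcal P_k'-\mathcal P_k)$. Your justification --- ``exactly as in the proof of Theorem~\ref{theorem:threshold}'' --- does not transfer: there, no ESP value changes because the candidate sets range over one fixed probability assignment; here the change in $\mathcal P_k$ is the entire content of the statement, and $\mathcal P_k$ is a summand of $k_2$. (The paper's appendix is itself not fully consistent on this point, e.g.\ Lemma~\ref{theorem:lemmaOutsideObject3} quietly holds the normalizer fixed, but the main-text proof and the denominators carrying $\alpha\cdot\Delta\mathcal P$ terms in Theorem~\ref{theorem:SelectionCriteriaWithThresholdChange} leave no doubt about the intended semantics.)

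Concretely, under the paper's metric your argument for part~2 collapses in the case $\mathcal P_k'>\mathcal P_k$: since $o_k\notin\mathsf{A}_{w-1}$, the comparison set's numerator is unchanged but its denominator grows by $\alpha(\mathcal P_k'-\mathcal P_k)$, so its value under the new probabilities is \emph{strictly below} $E(F_\alpha(\mathsf{A}_{w-1}))$; optimality of $\mathsf{A}_w$ then lower-bounds the new optimum by a quantity that already sits under the target, and the chain of inequalities never reaches $E(F_\alpha(\mathsf{A}_{w-1}))$. The ``only if'' half of part~1 breaks for the mirror-image reason: when $\mathcal P_k$ drops, every set omitting $o_k$ \emph{gains} value under the moving normalizer, so your domination argument no longer caps the new optimum --- which is precisely the failure mode you flag in your final sentences. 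Flagging the dependence is not the same as discharging it. Only the forward half of part~1 survives the correction (there the comparison set's numerator gains $(1+\alpha)\Delta$ against the denominator's $\alpha\Delta$, so its value still rises and optimality finishes the job). As written, your proof establishes the theorem for a modified quality measure with a stale normalizer, not for the one defined in Equation~\ref{eqn:expected_f1_measure}; repairing it requires letting $k_2$ move and doing the numerator-versus-denominator and threshold-movement bookkeeping that the paper's lemmas carry out --- exactly the case analysis your approach was designed to avoid.
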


\begin{proof}
From Equation \ref{eqn:expected_f1_measure}, it can be seen that if the ESP value of any object inside the answer set increases, then both the numerator and denominator increases by the same amount (an amount which is less than 1). This implies that the value of the fraction $E(F_{\alpha}(\mathsf{A}_w))$ will increase. Similarly, if the ESP value decreases by any amount less than 1 from both the numerator and denominator, then the value of $E(F_{\alpha}(\mathsf{A}_w))$ will decrease. Hence, it proves the first part of the theorem. 

If the ESP value of an object outside of the answer set increases and becomes higher than $\mathcal P^{\tau}_{w-1}$, then the numerator of $E(F_{\alpha}(\mathsf{A}_w))$ increases by the amount of $\mathcal P_k'$. The denominator of $E(F_{\alpha}(\mathsf{A}_w))$ also increases (the term $\alpha \cdot \sum\limits_{o_j \in O} \mathcal P_j$) but it increases by a smaller amount (i.e., $\mathcal P_k'-\mathcal P_k$) as compared to the numerator. This implies that $E(F_{\alpha}(\mathsf{A}_w))$ will be higher than $E(F_{\alpha}(\mathsf{A}_{w-1}))$ proving the second part of the theorem. The detailed proof of this theorem is shown in Appendix \ref{appendix:proofs}.
%\dg{need to add a proof sketch}
\vspace{-0.3cm}
\end{proof}

Based on the above theorem, our approach considers $\mathsf{CS}_w$ to be the set \{$o_k$ $\vert$ $o_k$ $\in$ $O$\ -\ $ \mathsf{A}_{w-1}$\} to guarantee progressive increase in $E(F_\alpha(\mathsf{A}_{w}))$. Limiting the candidate set also reduces the complexity since we no longer need to consider triples corresponding to objects from $\mathsf{A}_{w-1}$.

\begin{table*}[t]\label{DecisionTable}
\footnotesize
\centering
\begin{center}
 \begin{tabular}{|c|c|c|c|c|c|} 
 \hline
 \textbf{Tag Type} & \textbf{Tag} & \textbf{State} & \textbf{Uncertainty Ranges} & \textbf{Next Function} & \textbf{$\Delta$ Uncertainty} \\
 \hline
 $T_i$ & $t_l$ &[0, 0, 0, 1] & [0 - 0.1), [0.1-0.2), $\cdots$, [0.9-1]  & $f^{i}_{2}$, $f^{i}_{1}$, $\cdots$, $f^{i}_{3}$ & -0.04, -0.12, $\cdots$, -0.22 \\ 
 \hline
 $T_i$ & $t_l$ & [0, 0, 1, 1] & [0 - 0.1), [0.1-0.2), $\cdots$, [0.9-1] & $f^{i}_{1}$, $f^{i}_{1}$, $\cdots$, $f^{i}_{2}$ & -0.02, -0.11, $\cdots$, -0.28  \\
 \hline
  $\cdots$ & $\cdots$ & $\cdots$ & $\cdots$ & $\cdots$ & $\cdots$ \\
 \hline
 $T_i$ & $t_l$ & [1, 1, 1, 0] & [0 - 0.1), [0.1-0.2), $\cdots$, [0.9-1] & $f^{i}_{4}$, $f^{i}_{4}$, $\cdots$, $f^{i}_{4}$ & -0.03, -0.15, $\cdots$, -0.18 \\
 
 \hline
\end{tabular}
%\vspace{-1em}
\vspace{-.2cm}

\caption{An example decision table. State [0,0,0,1] is a state in which only the  function $f^i_4$ has been applied on the object.}\label{DecisionTable}
\vspace{-0.8cm}
\label{table:1}
\end{center}
\end{table*}

\subsection{Generation of Triples}
\label{triple generation}

The objective of this step is to generate a set of (object, predicate, function) triples, denoted as $\mathsf{TS}_w$, from the objects present in $\mathsf{CS}_w$. For each object $o_k$ $\in$ $\mathsf{CS}_w$, this step identifies which function should be executed next on the object {\em w.r.t.} each predicate in the query $Q$. Since $Q$ consists of $|\mathds{R}|$ predicates, this step will generate $|\mathds{R}|$ triples for $o_k$.

%\vspace{0.2cm}
%\noindent
%{\bf Function Selection.} 
The enrichment functions included in the triples are determined using a decision table the structure of which  is shown in Table \ref{DecisionTable}. (In Section \ref{sect:experimental setup}, we discuss how this table is constructed.) For each pair of a tag type $T_i$ and a tag $t_j$ $\in$ $T_i$, this table stores, for each possible state value, a list of $m$ uncertainty ranges, a list of $m$ enrichment functions and a list of $m$ $\Delta$ uncertainty values.

Given an object $o_k$ $\in$ $\mathsf{CS}_w$ and a predicate $R^i_l$, the function that should be executed on $o_k$ {\em w.r.t.} $R^i_l$ is determined using the state value $s(o_k, R^{i}_{l})$ and the uncertainty value $h(o_{k},R^{i}_{l})$ as follows. If $h(o_{k},R^{i}_{l})$ lies in the $x^{\text{\em th}}$ uncertainty range (where $x<m$) of state $s(o_k, R^{i}_{l})$, then the next function that should be executed on $o_k$ is the $x^{\text{\em th}}$ function in the list of enrichment functions. For example, if $s(o_k, R^{i}_{l}) = [0, 0, 0, 1]$ (first row in Table \ref{DecisionTable}) and $h(o_{k},R^{i}_{l}) = 0.92$, then our technique returns $f^{i}_{3}$ (last function in the list). Such a function is expected to provide the highest reduction in $h(o_{k},R^{i}_{l})$ (last column of the table) among the remaining functions that have not been executed on $o_k$. In our example, applying $f^{i}_{3}$ on $o_k$ is expected to reduce the uncertainty of $o_k$ by $0.22$ (the last value of the last column). (The $\Delta$ uncertainty value will be used in estimating the benefit of triples as we will see in the next section.)

\begin{comment}
The worst case time complexity of the generation of triples step is $|O|$. However, similar to the candidate selection step, this cost is only incurred at the first epoch. In the epochs other than the initial epoch, we only need to generate triples for the objects which were executed in the previous epoch. For the remaining objects, we do not have to generate any triples, as their state, joint probability value and uncertainty value remained the same as the previous epoch. 
\end{comment}

\subsection{Benefit Estimation of Triples}\label{BenefitEstimation}

In this step, we derive the benefit of triples present in $\mathsf{TS}_w$ using a benefit metric.

\begin{definition}
The \emph{Benefit} of a triple ($o_k$, $R^i_l$, $f^i_m$), denoted as $Benefit$($o_k$, $R^i_l$, $f^i_m$), is defined as the increase in the expected $F_\alpha$ value of the answer set (that is caused by the evaluation of the triple) per unit cost. Formally, it is calculated as follows: 
%\vspace{-0.2cm}
\begin{equation}\label{eqn:benefit_estimation}
%\vspace{-0.2cm}
Benefit(o_k,R^i_l,f^i_m)  =\frac{\splitfrac{E(\hat{F_\alpha}(\mathsf{A}_{w}, (o_k, R^i_l, f^i_m)))}{ - E(F_\alpha(\mathsf{A}_{w-1}))}}{c^{i}_{m}}
\end{equation}
\noindent where $E(\hat{F_\alpha}(\mathsf{A}_{w}, (o_k, R^i_l, f^i_m))))$ is the expected quality of $\mathsf{A}_{w}$ if only triple \emph{$(o_k,R^i_l,f^i_m)$} is executed in epoch $w$ and $c^{i}_{m}$ is the cost of evaluating function $f^i_m$. For ease of notation, we will refer to $E(\hat{F_\alpha}(\mathsf{A}_{w}, (o_k, R^i_l, f^i_m)))$ as $E(\hat{F_\alpha}(\mathsf{A}_{w}))$.
\end{definition}

%$\hat{F_\alpha}((o_k, R^i_j, f^i_n),\mathsf{A}_{w})$ 

%\yat{where do we use the threshold proability value? I think we can get rid of it now.}

A naive strategy for estimating the benefit of a triple ($o_k$, $R^i_l$, $f^i_m$) would first estimate the new expression satisfiability probability of object $o_k$ that will be obtained if function $f^i_m$ is executed on $o_k$, denoted as $\hat{\mathcal P}_k$. Then, it would use the algorithm in Section~\ref{Answercomputealgorithm} to estimate  $\hat{F_\alpha}(w)$ and hence the denominator in Equation \ref{eqn:benefit_estimation}. That is, it would generate a new answer set (Section~\ref{Answercomputealgorithm}) assuming that the expression satisfiability probability of $o_k$ is set to $\hat{\mathcal P_k}$ (instead of $\mathcal P_k$) and the expression satisfiability probability of all the other objects (i.e., \{$o_r$ $\in$ $O$, $r$$\neq$$k$\}) are fixed to their values in epoch $w-1$. Such a strategy exhibits a time complexity of $\mathcal O(n)$ for estimating the benefit of each triple and hence $\mathcal O(n^2)$ for estimating the benefits of the triples in $\mathsf{TS}_w$, where $n$ is the number of objects.

Instead, we propose an efficient strategy that estimates a {\em relative} benefit value for each triple ($o_k$, $R^i_l$, $f^i_m$) using the expression satisfiability probability $\hat{\mathcal P_k}$. We discuss below how $\hat{\mathcal P_k}$ is estimated and then present our efficient strategy. This strategy exhibits a complexity of $\mathcal O(1)$ for estimating the benefit of each triple and $\mathcal O(n)$ for estimating the benefits of the triples in $\mathsf{TS}_w$, where $n$ is the number of objects.

\subsubsection{Estimation of expression satisfiability probability value}\label{sect:estimation_qsp} Given a triple ($o_k$, $R^i_l$, $f^i_m$), in order to estimate $\hat{\mathcal P}_k$, we first need to estimate the uncertainty value of $o_k$ and the predicate probability value of $o_k$ {\em w.r.t.} predicate $R^i_l$ that will be obtained if function $f^i_m$ is executed on $o_k$. We denote these two values as $\hat{h}(o_{k},R^{i}_{l})$ and $\hat{p}(o_k,R^{i}_{l})$, respectively.

The uncertainty value $\hat{h}(o_{k},R^{i}_{l})$ can be estimated using the decision table based on the state $s(o_{k},R^{i}_{l})$ and the current uncertainty value $h(o_{k},R^{i}_{l})$. Suppose that the retrieved $\Delta$ uncertainty is $u$. Then, $\hat{h}(o_{k},R^{i}_{l})$ = $h(o_{k},R^{i}_{l})$ + $u$. For example, consider the decision table in Table \ref{DecisionTable}. Assuming  that the state $s(o_{k},R^{i}_{l}) = [0, 0, 1, 1]$ and the uncertainty value $h(o_{k},R^{i}_{l}) = 0.93$, then $\hat{h}(o_{k},R^{i}_{l}) = 0.93 - 0.28 = 0.65$. Given the value of $\hat{h}(o_{k},R^{i}_{l})$, the predicate probability $\hat{p}(o_{k},R^{i}_{l})$ is then calculated by the inverse equation of entropy (i.e., Equation \ref{eqn:uncertainty}) as follows:
\begin{equation}\label{eqn:InverseUncertainty}
\begin{split}
\hat{h}(o_{k},R^{i}_{l}) &= -\hat{p}(o_{k},R^{i}_{l}) \cdot log(\hat{p}(o_{k},R^{i}_{l}))- \\ &(1-\hat{p}(o_{k},R^{i}_{l})) \cdot log(1-\hat{p}(o_{k},R^{i}_{l}))
\end{split}
\end{equation}

Note that there are two possible solutions for this equation. The first possible value of $\hat{p}(o_{k},R^{i}_{l})$ is higher than the current value ${p}(o_{k},R^{i}_{l})$ whereas the second one is lower than ${p}(o_{k}, R^{i}_{l})$. Based on our approach in Section \ref{section:candidate_set_selection}, we only consider the value that is higher than $\mathcal P_k$ because it always increases the expected $F_\alpha$ value of the answer set whereas the second value may increase it or keep it the same.

\begin{example}
Suppose that the current predicate probability $p(o_{k},R^{i}_{l})$ is $0.7$ and the current ESP value $\mathcal P_k$ is $0.66$. This implies that the uncertainty value of $o_k$ {\em w.r.t.} predicate $R^i_l$ is $0.92$ (derived using Equation \ref{eqn:uncertainty}). Assuming that the decision table is the one shown in Table \ref{DecisionTable} and the state value $s(o_{k},R^{i}_{l}) = [0, 0, 1, 1]$, then, $\hat{h}(o_{k},R^{i}_{l}) = 0.92 - 0.28 = 0.64$. Using Equation \ref{eqn:InverseUncertainty}, the estimated predicate probability $\hat{p}(o_{k},R^{i}_{l})$ will have two possible values $0.84$ and $0.16$. As explained earlier in Definition \ref{Defjoint}, depending on the tags involved in the predicates, the two possible values of $\hat{p}(o_{k},R^{i}_{l})$ can then be used to derive two possible values for the ESP value $\hat{\mathcal P}_k$. Suppose that these two values are $0.76$ and $0.12$.  This implies that if triple ($o_k$, $R^i_l$, $f^i_m$) is executed in epoch $w$, then the expression satisfiability probability of $o_k$ will either increase from $0.66$ to $0.76$ or decrease from $0.66$ to $0.12$. Based on our observation in Section \ref{section:candidate_set_selection}, we only set $\mathcal P_k$  to $0.76$ because it always increases the expected $F_\alpha$ value.

% of the answer set
%\sm{make sure this is still correct since we are changing the theorem specification}
\end{example}

\subsubsection{Efficient Benefit Estimation}
\label{sect:efficient benefit estimation} Our strategy for estimating the benefit values is based on the relationship between the benefit metric of a triple, as defined in Equation \ref{eqn:benefit_estimation}, and the local properties of such a triple (i.e., the expression satisfiability probability of the object, the cost of the function, etc.). The objective is to eliminate the step of generating a new answer set (using Equation \ref{eqn:benefit_estimation}) to estimate the increase in the quality of the answer set (i.e., $\hat{F_\alpha}(w)$).

Let ($o_k$, $R^i_l$, $f^i_m$) and ($o_q$, $R^s_t$, $f^s_v$) be two triples present in $\mathsf{TS}_w$. Let the estimated ESP value of $o_k$ (i.e., $\hat{\mathcal P}_k$) if triple ($o_k$, $R^i_l$, $f^i_m$) is executed in epoch $w$ be $\hat{\mathcal P}_k = (\mathcal P_k+ \Delta{\mathcal P_k})$ and the estimated ESP value of $o_q$ if triple ($o_q$, $R^s_t$, $f^s_v$) is executed in epoch $w$ be $\hat{\mathcal P}_q = (\mathcal P_q +\Delta{\mathcal P_q})$.

Let $m_1$ be the number of objects that are part of $\mathsf{A}_{w-1}$ and will move out of $\mathsf{A}_{w}$ as a result of changing the expression satisfiability probability of object $o_k$ from $\mathcal P_k$ to $\hat{\mathcal P}_k$. We have observed that since $o_k \not\in \mathsf{A}_{w-1}$ (because $\mathsf{CS}_w$ consists only of objects from outside the answer set $\mathsf{A}_{w-1}$) and $\hat{\mathcal P}_k > \mathcal P_k$, the number of objects in $\mathsf{A}_{w}$ will be less than or equal to that of $\mathsf{A}_{w-1}$. Hence, $m_1 \geq 0$. Furthermore, let $m_2$ be the number of objects that are part of $\mathsf{A}_{w-1}$ and will move out of $\mathsf{A}_{w}$ as a result of changing the expression satisfiability probability of object $o_q$ from $\mathcal P_q$ to $\hat{\mathcal P}_q$. Similarly, $m_2 \geq 0$.

Then, the possible values of $m_1$ and $m_2$ can be as follows. Both $m_1$ and $m_2$ can be greater than zero and $m_1$ is greater than $m_2$. Similarly, both of them can be greater than zero and $m_1$ is less than or equal to $m_2$. Both $m_1$ and $m_2$ are equal to zero which implies that the number of objects in both $\mathsf{A_{w-1}}$ and $\mathsf{A_{w}}$ are the same. Given these three cases, we make the following observations about the benefit values of the triples in $\mathsf{TS}_w$. 

\begin{theorem}\label{theorem:SelectionCriteriaWithThresholdChange}
The triple ($o_k$, $R^i_l$, $f^i_m$) will have a higher benefit value than the triple ($o_q$, $R^s_t$, $f^s_v$) in epoch $w$ irrespective of the values of $m_1$ and $m_2$ if the following condition holds:

\begin{equation}\label{eqn:DeriveBenefitCondition3}
\vspace{-0.1cm}
\frac{\mathcal P_k(\mathcal P_k+\Delta \mathcal P_k)}{c^{i}_{m}} > \frac{\mathcal P_q(\mathcal P_q+\Delta \mathcal P_q)}{c^{s}_{v}}
\end{equation}
\end{theorem}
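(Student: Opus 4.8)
The plan is to expand both benefit values using the expected $F_\alpha$ expression in Equation~\ref{eqn:expected_f1_measure} and reduce the desired comparison to the stated product inequality. First I would fix the quantities common to both triples at the end of epoch $w-1$: write $k_1 = \sum_{o_i \in \mathsf{A}_{w-1}} \mathcal P_i$, $k_2 = \sum_{o_j \in O} \mathcal P_j$, and $a = |\mathsf{A}_{w-1}|$, so that $E(F_\alpha(\mathsf{A}_{w-1})) = (1+\alpha)k_1 / (\alpha k_2 + a)$. Because the candidate set contains only objects outside $\mathsf{A}_{w-1}$ (Section~\ref{section:candidate_set_selection}) and we retain only the root $\hat{\mathcal P}_k = \mathcal P_k + \Delta\mathcal P_k > \mathcal P_k$, executing the triple for $o_k$ inserts $o_k$ into the answer set, raises $k_2$ by $\Delta\mathcal P_k$, and evicts the $m_1$ boundary objects whose ESP sum I denote $D_1$. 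This gives $E(\hat{F_\alpha}(\mathsf{A}_w)) = (1+\alpha)(k_1 + \hat{\mathcal P}_k - D_1)/(\alpha(k_2+\Delta\mathcal P_k) + a + 1 - m_1)$, and analogously for the triple on $o_q$ with $\Delta\mathcal P_q, D_2, m_2$.

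Next I would substitute these into $Benefit = (E(\hat{F_\alpha}(\mathsf{A}_w)) - E(F_\alpha(\mathsf{A}_{w-1})))/c$ from Equation~\ref{eqn:benefit_estimation} and cross-multiply the target inequality $Benefit(o_k,R^i_l,f^i_m) > Benefit(o_q,R^s_t,f^s_v)$. Since every denominator is strictly positive, this converts the statement into a polynomial inequality in the shared constants $k_1,k_2,a$, the per-triple increments $\Delta\mathcal P$, the costs $c^i_m, c^s_v$, and the exit terms $(m_1,D_1),(m_2,D_2)$. On collecting terms, the dominant contribution on each side takes the form $\mathcal P \cdot \hat{\mathcal P}/c = \mathcal P(\mathcal P + \Delta\mathcal P)/c$, which is exactly the quantity compared in Equation~\ref{eqn:DeriveBenefitCondition3}; the remaining terms are corrections induced by the changing answer-set size and by the objects that leave it.

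The decisive step is to show these corrections cannot reverse the inequality, i.e. that the comparison is governed by the product condition \emph{irrespective} of $m_1$ and $m_2$. Here I would invoke the optimality established in Theorem~\ref{theorem:threshold}: since $\mathsf{A}_{w-1} = L^{\tau-1}$ maximizes the expected $F_\alpha$-measure, every evicted object was at the boundary, so its ESP lies in the narrow band just below $\mathcal P^{\tau}_{w-1}$ and its removal perturbs $E(F_\alpha)$ only in the direction already accounted for by Theorem~\ref{theorem:thresholdFunctionInsideObjectAll}. I would then split into the three regimes flagged before the statement, namely $m_1 > m_2 > 0$, $0 < m_1 \le m_2$, and $m_1 = m_2 = 0$, and in each regime bound $D_1$ and $D_2$ in terms of $m_1 \mathcal P^{\tau}_{w-1}$, $m_2 \mathcal P^{\tau}_{w-1}$ and the maximality of $k_1$, so that the exit terms are dominated by the product terms.

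The main obstacle I anticipate is precisely this last bounding argument in the asymmetric case $m_1 \neq m_2$: when the two triples evict different numbers of boundary objects, the answer-set sizes $a+1-m_1$ and $a+1-m_2$ differ, so the two benefit fractions no longer share a denominator and the cross-multiplied polynomial acquires mixed terms such as $m_1 D_2$ and $m_2 D_1$. Controlling these requires using that all evicted objects have ESP below $\mathcal P^{\tau}_{w-1}$ while $o_k$ and $o_q$ re-enter above it, together with the strict monotonicity chain of Theorem~\ref{theorem:threshold} to sign the residual. I expect the case $m_1 = m_2 = 0$ to be immediate (common denominator, the product terms appear directly), the case $0 < m_1 \le m_2$ to follow by a monotonicity argument on the denominators, and $m_1 > m_2 > 0$ to need the full boundary bound; this final case is where the detailed appendix computation should concentrate.
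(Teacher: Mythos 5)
Your plan founders at the reduction step, and the failure is structural, not merely technical. You compute both benefits from the difference form of Equation~\ref{eqn:benefit_estimation}, cross-multiply, and assert that the dominant contribution on each side takes the form $\mathcal P(\mathcal P+\Delta\mathcal P)/c$. It does not. Expanding $E(\hat{F_\alpha}(\mathsf{A}_{w})) - E(F_\alpha(\mathsf{A}_{w-1}))$ over a common denominator gives a numerator of the form $(\hat{\mathcal P}_k - D_1)(\alpha k_2 + a) - k_1(\alpha\Delta\mathcal P_k + 1 - m_1)$: the entering ESP $\hat{\mathcal P}_k=\mathcal P_k+\Delta\mathcal P_k$ appears only \emph{linearly}, and since $o_k\notin\mathsf{A}_{w-1}$ the quantity $\mathcal P_k$ never occurs as a standalone multiplicative factor, so the product $\mathcal P_k\cdot(\mathcal P_k+\Delta\mathcal P_k)$ cannot arise from the $F_\alpha$ algebra at all. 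In the paper's own proof that factor is injected by hand: the proof defines $\nu_k = \mathcal P_k/c^{i}_{m}$ and $\nu_q=\mathcal P_q/c^{s}_{v}$ and compares the \emph{probability-weighted new} $F_\alpha$ values $\nu_k\cdot E(\hat{F_\alpha})$ versus $\nu_q\cdot E(\hat{F_\alpha})$ (Equation~\ref{pm_benefit_compare}), rather than the differences of Equation~\ref{eqn:benefit_estimation} that you use. Starting from the literal difference-based benefit, the sufficient condition you would actually derive involves $(\mathcal P_k+\Delta\mathcal P_k)/c^{i}_{m}$-type quantities, and no bounding of the eviction terms $D_1, D_2$ will turn it into the stated product condition.

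Separately, even granting your framework, the decisive step is conjectured rather than executed: you defer precisely the asymmetric case $m_1\neq m_2$ to a "detailed appendix computation," yet that is where the claim "irrespective of $m_1$ and $m_2$" must be earned, and your proposed tool (bounding $D_1,D_2$ via the threshold optimality of Theorem~\ref{theorem:threshold}) is not what makes it work. The paper instead proceeds by explicit case analysis, eight cases (the remainder by symmetry) over the relative orderings of $\Delta\mathcal P_k$ vs.\ $\Delta\mathcal P_q$, of $\mathcal P_k+\Delta\mathcal P_k$ vs.\ $\mathcal P_q+\Delta\mathcal P_q$, and of $m_1$ vs.\ $m_2$, comparing numerators and denominators of Equation~\ref{pm_benefit_compare} term by term, with the eviction sums $(\mathcal P_\tau+\mathcal P_{\tau-1}+\dots+\mathcal P_{\tau-(m_i-1)})$ handled directly inside those comparisons. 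So the gap is twofold: your reduction targets an identity that is false for the difference-based benefit, and the bounding argument meant to close the remaining cases is left unproved.
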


%%\begin{proof}
%\dg{Newly added. Need review.}
\textbf{Proof.} We prove this theorem as follows: given three possible cases of $m_1$ and $m_2$, we consider all possible combinations (16 possible combinations) of the values of $\mathcal P_k$, $\mathcal P_q$, $\Delta \mathcal P_k$, and $\Delta \mathcal P_q$ and show if Equation \ref{eqn:DeriveBenefitCondition3} holds, then the benefit of $(o_k, R^i_l, f^i_m)$ will be higher than the benefit of $(o_q, R^s_t, f^s_v)$. In the following we provide the proof of three such scenarios:

 For ease of notation, we denote the threshold $\mathcal P^\tau_{w-1}$ of epoch $w-1$ as $\mathcal P_\tau$ in this proof. The notation of $E(F_\alpha(\mathsf{A}_{w-1}))$, i.e., $\frac{(1+\alpha)(\mathcal P_1+ \cdots +\mathcal P_\tau)}{\alpha(\mathcal P_1+\mathcal P_2+ \cdots +\mathcal P_{|O|})+ \tau}$ can be simplified as $\frac{X}{Y+ \tau}$. We denote the value of $\frac{\mathcal P_k}{c^l_n}$ by $\nu_k$ and the value of $\frac{\mathcal P_q}{c^s_v}$ by $\nu_q$. Rewriting the expression of benefit for both the triples, the benefit value of triple ($o_k$, $R^l_m$, $f^l_n$) will be higher than the triple ($o_q$, $R^s_t$, $f^s_v$), when the following condition holds:

Simplifying the expression of benefit the triples, benefit value of triple ($o_k$, $R^l_m$, $f^l_n$) will be higher than the triple ($o_q$, $R^s_t$, $f^s_v$), when the following condition holds: %$\nu_{k}F_{\alpha k} > \nu_{q}F_{\alpha q}$. 
\begin{equation}\label{pm_benefit_compare}
\begin{split}
&\nu_k(\frac{\splitfrac{X-(1+\alpha)\cdot(\mathcal P_\tau+\mathcal P_{\tau-1}+...\mathcal P_{\tau-(m_1-1)})+}{(1+\alpha)\cdot(\mathcal P_k+\Delta \mathcal P_k)}}{Y+(\tau-m_1) + \alpha\cdot\Delta{\mathcal P_k}}) > \\
&\nu_q(\frac{\splitfrac{X-(1+\alpha)\cdot(\mathcal P_\tau+\mathcal P_{\tau-1}+...\mathcal P_{\tau-(m_2-1)})+}{(1+\alpha)\cdot(\mathcal P_q+\Delta \mathcal P_q)}}{Y+(\tau-m_2) + \alpha\cdot\Delta{\mathcal P_q}}) 
\end{split}
\end{equation}

%We consider the different cases of $\Delta \mathcal P_k$, $\Delta \mathcal P_q$, $(\mathcal P_k + \Delta \mathcal P_k)$, and $(\mathcal P_q + \Delta \mathcal P_q)$ values and establish the conditions in which Equation \ref{pm_upperbound} is satisfied.

\vspace{0.1cm}
\noindent
\textbf{Case 1: $\Delta \mathcal P_k$ $<$ $\Delta \mathcal P_q$, $\mathcal P_k + \Delta \mathcal P_k$ $>$ $\mathcal P_q + \Delta \mathcal P_q$,  and $m_1 > m_2$}. 

\noindent
Comparing both the denominators of Equation \ref{pm_benefit_compare}, we can see that $(\tau-m_1)<(\tau-m_2)$ and $\Delta \mathcal P_k$ $<$ $\Delta \mathcal P_q$. This implies that the denominator on the L.H.S. is smaller than the denominator on the R.H.S. In the numerator of L.H.S, the value of $(\mathcal P_\tau+\mathcal P_{\tau-1}+...\mathcal P_{\tau-(m_1-1)})$ is higher than $(\mathcal P_\tau+\mathcal P_{\tau-1}+...\mathcal P_{\tau-(m_2-1)})$ as $m_1$ is higher than $m_2$. Furthermore, if $\nu_k(\mathcal P_k +\Delta \mathcal P_k) > \nu_q(\mathcal P_q+ \Delta \mathcal P_q)$ then the numerator of the L.H.S. will always be higher than the numerator of the R.H.S. Thus we can conclude that the Equation \ref{pm_benefit_compare} will be satisfied when the condition $\nu_k(\mathcal P_k+\Delta \mathcal P_k) > \nu_q(\mathcal P_q+ \Delta \mathcal P_q)$ is satisfied.

%\dg{newly added} 

\vspace{0.1cm}
\noindent
\textbf{Case 2: $\Delta \mathcal P_k$ $>$ $\Delta \mathcal P_q$, $\mathcal P_k + \Delta \mathcal P_k$ $>$ $\mathcal P_q + \Delta \mathcal P_q$, and $m_1 > m_2$}. In Equation \ref{pm_benefit_compare}, the value of $(\mathcal P_\tau+\mathcal P_{\tau-1}+...\mathcal P_{\tau-(m_1-1)})$ on the L.H.S is higher than $(\mathcal P_\tau+\mathcal P_{\tau-1}+...\mathcal P_{\tau-(m_2-1)})$ as $m_1$ is higher than $m_2$. In the denominator, although the value of $\Delta \mathcal P_k$ is higher than $\Delta \mathcal P_q$, the total value of $(\tau-m_1 + \alpha \cdot \Delta \mathcal P_k) $ will be lower than $ (\tau-m_2 + \alpha \cdot \Delta \mathcal P_q)$ as both $\Delta \mathcal P_k$ and $\Delta \mathcal P_q$ are less than one.

\vspace{0.1cm}
\noindent
\textbf{Case 3: $\Delta \mathcal P_k$ $>$ $\Delta \mathcal P_q$, $\mathcal P_k + \Delta \mathcal P_k$ $<$ $\mathcal P_q + \Delta \mathcal P_q$, and $m_1,m_2=0$}. Let us compare the L.H.S and R.H.S. of Equation \ref{pm_benefit_compare}. In the numerator, if the term of $\nu_k(\mathcal P_k+\Delta \mathcal P_k)$ is higher than the value of $\nu_q(\mathcal P_q+\Delta \mathcal P_q)$, then the numerator of L.H.S will be higher than the numerator of R.H.S. Hence, the value of the expression in the left hand side will be higher.

\vspace{0.1cm}
\noindent
\textbf{Case 4: $\Delta \mathcal P_k$ $<$ $\Delta \mathcal P_q$, $\mathcal P_k + \Delta \mathcal P_k$ $<$ $\mathcal P_q + \Delta \mathcal P_q$, and $m_1,m_2=0$}. In Equation \ref{pm_benefit_compare}, after simplifying some steps further, we derive that the condition in which the L.H.S. will be higher than the R.H.S. is as follows: $\nu_k(\mathcal P_k+\Delta \mathcal P_k)\Delta \mathcal P_q$ $>$ $\nu_q(\mathcal P_q+\Delta \mathcal P_q)\Delta \mathcal P_k$. According to the assumption $\Delta \mathcal P_q$ value is higher than the value of $\Delta \mathcal P_k$. This implies that, if the condition $\nu_k(\mathcal P_k+\Delta \mathcal P_k)$ $>$ $\nu_q(\mathcal P_q+\Delta \mathcal P_q)$ is satisfied, then the L.H.S will be higher than the right hand side.

The above proofs will also hold for the scenarios where $m_1=m_2$ and $m_1>0$. Only difference in the proofs of that scenario from the proofs of Cases 1-4 will be as follows: an additional constant term (i.e., the term of $(\mathcal P_\tau+\mathcal P_{\tau-1}+\cdots+\mathcal P_{\tau-(m_1-1)})$) will be added to the numerators of both the sides of Equation \ref{pm_benefit_compare}. The remaining steps will remain the same as the proofs of Cases 1-4.

\vspace{0.1cm}
\noindent
\textbf{Case 5: $\Delta \mathcal P_k$ $<$ $\Delta \mathcal P_q$, $\mathcal P_k + \Delta \mathcal P_k$ $>$ $\mathcal P_q + \Delta \mathcal P_q$,  and $m_1 < m_2$}. Comparing both the denominators of Equation \ref{pm_benefit_compare}, we can see that $(\tau-m_1)<(\tau-m_2)$ and $\Delta \mathcal P_k$ $<$ $\Delta \mathcal P_q$. This implies that the denominator of L.H.S. is lower than the denominator of R.H.S. In the numerators, the value of $(\mathcal P_\tau+\mathcal P_{\tau-1}+...\mathcal P_{\tau-(m_1-1)})$ is smaller than $(\mathcal P_\tau+\mathcal P_{\tau-1}+...\mathcal P_{\tau-(m_2-1)})$ as $m_1$ is smaller than $m_2$. This condition makes the numerator of L.H.S higher than R.H.S. Furthermore, if $\nu_k(\mathcal P_k +\Delta \mathcal P_k) > \nu_q(\mathcal P_q+ \Delta \mathcal P_q)$ is satisfied then the numerator of L.H.S. will be higher than R.H.S.

\vspace{0.1cm}
\noindent
\textbf{Case 6: $\Delta \mathcal P_k$ $>$ $\Delta \mathcal P_q$, $\mathcal P_k + \Delta \mathcal P_k$ $>$ $\mathcal P_q + \Delta \mathcal P_q$, and $m_1 < m_2$}. From Equation \ref{pm_benefit_compare}, we can derive the following equation: \begin{equation}\label{pm_upperbound_n}
\begin{split}
&\nu_k(X-(1+\alpha)\cdot(\mathcal P_\tau+\mathcal P_{\tau-1}+...\mathcal P_{\tau-(m_1-1)})+(1+\alpha)\cdot\\
&(\mathcal P_k+\Delta \mathcal P_k))\cdot(Y+(\tau-m_2) + \alpha\cdot\Delta{\mathcal P_q})>\nu_j(X- \\
&(1+\alpha)\cdot(\mathcal P_\tau+\mathcal P_{\tau-1}+...\mathcal P_{\tau-(m_2-1)})+(1+\alpha)\cdot
\\&(\mathcal P_q+\Delta \mathcal P_q))\cdot(Y+(\tau-m_1) + \alpha\cdot\Delta{\mathcal P_k})
\end{split}
\end{equation}

In the above equation, the value of $(\mathcal P_\tau+\mathcal P_{\tau-1}+...\mathcal P_{\tau-(m_1-1)})$ is lower than $(\mathcal P_\tau+\mathcal P_{\tau-1}+...\mathcal P_{\tau-(m_2-1)})$ as $m_1$ is smaller than $m_2$. This favors the value in the left hand side of the equation. Furthermore, if the value of $\nu_k(\mathcal P_k+\Delta \mathcal P_k)$ is higher than the value of $\nu_q(\mathcal P_q+\Delta \mathcal P_q)$, then Equation \ref{pm_benefit_compare} will be satisfied.

\vspace{0.1cm}
\noindent
\textbf{Case 7: $\Delta \mathcal P_k$ $>$ $\Delta \mathcal P_q$, $\mathcal P_k + \Delta \mathcal P_k$ $<$ $\mathcal P_q + \Delta \mathcal P_q$, and $m_1 < m_2$}. Comparing the L.H.S. of the Equation \ref{pm_upperbound_n} with the R.H.S. of the equation, we can see that if the value of $\nu_k(\mathcal P_k+\Delta \mathcal P_k)$ is higher than the value of $(\mathcal P_q+\Delta \mathcal P_q)$, then the whole expression of L.H.S will be higher and hence Equation \ref{pm_benefit_compare} will be satisfied.

%The value of $(\nu_k(X-(\mathcal P_\tau+\mathcal P_{\tau-1}+\cdots+\mathcal P_{\tau-(m_1-1)})+(\mathcal P_k+\Delta \mathcal P_k)))$ on the left hand side will be higher than the value of $(\nu_q(X-(\mathcal P_\tau+\mathcal P_{\tau-1}+\cdots+\mathcal P_{\tau-(m_2-1)}) +(\mathcal P_q+\Delta \mathcal P_q)))$ on the right hand side if and only if the value of $\nu_k(\mathcal P_k+\Delta \mathcal P_k)$ is higher than the value of $\nu_q(\mathcal P_q+\Delta \mathcal P_q)$.

\vspace{0.1cm}
\noindent
\textbf{Case 8: $\Delta \mathcal P_k$ $<$ $\Delta \mathcal P_q$, $\mathcal P_k + \Delta \mathcal P_k$ $<$ $\mathcal P_q + \Delta \mathcal P_q$, and $m_1 < m_2$}. Let us consider Equation \ref{pm_benefit_compare} and compare the left hand side of the equation with the right hand side. After simplifying some further steps, we can derive that the condition in which the left hand side will be higher than the right hand side is as follows: $\nu_k(\mathcal P_k+\Delta \mathcal P_k)\Delta \mathcal P_q > \nu_q(\mathcal P_q+\Delta \mathcal P_q)\Delta \mathcal P_k$. According to the assumption of this case, $\Delta \mathcal P_q$ value is higher than the value of $\Delta \mathcal P_k$. This implies that, if the condition $\nu_k(\mathcal P_k + \Delta \mathcal P_k)$ $>$ $\nu_q(\mathcal P_q+\Delta \mathcal P_q)$ holds, then Equation \ref{pm_benefit_compare} will be satisfied.

The above proofs (i.e., the proofs of Cases 5-8) will also hold for the scenarios where $m_1>m_2$, due to the symmetric nature of the assumptions. Based on the proofs of Cases 1-8, we can conclude that given two triples ($o_k$, $R^l_m$, $f^l_n$) and ($o_q$, $R^s_t$, $f^s_v$), if the condition of $\nu_k(\mathcal P_k + \Delta \mathcal P_k)$ $>$ $\nu_q(\mathcal P_q+\Delta \mathcal P_q)$ is satisfied then the first triple will have higher benefit value compared to the second triple.

\begin{comment}
\vspace{0.1cm}
\noindent
\textbf{Case 3: $\Delta \mathcal P_k$ $<$ $\Delta \mathcal P_q$, $\mathcal P_k + \Delta \mathcal P_k$ $>$ $\mathcal P_q + \Delta \mathcal P_q$ and $m_1,m_2=0$}.%\dg{newly added}
Substituting $m_1 = 0 $ and $m_2 = 0$ in Equation \ref{pm_upperbound}, we get the following equation:
\begin{equation}\label{pm_upperbound_const_thr}
\begin{split}
\nu_k &( \frac{X+(1+\alpha)\cdot(\mathcal P_k+\Delta \mathcal P_k)}{Y+ \tau + \alpha \cdot\Delta{\mathcal P_k}}) >\\ 
&\nu_q ( \frac{X+(1+\alpha)\cdot(\mathcal P_q+\Delta \mathcal P_q)}{Y+ \tau + \alpha \cdot\Delta{\mathcal P_q}} )
\end{split}
\end{equation}

The denominator of L.H.S. is lower than the denominator of R.H.S. as $\Delta{\mathcal P_k} < \Delta{\mathcal P_q}$. Furthermore, if the condition: $\nu_k(\mathcal P_k+\Delta \mathcal P_k) > \nu_q(\mathcal P_q+ \Delta \mathcal P_q)$ is satisfied, then the numerator of the L.H.S. will be higher than the R.H.S. This implies that the L.H.S will always be higher than the R.H.S., if the condition $\nu_k(\mathcal P_k+\Delta \mathcal P_k) > \nu_q(\mathcal P_q+ \Delta \mathcal P_q)$ is satisfied.

\end{comment}

%%\end{proof}

 %We prove the theorem as follows. Given each of the three cases of $m_1$ and $m_2$, we consider all the possible combinations of the values of $\mathcal P_k$, $\mathcal P_q$, $\Delta \mathcal P_k$, and $\Delta \mathcal P_q$ and show that, if the condition in Equation \ref{eqn:DeriveBenefitCondition3} holds, then the benefit of triple $(o_k, R^i_l, f^i_m)$ will be higher than that of triple $(o_q, R^s_t, f^s_v)$.

Based on the above theorem, PIQUE calculates a {\em relative} benefit value for each triple ($o_k$, $R^i_l$, $f^i_m$) in epoch $w$ as follows:
\begin{equation}\label{eqn:benefit}
\vspace{-0.2cm}
Benefit(o_k, R^i_l, f^i_m)  = \frac{\mathcal P_k(\mathcal P_k+\Delta \mathcal P_k)}{c^{i}_{m}} 
\end{equation}

After calculating the relative benefit values for the triples in $\mathsf{TS}_w$, we pass them to the triple selection step.

\begin{comment}
The worst case time complexity of the benefit estimation step is $|O|$. However, similar to the candidate selection step, in epoch $w$ we only consider the triples which were generated in the triple generation step of epoch $w$. The benefit of the remaining triples which were not generated in the triple generation step will remain same as they were not executed in the previous epoch.
\end{comment}

\subsection{Selection of Triples}
\label{subsection:triple selection}

This step chooses the list of triples that should be included in $\mathsf{EP}_w$. We compare between the benefit values of the triples in $\mathsf{TS}_w$ and generate a plan that consists of the triples with the highest benefit values. To this end, PIQUE maintains a priority queue (denoted as $PQ$) of the triples in $\mathsf{TS}_w$ that orders them based on their benefit values. Note that the same priority queue is maintained over the different epochs; it is created from scratch in the first epoch and then updated efficiently in the subsequent epochs.

In the plan execution phase, PIQUE retrieves one triple at a time from $PQ$ and then executes it. This process continues until the allotted time for the epoch is consumed.

\section{Disk-Based version of PIQUE}\label{sect:disk_based_approach}
So far, we have implicitly assumed that objects fit in memory. In this section, we present a high-level overview of how PIQUE deals with the case where the objects that require tagging can not fit in memory. A detailed description of this extension of PIQUE is provided in Appendix \ref{sect:disk_based_approach}. In Section \ref{exp:disk_based_approach}, we present the experimental results related to this case.

The main idea is to divide the objects in $O$ equally into a set of blocks. Initially all blocks are stored on disk. At any instance of time, PIQUE maintains two priority queues: $PQ^m$ for the in-memory blocks and $PQ^d$ for the disk-resident blocks. The blocks in $PQ^m$ (resp. $PQ^d$) are sorted in an increasing (resp. decreasing) order based on their {\em block benefit} values, which will be explained later. 

The execution plan in this case consists of two parameters: a number of blocks $b$ and a set of triples $S$. During the plan execution phase, PIQUE writes the first $x$ blocks in $PQ^m$ to disk, loads the first $x$ blocks in $PQ^d$ in memory, and then executes the triples in $S$ in the remaining time of the epoch\footnote{In the first few epochs, there can be room in memory for additional blocks. In this case, the number of blocks to be written to disk does not need to be equal to the number of blocks to be loaded in memory.}.

%In the first epoch, PIQUE loads a certain number of blocks (that fills the memory). In subsequent epochs, PIQUE flushes some of these blocks back to disk and loads some other disk-resident blocks in memory instead of them. \yat{note for me} 

%The set $B^m$ consists of the first $|B|$ blocks from $PQ^m$ whereas Y is the set of blocks from PQ 2 that  should be loaded into memory. (Note that $|X|$ = $|Y|$.) Z is the set of triples that can be executed in the remaining time of the epoch. The process of generating such a plan is explained later. In the plan execution phase, our approach flushes the blocks in X, loads the blocks in Y, and then executes the triples in Z. 

The plan generation phase in this case consists of these five steps: candidate set selection, generation of triples, benefit estimation of triples, benefit estimation of blocks, and plan generation. The first three steps are identical to those in the in-memory case (Sections \ref{section:candidate_set_selection}, \ref{triple generation}, and \ref{BenefitEstimation}). Note that, in these three steps, PIQUE considers all objects in $O$; i.e., not only the in-memory objects. That is, the set $\mathsf{CS}_w$ can consist of disk-resident objects, the set $\mathsf{TS}_w$ can consist of triples that correspond to disk-resident objects, and so on. 

In the fourth step, we compute for each block its benefit value which is defined as the total benefit of the triples in $\mathsf{TS}_w$ that correspond to the objects in that block. Note that we only need to update the benefit value of blocks from which objects were executed in the previous epoch. Finally, the fifth step enumerates a small number of alternative plans (each with a different value of $b$) and chooses the plan with the highest benefit value, defined as the total benefit of the triples in $S$.

\section{Experimental Evaluation}
\label{sect:experiments}

In this section, we empirically evaluate PIQUE using three real datasets. We compare PIQUE with three baseline algorithms using two different quality metrics. 

%\ry{Given the immediate previous section that talks about implementation of PIQUE, the natural question that comes to mind is "how did you implement it for the experiments?". Can you add a sentence or two here?} 

% We have implemented the PIQUE code base in Python and it is developed as a middleware on top of the file systems. The datasets were stored as files on disk. At the beginning of the query execution, we evaluate precise predicates and bring the objects that require tagging to memory. The PIQUE code base is written in python. For the tagging functions, we have used the implementation of classifiers available in the scikit-learn library \cite{scikit-learn} and train them using the training parts of the datasets. We calibrate the probability outputs of the tagging functions using appropriate calibration mechanisms. We used the isotonic regression model \cite{Zadrozny:2002:TCS:775047.775151} to calibrate the Gaussian Naive-Bayes classifier based tagging functions and the Platt's sigmoid model \cite{Platt99probabilisticoutputs} to calibrate the remaining tagging functions.

\subsection{Experimental Setup}\label{sect:experimental setup}
%In this section first we describe the real datasets and tagging functions that we used in our experiments. Then, we define the quality metrics and the queries that were used for comparing different approaches. After this we have explained the baseline approaches in details.  

%\vspace{0.2cm}
\noindent
\textbf{Datasets.} We consider two image and one Twitter datasets. 

\begin{itemize}[leftmargin=*]
\itemsep-0.1em
\item \textbf{MUCT Dataset} \cite{MUCTDataset} consists of $3,755$ face images of people from various age groups and ethnicities. It consists of types of tags {\em Gender} and {\em Expression} and five precise attributes {\em Timestamp}, {\em PersonId}, {\em SessionId}, {\em CameraId}, and {\em CameraAngle}. This small dataset has been selected so that all approaches can be executed to completion. We divided the dataset into three parts: training, validation, and testing. The sizes of these parts are $850$, $850$, and $2,055$ images, respectively. The first two parts were used to learn some parameters (explained next) whereas the testing part was used in our experiments. 

%From this dataset, we used $850$ images for each of the training and validation phases of the tagging functions and used the remaining $2,055$ images in our experiments. 

%We have extracted Histogram of Gradient features (HOG features) from these images and used as feature vectors. 
%The reason behind choosing this smaller dataset is that we can run all the approaches to completion in this dataset. %The size of the dataset after feature extraction is 16.11 GB.Images were captured in diverse lighting conditions.

% \vspace{0.2cm}
% \noindent
% \textbf{CMU Multi-PIE Dataset.}

\item \textbf{CMU Multi-PIE Dataset} \cite{multi_pie_dataset} contains more than 750k face images of $337$ people recorded over the span of five months. This dataset consists of three types of tags: {\em Gender}, {\em Expression}, and {\em Age} and five precise attributes {\em Timestamp}, {\em PersonID}, {\em SessionId}, {\em CameraId}, and {\em CameraAngle}. From this dataset, we obtained a subset of $500$k images. We used $5$k images for each of the training and validation parts and used the remaining $490$k images in our experiments. 

%These images were captured using different camera angles and illumination conditions while the subjects displayed a range of facial expressions. 
%For each image, we have extracted Histogram of Gradient features (HOG features) and used as feature vectors. 
% \vspace{0.2cm}
% \noindent
% \textbf{Twitter Dataset:} 

\item \textbf{Stanford Twitter Sentiment (STS) Corpus}  \cite{Twitter_Sentiment_Dataset} consists of 1.6M tweets collected using Twitter API during a period of about 3 months in 2009. This dataset consists of one tag type {\em Sentiment} and three precise attributes {\em Timestamp}, {\em Location} and {\em User}. From this dataset, we used $50$k tweets for each of the training and validation parts and used the remaining tweets in our experiments. 
%It is one of the largest publicly available dataset for sentiment analysis.
\end{itemize}

\noindent
\textbf{Queries.} The general structure of our experiments queries follows the format described in Section~\ref{sect:problemDefinition}. Table~\ref{table:Query} presents the five query templates used in our experiments. For example, $Q_1$ has one predicate on tag and five precise predicates\footnote{Note that in Table~\ref{table:Query}, we have mentioned both the precise predicates and the predicates on tag together but the PIQUE operator only contains the predicates on the tags.}. Since precise predicates are evaluated before the predicates on tags, we vary the attribute values in the precise predicates (i.e., the $x_i$ values in Table~\ref{table:Query}) to determine the {\em selectivity} of the queries, which is the percentage of the objects in the dataset that require tagging. The reported results and execution time of each query are averaged over 40 runs. %\yat{NOTE} When reporting the result of each query, we run the query 40 different query (using the same template and selectively but with different values in predicate predicates) and report the average.

% We consider the following imprecise attributes in the set of imprecise predicates: ${Gender (G)}$ for the MUCT dataset, ${ Gender (G)}$, ${ Age (A)}$, and ${ Expression (Ex)}$ for the Multi-PIE dataset, and ${ Sentiment (S)}$ for the Twitter dataset. For the precise predicates, we consider the following attributes: {\em Timestamp}, {\em Person Id}, {\em Session Id}, {\em Camera Id}, {\em Camera Angle}, and {\em Illumination Condition} for the Multi-PIE and MUCT datasets and {\em Timestamp}, {\em User}, and {\em Location of the tweet} for the Twitter dataset.
% We consider the following tag types per dataset for the set of expensive predicates: ${\tt Gender (G)}$ (MUCT dataset); ${\tt Gender (G)}$, ${\tt Age (A)}$ and ${\tt Expression (Ex)}$ (Multi-PIE dataset); and ${\tt Sentiment (S)}$ (Twitter dataset). On the other hand, the predicates about metadata of the objects per dataset refer to: timestamp, person id, session id, camera id, camera angle while capturing the image, and illumination condition (Multi-PIE and MUCT datasets); and timestamp, user, and location of the tweet (Twitter dataset).

%Given these predicates, we define the following query template for our experiments:
%\begin{table}[htb]
\begin{table}[t]\label{QueryTable}
\footnotesize
\centering
\begin{center}
 \begin{tabular}{|>{\centering\arraybackslash}p{5mm}|>{\centering\arraybackslash}p{68mm}|} 
 \hline
 \textbf{Id} & \textbf{Query} \\
 \hline
 $Q1$ & $\textit{Gender} = {\tt Male} \: \Lambda \:  \textit{Timestamp} = x_1 \:$
 $\: \Lambda \: \textit{PersonID} = x_2$ 
 $\: \Lambda \: \textit{ SessionID} = x_3$ 
  $\: \Lambda \: \textit{CameraID} = x_4$  
  $\: \Lambda \: \textit{CameraAngle} = x_5$ \\
 \hline
 $Q2$ & $\textit{Sentiment} = {\tt Positive} \: \Lambda \: \textit{Timestamp} = x_1$
 $\: \Lambda \: \textit{Location} = x_2 \: \Lambda \: \textit{User} = x_3 $\\
 \hline
 $Q3$ & $\textit{Gender} = {\tt Male} \: \Lambda \: \textit{Expression} = {\tt Smile} \: \Lambda  \: \textit{Timestamp} = x_1 \: \Lambda \: \textit{PersonID} = x_2 \: \Lambda \: \textit{SessionID} = x_3$ 
  $\: \Lambda \: \textit{CameraID} = x_4$  
  $\: \Lambda \: \textit{CameraAngle} = x_5$ \\
 \hline
 $Q4$ & $\textit{Gender} = {\tt Male} \: \Lambda  \:  \textit{Age} = {\tt 30} \: \Lambda \: \textit{Timestamp} = x_1$
 $\: \Lambda \: \textit{PersonID} = x_2$ 
 $\: \Lambda \: \textit{ SessionID} = x_3$ 
  $\: \Lambda \: \textit{CameraID} = x_4$  
  $\: \Lambda \: \textit{CameraAngle} = x_5$ \\
 \hline
 $Q5$ & $\textit{Gender} = {\tt Male}  \: \Lambda \:  \textit{Expression} = {\tt Smile} \: \Lambda \: \textit{Age} = {\tt 30} \: \Lambda \:  \textit{Timestamp} = x_1$
 $\: \Lambda \: \textit{PersonID} = x_2$ 
 $\: \Lambda \: \textit{SessionID} = x_3$ 
  $\: \Lambda \: \textit{CameraID} = x_4$  
  $\: \Lambda \: \textit{CameraAngle} = x_5$
   \\
 \hline
\end{tabular}
\vspace{-0.18cm}
\caption{Queries used.} \label{QueryTable}
%\vspace{-0.3cm}
\label{table:Query}
\end{center}
\end{table}

\vspace{0.1cm}
\noindent
\textbf{Enrichment Functions.} In the image datasets, we considered these classifiers as the enrichment functions for determining the tags: Decision Tree, Gaussian Naive-Bayes, Random Forest, and Multi-layer Perceptron. In the Twitter dataset, we used the Gaussian Naive-Bayes, k-Nearest-Neighbors, Support Vector Machine, and Decision Tree classifiers. 
%\dg{added the following line} \dg{updated} 
We chose these classifiers because they are widely used for  gender classification~\cite{gender_classification}, facial expression classification~\cite{facial_expression_recognition_survey}, age classification~\cite{age_detection_survey} on images and sentiment analysis of tweets~\cite{sentiment_analysis_classifiers}. 

We used the python implementation of those classifiers available in the scikit-learn library \cite{scikit-learn} and train them using the training parts of the datasets. We calibrate the probability outputs of the enrichment functions using appropriate calibration mechanisms. We used the isotonic regression model \cite{Zadrozny:2002:TCS:775047.775151} to calibrate the Gaussian Naive-Bayes classifier based enrichment functions and the Platt's sigmoid model \cite{Platt99probabilisticoutputs} to calibrate the remaining enrichment functions.

%In the validation phase, we measure the cost and quality of each tagging function by running it on a sample validation dataset. The quality of a tagging function $f^i_j$ (i.e., $q(f^i_j)$) is measured by calculating the area under the ROC curve \cite{BRADLEY19971145}. The cost of $f^i_j$ is measured by the average execution time of the tagging function per object in the validation dataset. Furthermore, in this phase, we generate the decision table as explained in Section \ref{triple generation}.

\vspace{0.1cm}
\noindent
\textbf{Preprocessing Step.} We execute a feature extraction code (Histogram of Oriented Gradients features from images and related keywords extraction from tweets) on all the objects in the dataset prior to the arrival of the queries. Note that this preprocessing step mimics the work that would be done on streaming data upon its ingestion.

\vspace{0.1cm}
\noindent
\textbf{Parameter Learning.} We discuss below how the parameters of PIQUE are learned.

\vspace{0.1cm}
\noindent

\begin{itemize}[leftmargin=*]
\item \textbf{Enrichment Function Quality and Cost.}
 The quality and cost of enrichment functions are learned using the validation parts of the datasets as follows. For each function $f^i_j$, we execute it on all objects in the validation part and store the output probability and execution cost of each function call. We then use those output probabilities to plot the ROC curve \cite{BRADLEY19971145} of $f^i_j$  and set $q^i_j$ to the area under this curve. The $c^i_j$ value is set to be the average execution time of $f^i_j$.

\item \textbf{Decision Table.}
%Given a range of uncertainty values $(a, b]$, we use the state, imprecise feature, tag, and uncertainty values of the objects in the validation dataset as {\em meta-features} and learn the function that is expected to provide the highest uncertainty reduction among the other remaining functions. (This is similar to the  technique used in the META-DES framework \cite{META-DES}.) Such a function is stored as the next function in the decision table. Then, for the chosen function, we learn the average amount of uncertainty reduction when this function is executed on those objects. This value is stored as the $\Delta$ uncertainty value in the decision table. Note that, for learning this table we do not require the ground truth information of the data.
%We use the probability outputs of the tagging functions for generating this table. 
Note that the enrichment functions were run on the objects of the validation part in different orders to ensure that each state (the third column of Table~\ref{DecisionTable}) is associated with a number of objects. 
Given a state and a tag value $t_j \in T_i$, we group the objects of that state into buckets depending on their uncertainty values such that each bucket corresponds to a single range (the fourth column of Table~\ref{DecisionTable}). Next, for each range, we first determine which enrichment function from the remaining functions reduces the uncertainty value of the bucket objects the most. This function is stored in the fifth column of Table~\ref{DecisionTable}. Finally, we set the $\Delta$ uncertainty of that range to the average uncertainty reduction obtained from running that function on all objects in that bucket. This value is stored in the sixth column of Table~\ref{DecisionTable}. This process is then repeated for each pair of a state and a tag value.

\item \textbf{Combine Function.} The combine function $M_i$ of each tag type $T_i$ is implemented using a weighted summation model where the weight of a enrichment function $f^i_j$ is set to the quality of that function (i.e., $q^i_j$) divided by the total quality of all the enrichment functions in $F_i$.
\end{itemize}

Note that among all these parameters, learning only the quality of the enrichment functions requires labeled validation datasets. The values of the other parameters can be learned online during the first few epochs of the execution and can be continuously adjusted as PIQUE proceeds forward.

\setlength\tabcolsep{1.5pt} 
 \begin{table}[t]
 \setlength\extrarowheight{2pt}
\footnotesize
\centering
\begin{center}
 \begin{tabular}{|>{\centering\arraybackslash}p{13mm}|>{\centering\arraybackslash}p{16mm}|>{\centering\arraybackslash}p{12mm}|>{\centering\arraybackslash}p{12mm}|>{\centering\arraybackslash}p{12mm}|>{\centering\arraybackslash}p{13mm}|}
 \hline
  \textbf{Dataset}  & 
  \textbf{Selectivity} &
  \textbf{(a)}   & \textbf{(b)}  & \textbf{(c)}  & \textbf{Total} \\ 

 \hline
 \multirow{4}{*}{\begin{turn}{90} MUCT \end{turn}} & 2.5\% & 0.00005 & 0.165 &0.00006 & 0.16512 \\ 
     & 5\% & 0.00009 & 0.481 & 0.00011 & 0.48105 \\
     & 7.5\% & 0.00013 & 0.630 & 0.00027 & 0.63108 \\
     & 10\% & 0.00019 & 0.860 & 0.00032 & 0.86176 \\
 \hline
  \multirow{4}{*}{\begin{turn}{90} Twitter \end{turn}} & 0.1\% & 0.05230 & 0.433 & 0.00205 & 0.48698 \\ 
     & 0.2\% & 0.05337 & 0.923 & 0.00388 & 0.97721 \\
     & 0.4\% & 0.05403 & 1.729 & 0.00856 & 1.79193 \\
     & 1\% & 0.05816 & 3.965 & 0.06007 & 4.08398 \\
 \hline
 \multirow{4}{*}{\begin{turn}{90} Multi-PIE \end{turn} } & 0.05\% & 0.00022 & 1.850 & 0.00033 & 1.85008 \\ 
      & 0.1\% & 0.00039 & 2.283 & 0.00082 &  2.28424 \\
      & 0.15\% & 0.00055 & 3.776 & 0.00098 & 3.77755\\
      & 0.2\% & 0.00095 & 5.547 & 0.00110 & 5.54870 \\
 \hline
 \end{tabular}
 \vspace{-0.6em}
\caption{Time taken in the three sub-steps of the initialization step (in seconds). Column \textbf{(a)} shows the object load time, column \textbf{(b)} shows the initial function evaluation time and column \textbf{(c)} shows the data structure creation time.}\label{table:Initialization}
\end{center}
\end{table}
\setlength\tabcolsep{6pt}

\vspace{0.1cm}
\noindent
\textbf{Approaches.}
We compare our approach with three baseline approaches. We used the in-memory versions of these approaches in all experiments except for Experiment 8 wherein the disk-based versions are used.

%First we have implemented four baseline algorithms and compared the performances among them. Based on their performances, we have chosen two of the best  baseline algorithms and compared our approach with them. We have implemented these algorithms in python. %We have performed these experiments using a machine with 2.2 GHz Intel Core i7 processor with 4 cores. The system has 16 GB memory and 256GB SSD running macOS.

%We have compared between following approaches:

% \vspace{0.2cm}
% \noindent
% \textbf{Function Based Approach.}
%[leftmargin=*]
\begin{itemize}[leftmargin=*]
%\vspace{-0.1cm}
\item \textbf{Function-Based Ordered-Object Approach -- Baseline 1}: An enrichment function is chosen first and then executed on all the objects. The tagging functions are chosen in the decreasing order of their ($q^i_j/c^i_j$) values. Objects were chosen based on their expression satisfiability probability values at the beginning of execution, starting with the object with highest expression satisfiability probability value.

\item \textbf{Object-Based Ordered-Function Approach -- Baseline 2}: We choose an object \emph{$o_i$} first, iterate over all predicates in $Q$, and for each predicate, execute all the corresponding enrichment functions on that object. The objects were chosen based on their expression satisfiability probability values, starting with the object with the highest expression satisfiability probability value.
 
\item \textbf{Random Epoch-Based Approach -- Baseline 3}: The same as our approach with the exception that the execution plan is generated randomly. That is, at the beginning of each epoch, this approach randomly chooses a set of objects and for each of those objects, constructs one triple that consists of the object, a randomly selected predicate, and an enrichment function chosen randomly from the set of enrichment functions that have been applied on the object.
 \vspace{-0.2cm}
\end{itemize}
\noindent

%Note that Baseline 1 and Baseline 2 are two online approaches that decide upon the order in which triples are executed at the beginning of the execution time. In contrast, our approach adaptively changes the execution plan over time based on the progress of the algorithm. 

%Furthermore, the baseline algorithms are incremental approaches as explained in Section \ref{sect:introduction}.

\vspace{0.2cm}
\noindent
\textbf{Initialization Step.} At the beginning of the query execution time, each of the four approaches performs an initialization step that consists of three sub-steps: 1)~Load all the required objects in memory; 2)~Execute a seed enrichment function $f^i_j$ per each tag type $T_i$ on the objects, where $f^i_j$ is chosen to be the function that has the highest $q^i_j/c^i_j$ value among all functions in $F_i$: and 3)~The following data structures of the objects are created: a hash map for the state values, a hash map for the predicate probability values, and a list for the uncertainty values. Note that, in the plots presented in the following sections, we have omitted the time taken by this initialization step as it is  the same for all approaches. We report this time value, however, for the in-memory versions of the approaches in Table \ref{table:Initialization}. 

\begin{figure}[t]
\centering
%\vspace{-0.4cm}
\begin{tikzpicture}
   \node(img1){\includegraphics[trim={1.1cm 0.6cm 1cm 1.32cm},clip,width=0.20\textwidth]{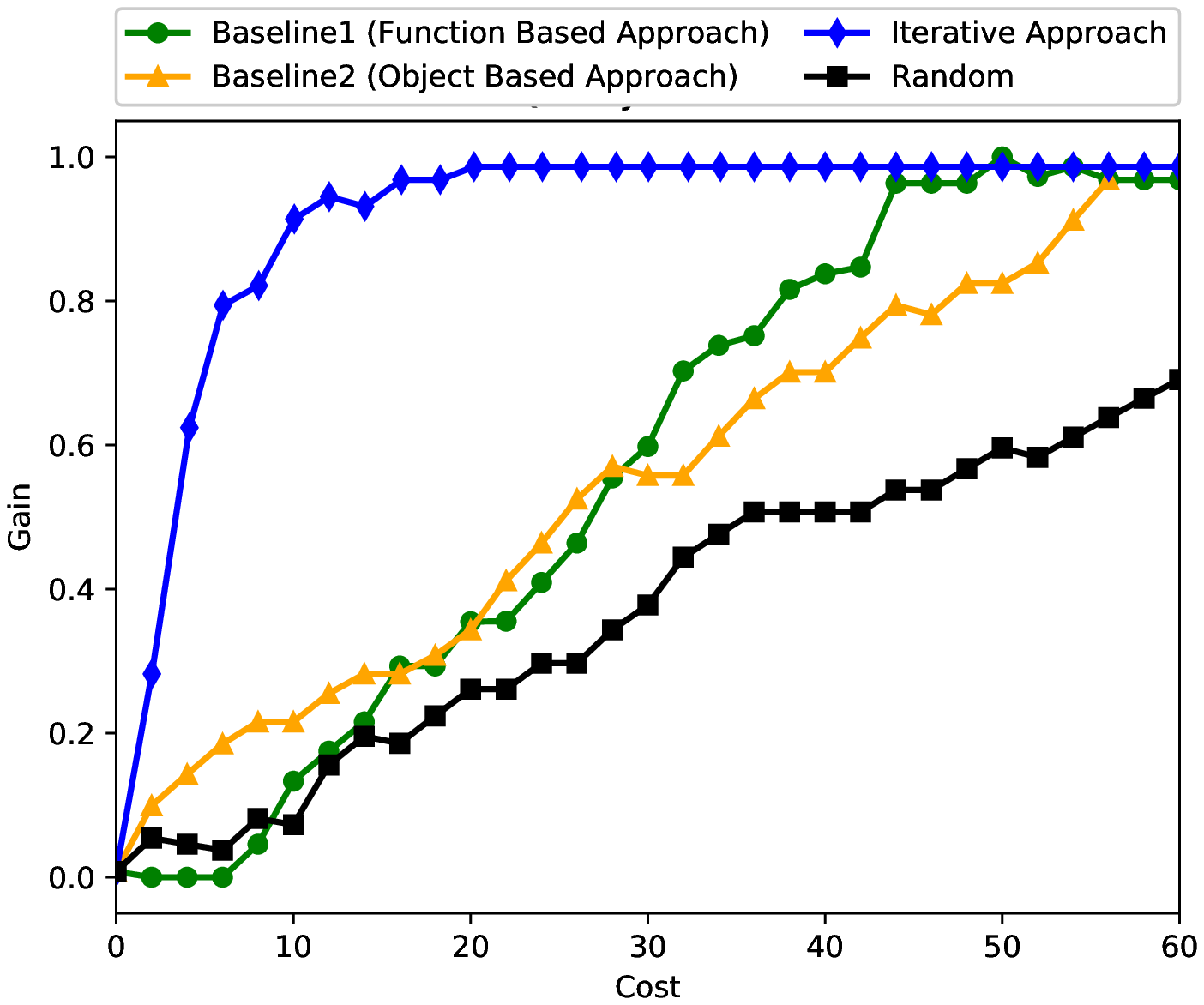}};;
  \node[below=of img1,node distance=0cm, yshift=1.2cm] (x-label3){\small Time (Seconds)};
  \node[left=of img1, node distance=0cm, rotate=90, anchor=center,yshift=-1cm] {\small Quality (Gain)};
   \node[right=of img1,xshift=-1cm] (img2)  {\includegraphics[trim={1.1cm 0.6cm 1cm 1.32cm},clip,width=0.20\textwidth]{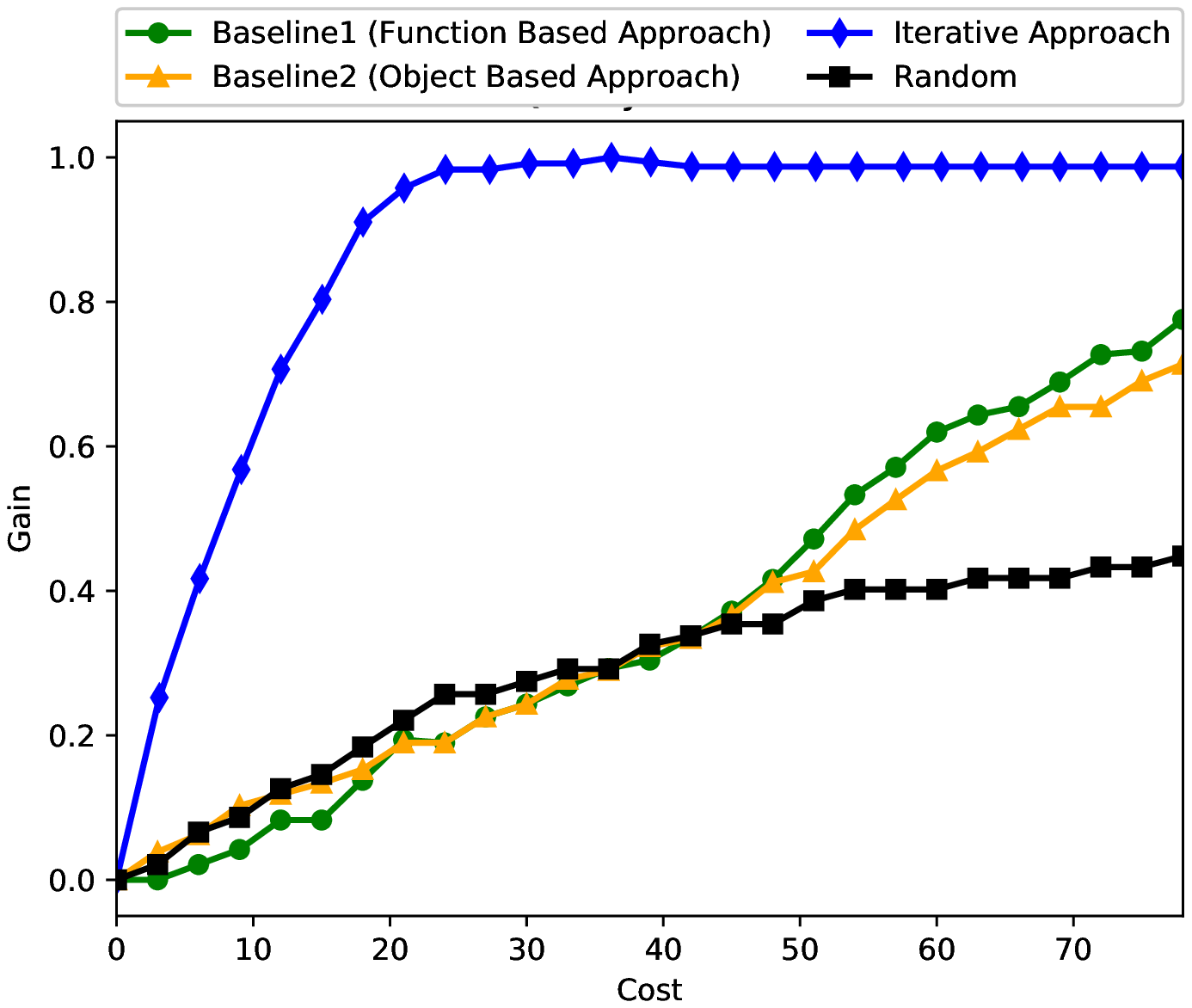}};
  \node[below=of img2, node distance=0cm, yshift=1.2cm](x-label4) {\small Time (Seconds)};
  \node[left=of img2, node distance=0cm, rotate=90, anchor=center,yshift=-1cm] {\small Quality (Gain)};
 \node[below=of x-label3,  node distance=0cm, yshift=1.1cm]{\small (a) 7.5\% selectivity};
  \node[below=of x-label4,  node distance=0cm, yshift=1.1cm]{\small (b) 10\% selectivity};
  \node[anchor=center, node distance=0cm,yshift= 0cm, xshift =-1cm, above = of img1](legend1){\color{blue} \small {$\blacklozenge$}  PIQUE};
    \node[anchor=center, node distance=0cm, right = of legend1](legend2){ \color{ao(english)} \small  { $\bullet$} Baseline 1
    };
   \node[node distance=0cm, anchor=center, right = of legend2] (legend3)  {\color{amber} \small  { $\blacktriangle$} Baseline 2 };
    \node[node distance=0cm, anchor=center, right = of legend3] (legend4)  {\small { $\blacksquare$} Baseline 3};

\end{tikzpicture}
\vspace{-0.4cm}
\caption{Comparison of gain (MUCT dataset) for Q1.}\label{fig:VariationOfGainExperimentMuct}
%\vspace{-0.2cm}
\end{figure}
%(0.54, 0.56),(0.33, 0.23)(0.30, 0.22)

\vspace{0.1cm}
\noindent
\textbf{Quality Metrics.}
We consider two metrics for comparing the various approaches: the \emph{$F_1$} measure and the \emph{gain} of the answer set. The gain at a time instant $v_i$ is defined as $x/y$ where $x$ is the $F_1$ value of the answer set at time $v_i$ minus the $F_1$ value of the answer set at the beginning of the query execution and $y$ is the maximum $F_1$ value that can be achieved by the approach minus the $F_1$ value of the answer set at the beginning of the query execution. We compute the $F_1$ measure of the answer sets using the available ground truth of the datasets.

%the relative $F_1$ value of the answer set at that time instant compared to the maximum $F_1$ value that can be achieved by the approach. More formally, it is computed as follows: 

%gain(t) = 
%\yat{gain equation is not clear}

%\begin{equation}
%gain(t) = \frac{F_1(t) - F_{1_{min}}}{F_{1_{max}} - F_{1_{min}}}
%\end{equation}
%where $F_{1_{min}}$ and $F_{1_{max}}$  represent respectively the minimum and maximum $F_1$ measures achieved by the answer set during the entire query execution time. \yat{dd} We measure the $F_1$ measure of the answer set using the real ground truth tags available in these datasets.

\vspace{0.1cm}
\noindent
\textbf{Progressiveness Metric.}
%Most of our experiments figures plot the quality of the answer set as a function of  the query execution time for all involved approaches. In addition to those figures,
The progressiveness of each approach is measured by the discrete sampling function of Equation~\ref{progressiveness-metric}. Following~\cite{progressive-duplicate-detection}, in each experiment, we set the time instant value $v_{|V|}$ to the minimum execution time of the involved approaches and then divide the range $[0, v_{|V|}]$ into {\em ten} equal-sized intervals. For example, in Figure~\ref{fig:VariationOfGainExperimentMuct} (a), the total execution time of our approach and the three baseline approaches are $60$, $65$, $66$, and $68$ seconds, respectively. In this case, the time vector is set to $V=\{v_0 = 0$ seconds$, v_1 = 6$ seconds$, v_2 = 12$ seconds$, \dots, v_{10} = 60$ seconds$\}$. We also used this following most widely used weighting function: $W(v_i) =max(1- ((v_i-1)/v_{|V|}),0)$. 

\setlength\tabcolsep{1.5pt} 
\begin{table}[t]\label{QueryTable}
\footnotesize
\centering
\begin{center}
 \begin{tabular}{|>{\centering\arraybackslash}p{8mm}|>{\centering\arraybackslash}p{10mm}|>{\centering\arraybackslash}p{7mm}|>{\centering\arraybackslash}p{7mm}|>{\centering\arraybackslash}p{7mm}|>{\centering\arraybackslash}p{8mm}|>{\centering\arraybackslash}p{10mm}|>{\centering\arraybackslash}p{7mm}|>{\centering\arraybackslash}p{7mm}|>{\centering\arraybackslash}p{7mm}|} 
  \hline
 \textbf{Fig.}  & \textbf{PIQUE} & \textbf{B1} & \textbf{B2}  & \textbf{B3}  & \textbf{Fig.}  & \textbf{PIQUE} & \textbf{B1} & \textbf{B2}  & \textbf{B3}\\
 \hline
 \ref{fig:VariationOfGainExperimentMuct} (a) & 0.86 & 0.48 & 0.47 & 0.42 & \ref{fig:VariationOfGainExperimentMuct} (b) & 0.90 & 0.48 &0.47 & 0.41 \\

  \hline
 \ref{fig:f1measureVariationExperiment} (a) & 0.54 & 0.33 & 0.30 & 0.23 &\ref{fig:f1measureVariationExperiment} (b) & 0.56 & 0.23 & 0.22 & 0.20 \\
 \hline
 \ref{fig:VariationOfGainExperimentMultiPie} (a) & 0.95 & 0.48 & 0.47 & 0.36 & \ref{fig:VariationOfGainExperimentMultiPie} (b)  & 0.93  & 0.47 & 0.25 & 0.09  \\
 \hline
  \ref{fig:VariationOfGainExperimentTwitter} (a) & 0.63 & 0.45 & 0.23 & 0.28 &  \ref{fig:VariationOfGainExperimentTwitter} (b)  & 0.62 & 0.41 & 0.22 &  0.29 \\
 \hline
 \ref{fig:VariationOfQuality(MultiFeature)} (a) & 0.91 & 0.41 &  0.25 & 0.09  & \ref{fig:VariationOfQuality(MultiFeature)} (b) & 0.88 & 0.39 &  0.26 & 0.09 \\
 \hline
 \ref{fig:VariationOfQuality2(MultiFeature)} (a)  & 0.93 & 0.23 & 0.09 &  0.02 & \ref{fig:VariationOfQuality2(MultiFeature)} (b)  & 0.86 & 0.07 & 0.07 &  0.02\\
 \hline
 \ref{fig:CachingVariationOfF1Experiment} (a) & 0.41 & 0.08 & 0.06 & 0.08 & \ref{fig:CachingVariationOfF1Experiment} (b) & 0.23 & 0.09 & 0.07 & 0.05 \\
 \hline
 \ref{fig:VariationOfQualityDiskBased} (a) & 0.91 & 0.17 & 0.10 &  0.08 & \ref{fig:VariationOfQualityDiskBased} (b) & 0.92 & 0.17 & 0.13 & 0.08 \\
 \hline
 \end{tabular}
\vspace{-0.15cm}
\caption{Progressiveness Scores.}\label{ProgressiveScoreTable}
\label{table:Query}
\end{center}
\vspace{-0.3cm}
\end{table}
\setlength\tabcolsep{6pt}

\begin{comment}

to generate a predicate probability ... 
For combining the outputs of the tagging functions of a particular tag type, we use the weighted summation model (function $M_i$) of combining the outputs. In the function $M_i$, the weight assigned to the tagging function $f^i_j$ is as follows: 
%$ Weight(M_i, f^i_j) = \frac{q(f^i_j)}{\sum\limits_{k=1}^{|F_i|} q(f^i_j)}$.

\vspace{-0.6cm}
\begin{equation}
    Weight(M_i, f^i_j) = \frac{q^i_j}{\sum\limits_{k=1}^{|F_i|} q^i_j}
%\vspace{-0.4cm}
\end{equation}
\end{comment}

%%%%%%%%%  Online Version %%%%%%%%%%%%%%%%%%%%%%%%

\subsection{Experimental Results}\label{sect:experimental_result}
\noindent
\textbf{Experiment 1 (Trade-off between Quality and Cost).}
We compare PIQUE with the baseline approaches. The epoch time for PIQUE and Baseline 3 is set according to the experiments that will be presented in Experiment 2. The results for the MUCT dataset {\em w.r.t.} the gain and $F_1$-measure of the answer set are shown in Figure~\ref{fig:VariationOfGainExperimentMuct} and Figure~\ref{fig:f1measureVariationExperiment}, respectively. The results for the Multi-Pie and Twitter datasets {\em w.r.t.} the gain metric are shown in Figure~\ref{fig:VariationOfGainExperimentMultiPie} and Figure~\ref{fig:VariationOfGainExperimentTwitter}, respectively\footnote{{Due to space limitations, we only include the plots for two selectivity values of the queries in Figures~\ref{fig:f1measureVariationExperiment}, \ref{fig:VariationOfGainExperimentMultiPie}, and~\ref{fig:VariationOfGainExperimentTwitter}.}}. 
The progressiveness scores (Equation~\ref{progressiveness-metric}) of all approaches involved in these figures and all other figures are presented in Table \ref{ProgressiveScoreTable}.

PIQUE outperforms the baseline approaches significantly for the different selectivity values in all three datasets. In PIQUE, the answer set achieves a very high quality within the first few epochs of the execution. (Note that each epoch is noted by a point in each plot.) Figure~\ref{fig:VariationOfGainExperimentMuct} shows that PIQUE achieves a very high quality gain (e.g., 0.98) within the first 28 and 36 seconds of the execution for the selectivity values $7.5\%$ and $10\%$, respectively. PIQUE achieves a high rate of quality improvement in each epoch. 
For example, in the second epoch of the query execution in Figure~\ref{fig:VariationOfGainExperimentMuct} (a), PIQUE improves the quality of the answer set from $0.25$ to $0.63$ whereas Baseline 1 improves it from $0.10$ to $0.16$, Baseline 2 improves it from $0.01$ to $0.02$, and Baseline 3 improves it from $0.04$ to $0.05$.
 
The reason behind this performance gap is that unlike the baseline approaches, PIQUE follows an adaptive strategy that monitors and reassesses the progress at the beginning of each epoch to identify and execute the triples that are expected to yield the highest improvement in quality.

%%%%% MUCT Dataset F1 measure variation %%%%

\begin{figure}[t]
\centering
\begin{tikzpicture}
  \node (img1)  {\includegraphics[trim={1.1cm 0.6cm 1cm 1.32cm},clip,width=0.20\textwidth]{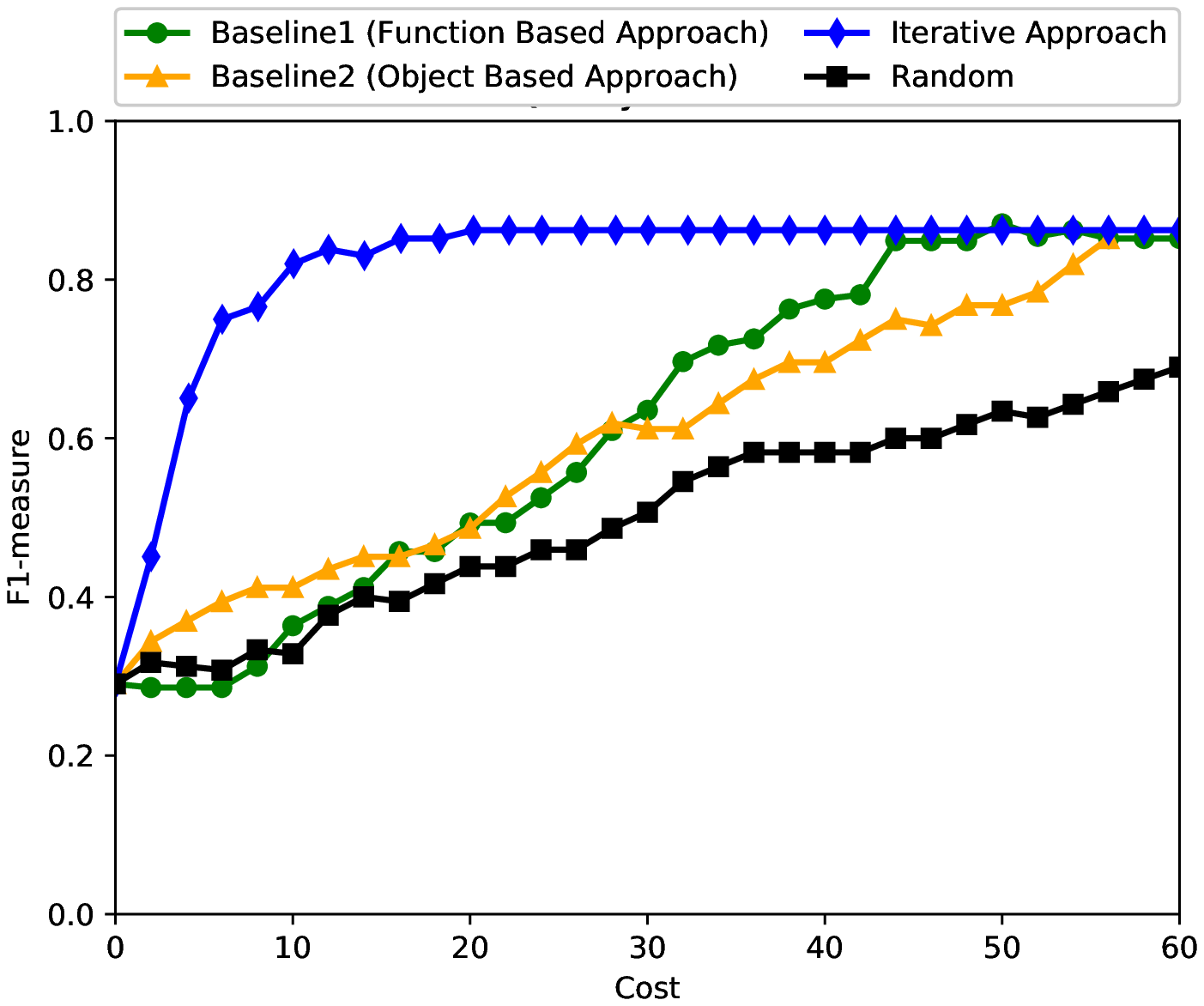}};
  
  \node[below=of img1, node distance=0cm, yshift=1.2cm,font=\color{black}](x-label1) {\small Time (Seconds)};
  \node[left=of img1, node distance=0cm, rotate=90, anchor=center,yshift=-1cm,font=\color{black}] {\small $F_1$  measure};
  \hfill;
  
  \node[right=of img1,xshift=-1cm] (img2)  {\includegraphics[trim={1.1cm 0.6cm 1cm 1.32cm},clip,width=0.20\textwidth]{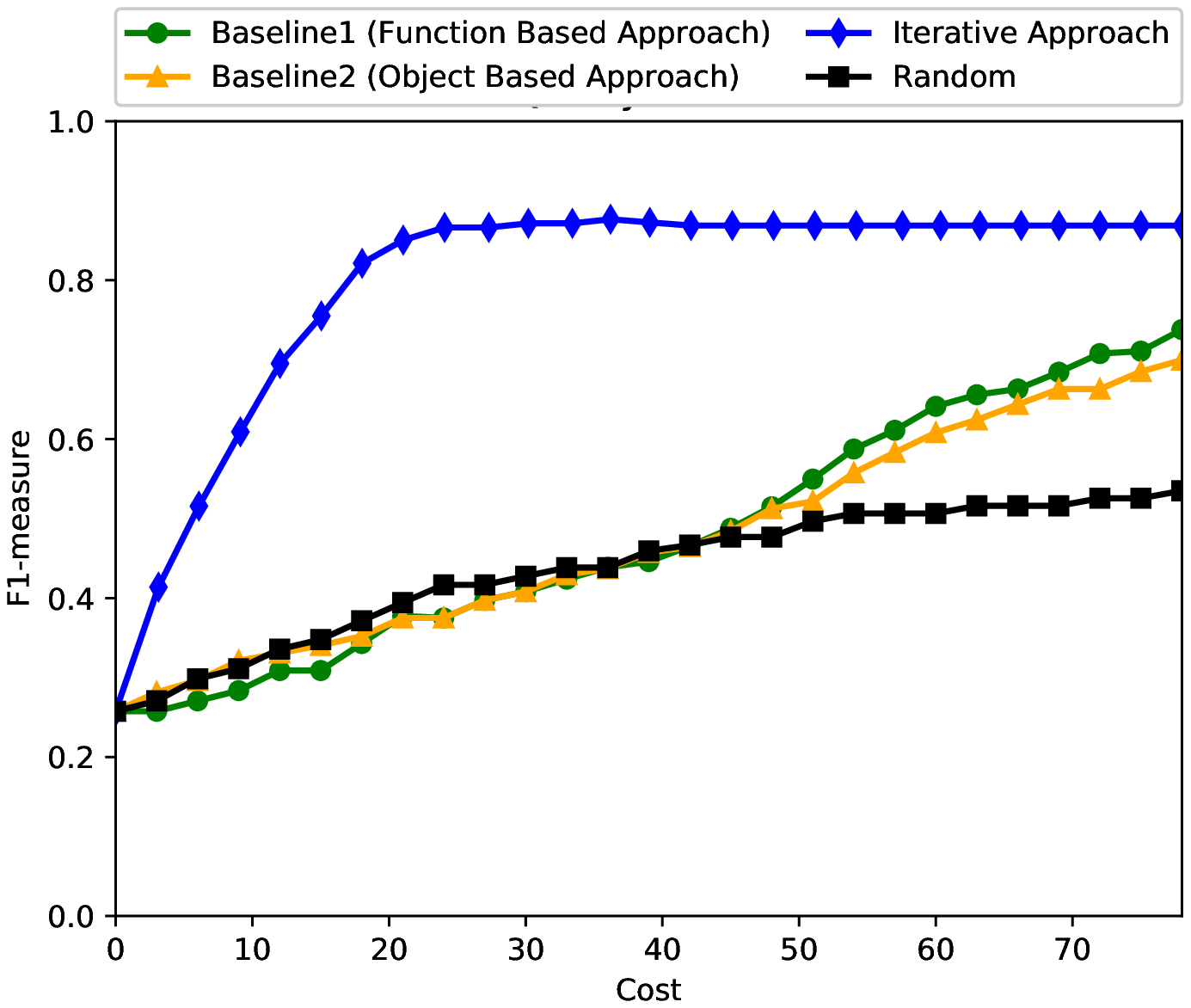}};
  
  \node[below=of img2, node distance=0cm, yshift=1.2cm] (x-label2){ \small Time (Seconds)};
  
  \node[left=of img2, node distance=0cm, rotate=90, anchor=center,yshift=-1cm] (legend1){\small $F_1$  measure};  
  \node[anchor=center, node distance=0cm,yshift= -0.1cm, xshift = -1cm , above = of img1](legend1){\color{blue} \small  {$\blacklozenge$}  PIQUE };
    \node[anchor=center, node distance=0cm, right = of legend1](legend2){\color{ao(english)} \small { $\bullet$} Baseline1 };
    
   \node[node distance=0cm, anchor=center, right = of legend2] (legend3)  {\color{amber} \small { $\blacktriangle$} Baseline2};
   \node[node distance=0cm, anchor=center, right = of legend3] (legend4)  {\small { $\blacksquare$} Baseline3};
   
   \node[below=of x-label1,  node distance=0cm, yshift=1.2cm]{\small (a) 7.5\% selectivity};
  \node[below=of x-label2,  node distance=0cm, yshift=1.2cm,xshift=-0.2cm]{\small (b) 10\% selectivity};

\end{tikzpicture}
\vspace{-0.4cm}
\caption{Comparison of $F_1$ measure (MUCT) for Q1.}
\label{fig:f1measureVariationExperiment}
\end{figure}

%%%% Multi Pie dataset gain variation plot %%%%

\begin{figure}[t]
\centering
\vspace{-0.4cm}
\begin{tikzpicture}
  %\node (img1)  {\includegraphics[trim={1.1cm 0.6cm 1cm 1.32cm},clip,width=0.225\textwidth]{}};
  \node (img1)  {\includegraphics[trim={1.1cm 0.6cm 1cm 1.32cm},clip,width=0.20\textwidth]{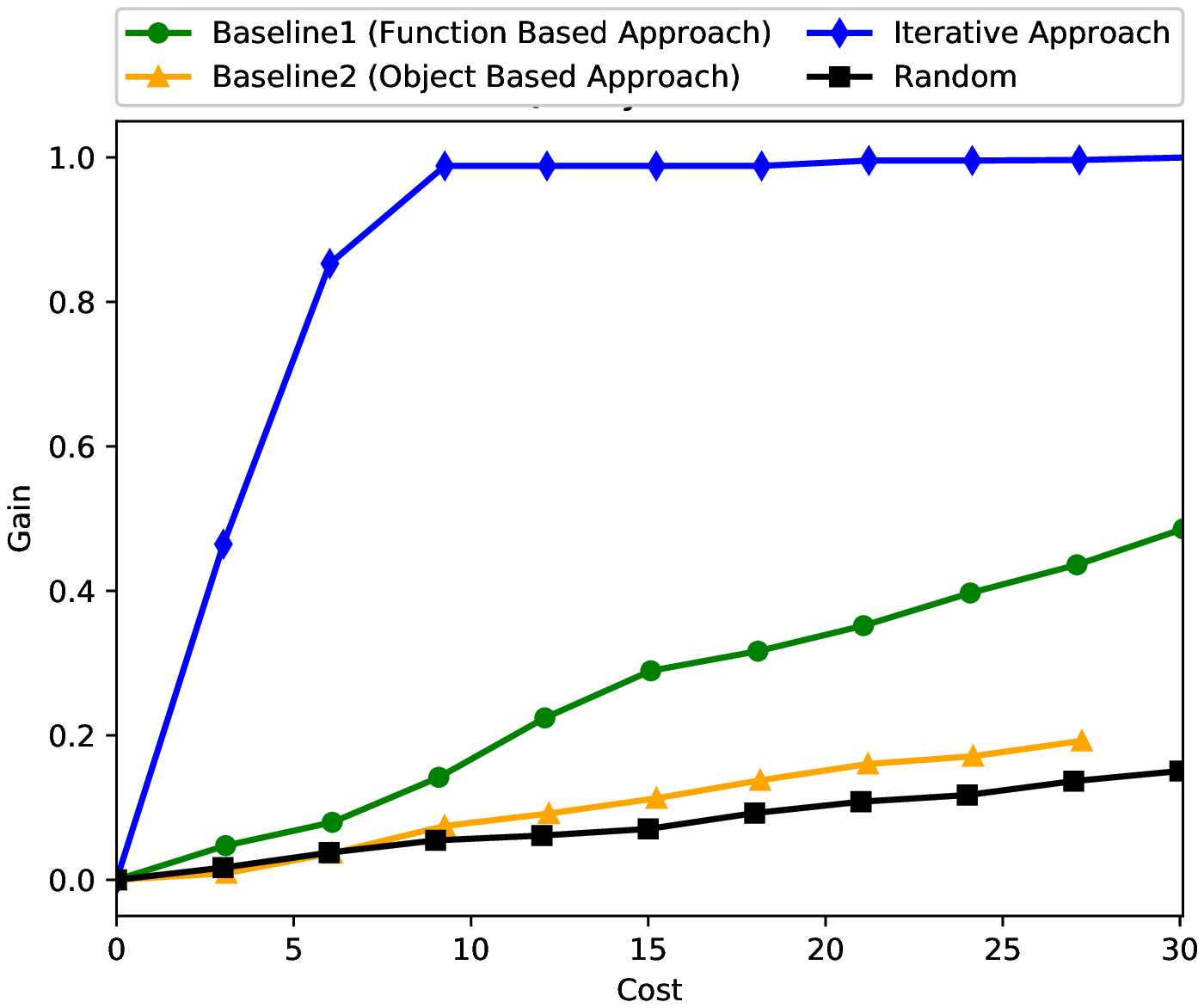}};
  
  \node[below=of img1, node distance=0cm, yshift=1.2cm,font=\color{black}] {\small Time (Seconds)};
  \node[left=of img1, node distance=0cm, rotate=90, anchor=center,yshift=-0.9cm,font=\color{black}] {\small Quality (Gain)};
  \hfill;
  %\node[right=of img1,xshift=-1cm] (img2)  {\includegraphics[trim={1.1cm 0.6cm 1cm 1.32cm},clip,width=0.225\textwidth]{}};
  \node[right=of img1,xshift=-1cm] (img2)  {\includegraphics[trim={1.1cm 0.6cm 1cm 1.32cm},clip,width=0.20\textwidth]{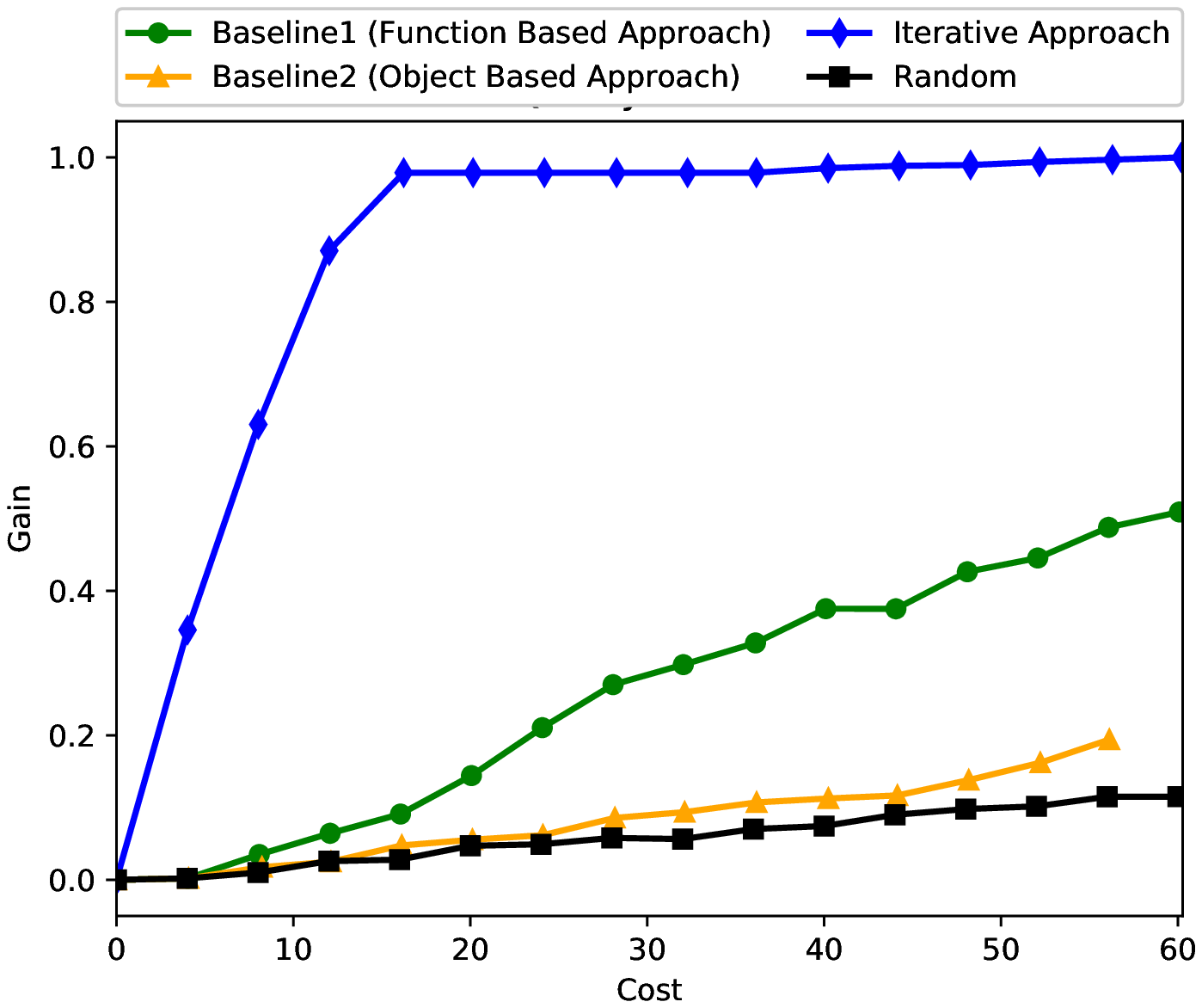}};
  
  \node[below=of img2, node distance=0cm, yshift=1.2cm] {\small Time (Seconds)};
  \node[left=of img2, node distance=0cm, rotate=90, anchor=center,yshift=-1cm] {\small Quality (Gain)};  
  \node[anchor=center, node distance=0cm, xshift = -1cm, yshift= -0.1cm, above = of img1](legend1){\color{blue} \small {$\blacklozenge$} PIQUE };
    \node[anchor=center, node distance=0cm, right = of legend1](legend2){\color{ao(english)} \small  { $\bullet$} Baseline 1 };
    
   \node[node distance=0cm, anchor=center, right = of legend2] (legend3)  {\color{amber} \small  { $\blacktriangle$}  Baseline 2};
  \node[below=of x-label1,  node distance=0cm, yshift=1.2cm]{\small (a) 0.1\% selectivity};
  \node[below=of x-label2,  node distance=0cm, yshift=1.2cm,xshift = 0.4cm]{\small (b) 0.2\% selectivity};
  \node[node distance=0cm, anchor=center, right = of legend3] (legend4)  {\small { $\blacksquare$} Baseline 3};
        
\end{tikzpicture}
\caption{Comparison of gain (Multi-PIE dataset) for Q1.}
\label{fig:VariationOfGainExperimentMultiPie}
%\vspace{-0.1cm}
\end{figure}

%%%% Tweet dataset gain variation plot %%%%

\begin{figure}[t]
\centering
\begin{tikzpicture}
  \node (img1)  {\includegraphics[trim={1.1cm 0.6cm 1cm 1.32cm},clip,width=0.20\textwidth]{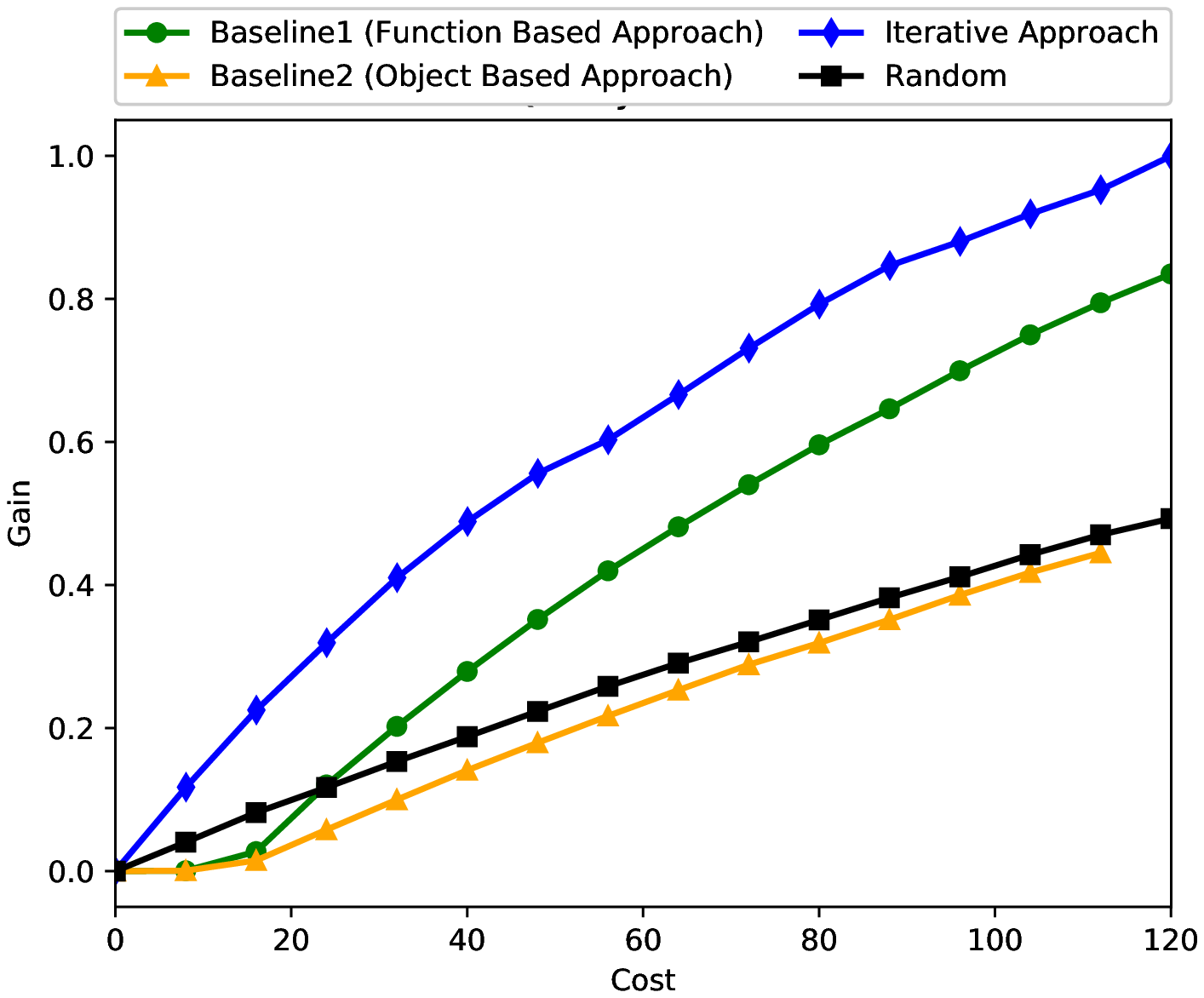}};
  %\node (img1)  {\includegraphics[trim={1.1cm 0.6cm 1cm 1.32cm},clip,width=0.225\textwidth]{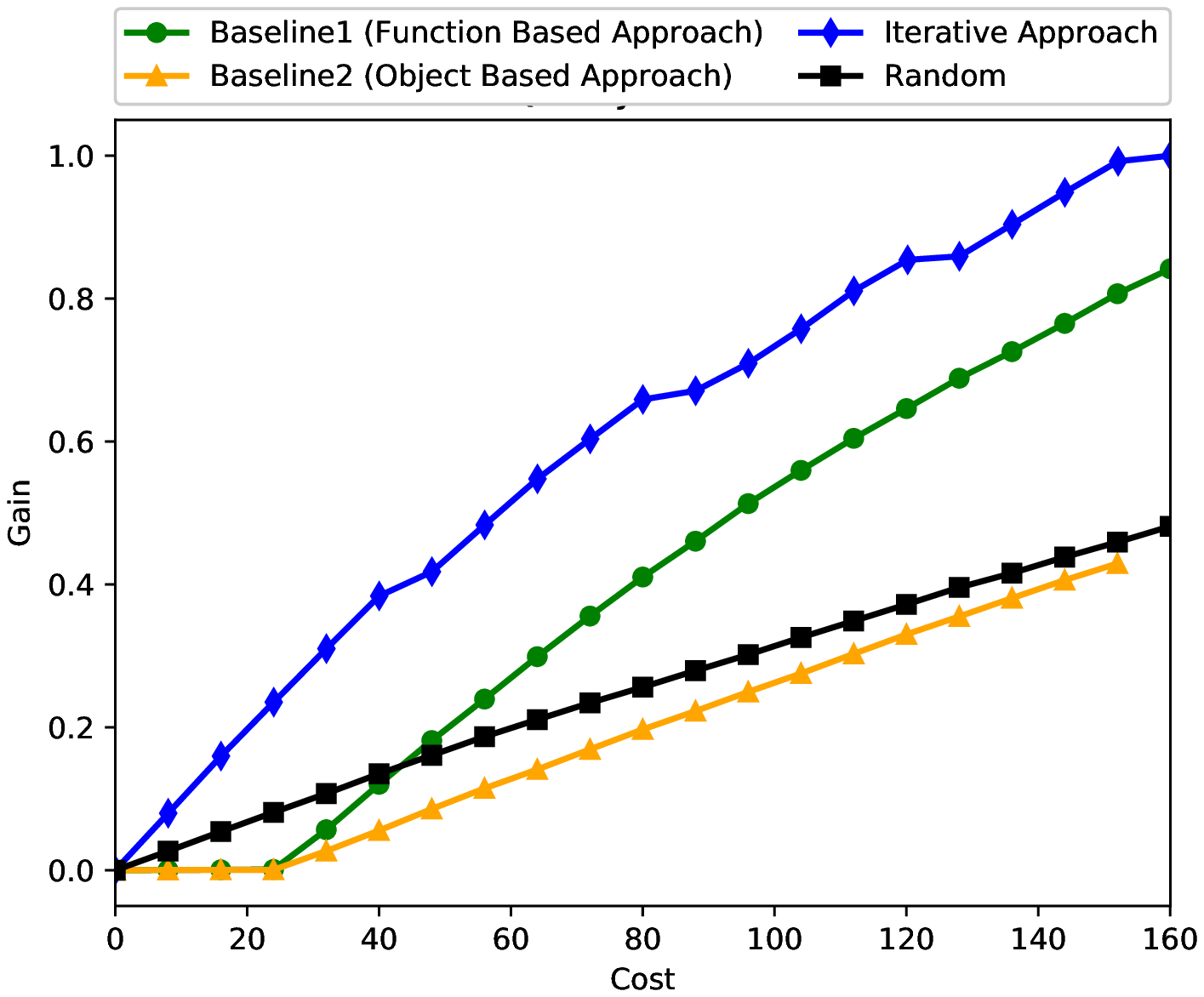}};
  
  \node[below=of img1, node distance=0cm, yshift=1.2cm,font=\color{black}] {\small Time (Seconds)};
  \node[left=of img1, node distance=0cm, rotate=90, anchor=center,yshift=-0.9cm,font=\color{black}] {\small Quality (Gain)};
  \hfill;
  \node[right=of img1,xshift=-1cm] (img2)  {\includegraphics[trim={1.1cm 0.6cm 1cm 1.32cm},clip,width=0.20\textwidth]{Results_with_baseline1/PlotTweet_snf_Gain_6400_objects.eps}};
  
  %\node[right=of img1,xshift=-1cm] (img2)  {\includegraphics[trim={1.1cm 0.6cm 1cm 1.32cm},clip,width=0.225\textwidth]{Results_with_baseline1/PlotTweet_snf_Gain_6400_objects.eps}};

  \node[below=of img2, node distance=0cm, yshift=1.2cm] {\small Time (Seconds)};
  \node[left=of img2, node distance=0cm, rotate=90, anchor=center,yshift=-1cm] {\small Quality (Gain)};
  
  \node[anchor=center, node distance=0cm,yshift= -0.1cm, xshift = -1cm, above = of img1](legend1){\color{blue} \small {$\blacklozenge$}  PIQUE };
    \node[anchor=center, node distance=0cm, right = of legend1](legend2){ \color{ao(english)} \small  { $\bullet$} Baseline 1 };
    
   \node[node distance=0cm, anchor=center, right = of legend2] (legend3)  {\color{amber} \small  { $\blacktriangle$}  Baseline 2};
       
   \node[below=of x-label1,  node distance=0cm, yshift=1.1cm]{\small (a) 0.2\% selectivity};
  \node[below=of x-label2,  node distance=0cm, yshift=1.1cm]{\small (b) 0.4\% selectivity};
  \node[node distance=0cm, anchor=center, right = of legend3] (legend4)  {\small { $\blacksquare$} Baseline 3};
\end{tikzpicture}
\vspace{-0.4cm}
\caption{Comparison of gain (Twitter dataset) for Q2.}
\label{fig:VariationOfGainExperimentTwitter}
\end{figure}

\vspace{0.1cm}
\noindent
\textbf{Experiment 2 (Epoch Size).}\label{sect:epoch_size}  The objective of this experiment is to determine the appropriate epoch size of PIQUE for different selectivity values. (Note that Baseline 3 always uses the same epoch size set for PIQUE.) Note that, if epoch size is small, then PIQUE performs the plan generation phase more frequently as compared to when the epoch size is large. 

Given a selectivity value $x$, we first determine the average completion time of PIQUE. To this end, we choose a set of different epoch sizes $\{1, 2,  \dots, 10\}$ seconds, run PIQUE till completion using each of those sizes, and then set $T_q(x)$ (which is the average  completion time of PIQUE when the selectivity value is $x$) to $min$($T_q(x, 1)$, $T_q(x, 2)$, $\dots$, $T_q(x, 10)$), where $T_q(x,i)$ is the completion time of PIQUE when the selectivity value is $x$ and the epoch size is $i$ seconds. Then, we run PIQUE ten times using these epoch sizes \{1\% of $T_q(x)$, 2\% of $T_q(x)$, $\cdots$, 10\% of $T_q(x)$\}. For each run, we compute the progressiveness score using Equation~\ref{progressiveness-metric}.

\begin{table}[t]
\vspace{-0.2cm}
\centering
\small
\begin{center}
 \begin{tabular}{|c|c|c|c|} 
 \hline
 \textbf{Sel.} &  $\mathbf{T_q(x)}$ & \textbf{ Appropriate  Epoch} & \textbf{Completion}   \\
   \textbf{($\mathbf{x}$)}  &     \textbf{(Sec.)}  &   \textbf{ Size (Sec.)}  & \textbf{  Time (Sec.) } \\
 \hline
2.5\% & 26 & 1\% of $T_q(x)$=0.26 & 23 \\ 
 \hline
 5\% & 60 & 2\% of $T_q(x)$ = 1.2 & 56  \\
 \hline
7.5\% & 86 & 3\% of $T_q(x)$ = 2.58 & 80 \\
 \hline
10\% & 100 & 5\% of $T_q(x)$ = 5  & 90 \\
 \hline
\end{tabular}
\vspace{-0.2cm}
\caption{Appropriate epoch size (MUCT dataset). }\label{ProgressiveScore}
\label{table:ProgressiveScore}
\end{center}
\vspace{-0.2cm}
\end{table}

\begin{table}[t]
\vspace{-0.3cm}
  \centering
  \subfloat[][]{
 \begin{tabular}{|c|c|c|} 
 \hline
 \textbf{Sel.} & \textbf{PIQUE} & \textbf{Default} \\
 \hline
2.5\% & 0.35 & 0.33 \\ 
 \hline
 5\% & 0.44 & 0.37 \\
 \hline
7.5\% &0.46 & 0.40 \\
 \hline
10\% & 0.49 & 0.44 \\ 
 \hline
\end{tabular}
%\caption{ Progressive score (Candidate sel. strategy, MUCT).}\label{StrategyComparisonCandidateSelection}
}%
\qquad
 \subfloat[][]{
 \begin{tabular}{|c|c|c|} 
 \hline
 \textbf{Sel.} & \textbf{PIQUE} & \textbf{Default} \\
 \hline
2.5\% & 0.35 & 0.27 \\ 
 \hline
 5\% & 0.44 & 0.32 \\
 \hline
7.5\% &0.46 &  0.36 \\
 \hline
10\% & 0.49 &  0.40 \\
 \hline
\end{tabular}
%\caption{Progressive score (Benefit est. strategy, MUCT).}\label{StrategyComparisonBenefitEst}

}
\vspace{-0.2cm}
  \caption{Progressive score comparison in MUCT: (a) Candidate Selection and (b) Benefit Estimation.}%
  \label{table:StrategyCompare}%
\end{table}

Table \ref{table:ProgressiveScore} shows, for each selectivity value $x$, the completion time $T_q(x)$, the epoch size (as \% of $T_q(x)$) that generates the highest progressiveness score, and the completion time of PIQUE when run using that epoch size. We can conclude from the results that the epoch size should be higher when the selectivity is high since we need to execute more triples in the plan execution phase as compared to the case when the selectivity is low. As expected, we can see that the completion time of PIQUE increases with increase in the selectivity values.

% Table \ref{table:ProgressiveScore} shows the progressive score and the completion time of our approach for the most appropriate epoch size of each selectivity value. 

\vspace{0.1cm}
\noindent
\textbf{Experiment 3 (Plan Generation Time). }\label{sect:plan_gen_time} We measure the average time taken by the plan generation phase of PIQUE for different selectivity values. For each of those values, we set the epoch size to the value derived in Experiment 2. 

The results (see Table \ref{table:planGenDetailsSeconds}) show that the plan generation time increases with the increase in the selectivity values. This is  because the higher the selectivity value is, the higher the number of triples generated in the triple generation step. 
%Figure \ref{fig:PlanGendDetailsExp} reports the percentage of time taken by the different steps of the plan generation phase. 
Among the four steps, the benefit estimation step takes the highest amount of the plan generation time as it requires deriving the estimated uncertainty value of each triple using the decision table and then estimating the benefit metric of the triples (see Figure \ref{fig:PlanGendDetailsExp}). The triple selection step takes the least amount of the plan generation time as it is only responsible for retrieving the triples from the priority queue.

\begin{figure}[t]
\begin{tikzpicture}[scale=10]
 \node(img2)  {\includegraphics[trim={0.1cm 0.1cm 0.4cm 0cm},clip,width=0.48\textwidth]{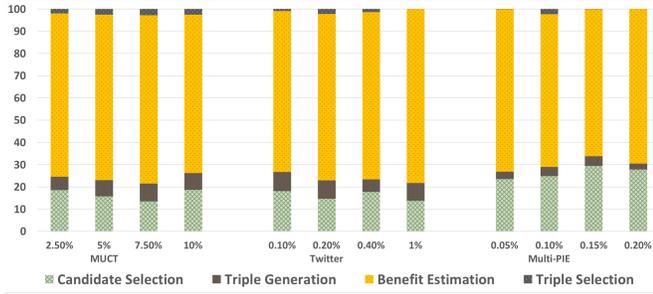}};
\end{tikzpicture}
\caption{Breakdown of the plan generation step time. 
} 
\label{fig:PlanGendDetailsExp}

\end{figure}

%\vspace{-4em}

\begin{comment}
\multicolumn{4}{c}{Multi-PIE} \\
 \hline
 Selectivity & Epoch Size & Total Time & Progressive \\
    &   (\% of $T_q$)  &   (Seconds)   &   Score       \\
 \hline
 0.05\% & 1 & 32 & 0.5907 \\
 \hline
 0.1\% & 3 & 40 & 0.6381  \\
 \hline
 0.15\% & 4 & 51 & 0.6837 \\
 \hline
 0.2\% & 4 & 60 &  0.7284 \\
 \hline
\end{comment}

\begin{table}[t]
\vspace{-0.2cm}
\centering
\begin{center}
 \begin{tabular}{|c|c|c|c|c|c|} 
 \hline
 \multicolumn{2}{|c}{\textbf{MUCT}} & \multicolumn{2}{|c}{\textbf{Twitter}} & \multicolumn{2}{|c|}{\textbf{Multi-PIE}}\\
 \hline
 \textbf{Sel.} & \textbf{Time} & \textbf{Sel.} & \textbf{Time} & \textbf{Sel.} & \textbf{Time} \\
 \hline
2.5\% & 0.131837 & 0.1\% & 0.132745 & 0.05\% & 0.059073\\ 
 \hline
 5\% & 0.192911 & 0.2\% & 0.262242 & 0.1\%  & 0.067696\\
 \hline
7.5\% &0.399636 &  0.4\% & 0.435919 & 0.15\%  & 0.07026 \\
 \hline
10\% & 0.531469 & 1\% & 1.26086 & 0.2\%  & 1.271322\\
 \hline
\end{tabular}
\vspace{-0.2cm}
\caption{Average plan generation time (in seconds).}\label{PlanGenerationTableMultiPie}
\label{table:planGenDetailsSeconds}
\end{center}
\vspace{-0.2cm}
\end{table}

\begin{figure}[t]
\centering
\vspace{-0.4cm}
\begin{tikzpicture}
  
  \node (img1)  {\includegraphics[trim={1.1cm 0.6cm 1cm 1.32cm},clip,width=0.20\textwidth]{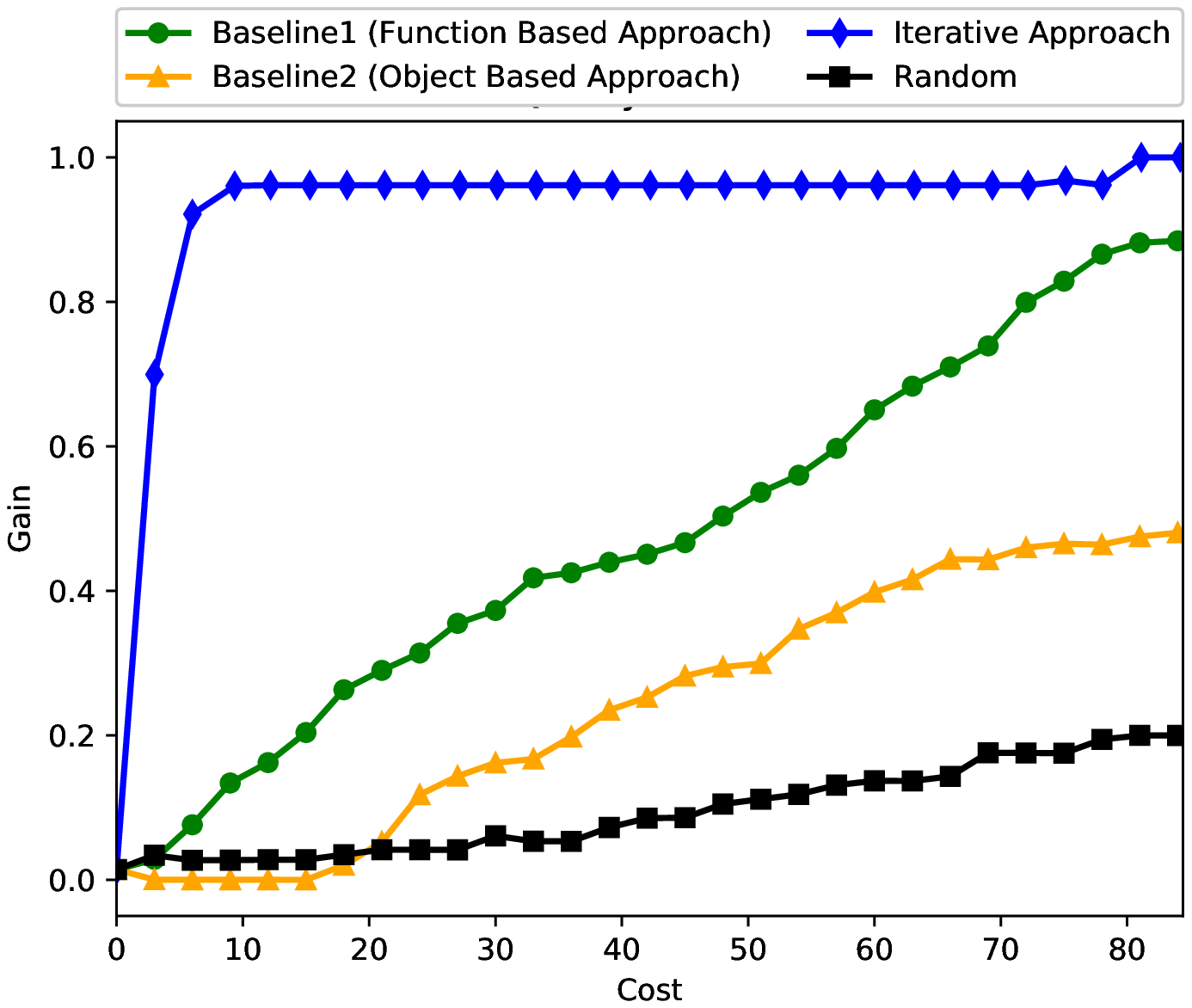}};

  \node[below=of img1, node distance=0cm, yshift=1.2cm,font=\color{black}] {\small Time (Seconds)};
  \node[left=of img1, node distance=0cm, rotate=90, anchor=center,yshift=-0.9cm,font=\color{black}] {\small Quality (Gain)};
  \hfill;
  %\node[right=of img1,xshift=-1cm] (img2)  {\includegraphics[trim={1.1cm 0.6cm 1cm 1.32cm},clip,width=0.225\textwidth]{}};
  
  \node[right=of img1,xshift=-1cm] (img2)  {\includegraphics[trim={1.1cm 0.6cm 1cm 1.32cm},clip,width= 0.20\textwidth]{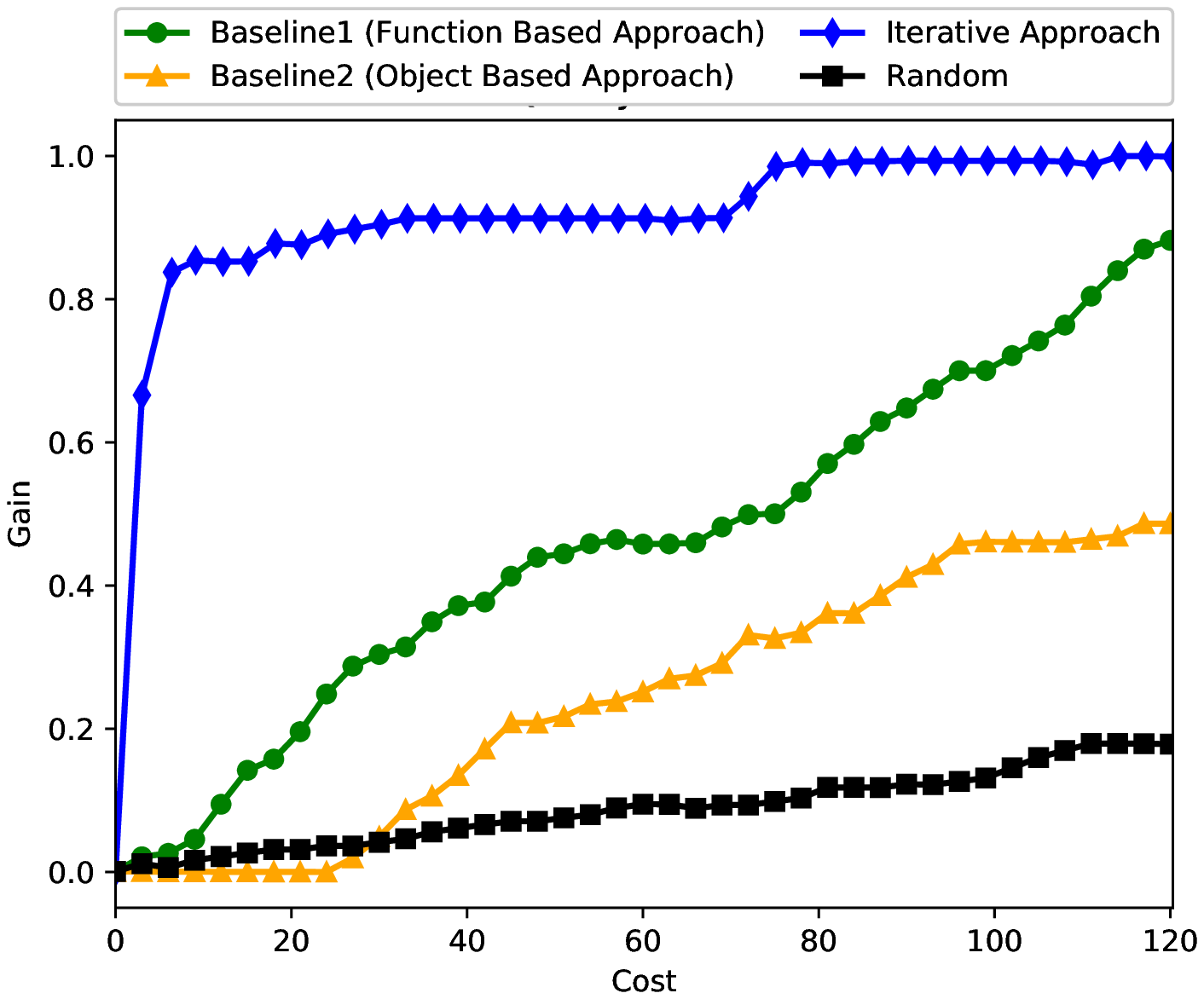}};
  \node[below=of img2, node distance=0cm, yshift=1.2cm] {\small Time (Seconds)};
  \node[left=of img2, node distance=0cm, rotate=90, anchor=center,yshift=-1cm] {\small Quality (Gain)};
  
   \node[anchor=center, node distance=0cm,yshift= -0.1cm, xshift = -1cm, above = of img1](legend1){\color{blue} \small {$\blacklozenge$}  PIQUE };
    \node[anchor=center, node distance=0cm, right = of legend1](legend2){\color{ao(english)} \small  { $\bullet$} Baseline 1 };
    
   \node[node distance=0cm, anchor=center, right = of legend2] (legend3)  {\color{amber} \small  { $\blacktriangle$}  Baseline 2};
    \node[below=of x-label1,  node distance=0cm, yshift=1.2cm]{\small (a) 0.1\% selectivity};
  \node[below=of x-label2,  node distance=0cm, yshift=1.2cm]{\small (b) 0.2\% selectivity};
  \node[node distance=0cm, anchor=center, right = of legend3] (legend4)  {\small { $\blacksquare$} Baseline 3};
  
\end{tikzpicture}
\caption{Comparison of gain (Multi-PIE dataset) for Q3.}
\label{fig:VariationOfQuality(MultiFeature)}
\end{figure}

%%%%%%%%%%%%%%%%% Multi Feature Experiment (Q4 and Q5) Figure %%%%%%%

\begin{figure}[t]
\centering
\vspace{-0.2cm}
\begin{tikzpicture}
  \node (img1)  {\includegraphics[trim={1.1cm 0.6cm 1cm 1.32cm},clip,width=0.20\textwidth]{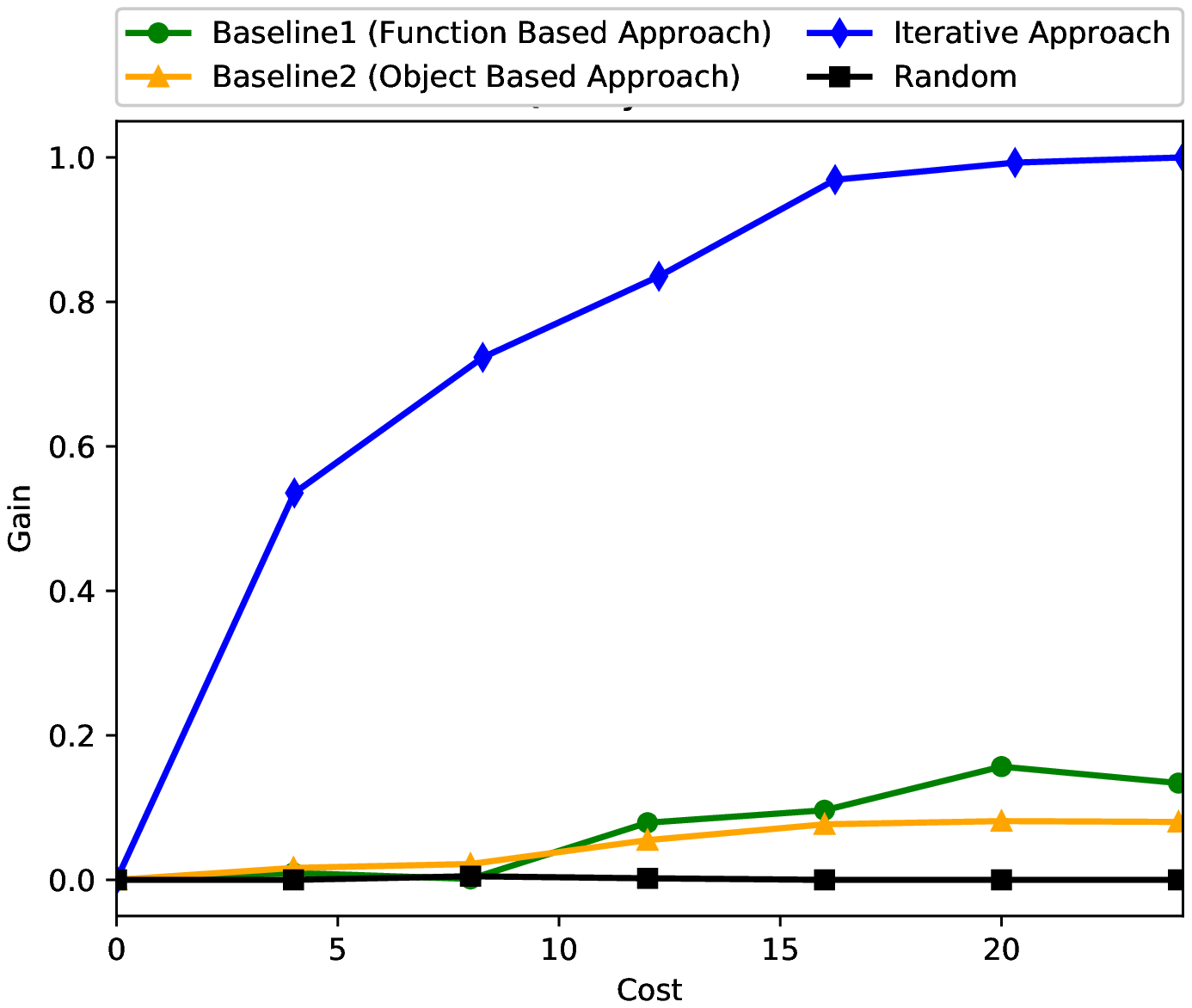}};
  \node[below=of img1, node distance=0cm, yshift=1.2cm,font=\color{black}] {\small Time (Seconds)};
  \node[left=of img1, node distance=0cm, rotate=90, anchor=center,yshift=-0.9cm,font=\color{black}] {\small Quality (Gain)};
  \hfill;
  
  \node[right=of img1,xshift=-1cm] (img2)  {\includegraphics[trim={1.1cm 0.6cm 1cm 1.32cm},clip,width=0.20\textwidth]{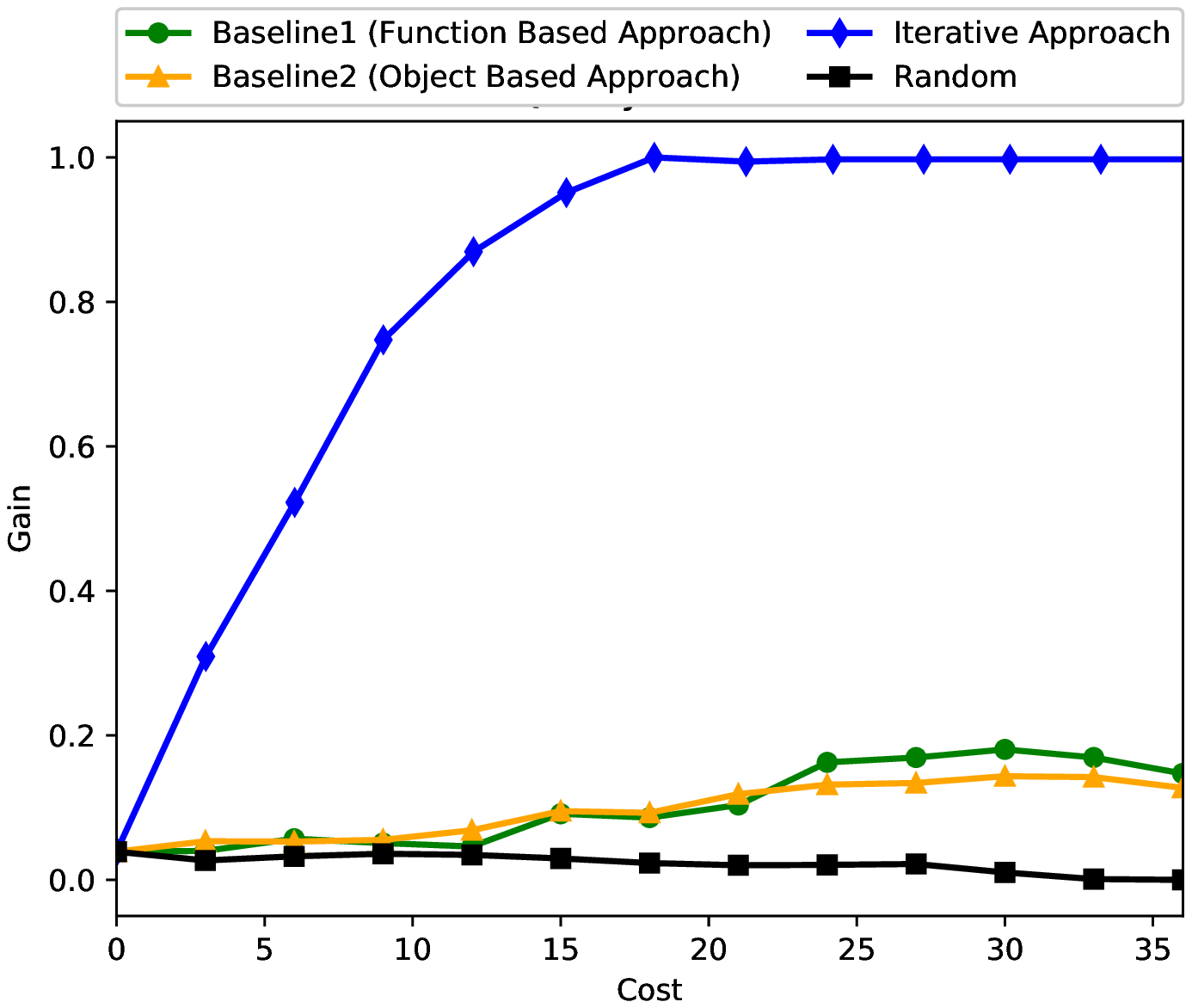}};
  
  \node[below=of img2, node distance=0cm, yshift=1.2cm] {\small Time (Seconds)};
  \node[left=of img2, node distance=0cm, rotate=90, anchor=center,yshift=-1cm] {\small Quality (Gain)};
  
   \node[anchor=center, node distance=0cm,yshift= -0.1cm, xshift = -1cm,  above = of img1](legend1){\color{blue} \small {$\blacklozenge$}  PIQUE };
    \node[anchor=center, node distance=0cm, right = of legend1](legend2){\color{ao(english)} \small  { $\bullet$} Baseline 1 };
    
   \node[node distance=0cm, anchor=center, right = of legend2] (legend3)  {\color{amber} \small  { $\blacktriangle$}  Baseline 2};
   \node[node distance=0cm, anchor=center, right = of legend3] (legend4)  {\small { $\blacksquare$} Baseline 3};
   \node[below=of x-label1, node distance=0cm, yshift=1.2cm, xshift = 1cm, text width=5cm] {\small (a) Query Q4 (0.1\% sel.)};
  \node[below=of x-label2, node distance=0cm, yshift=1.2cm, xshift = 1cm, text width=5cm] {\small (b) Query Q5 (0.1\% sel.)};
 \end{tikzpicture}
 
%\caption{Comparison of the gain of the answer set between different approaches in Multi-PIE dataset for query Q4 and Q5.}
%\vspace{-0.3cm}
\caption{Comparison of gain (Multi-PIE) for Q4 and Q5.}
\label{fig:VariationOfQuality2(MultiFeature)}
%\vspace{-0.1cm}
\end{figure}

%%%%%%%%%%%%%%%%% Multi Feature Experiment (Q4 and Q5) Figure  end %%%%%%%

%%%%%% Caching figure with two diagrams %%%%%%
\begin{figure}[t]
\centering
\vspace{-0.4cm}
\begin{tikzpicture}
   \node(img1)  {\includegraphics[trim={1.1cm 0.6cm 1cm 1.32cm},clip,width=0.20\textwidth]{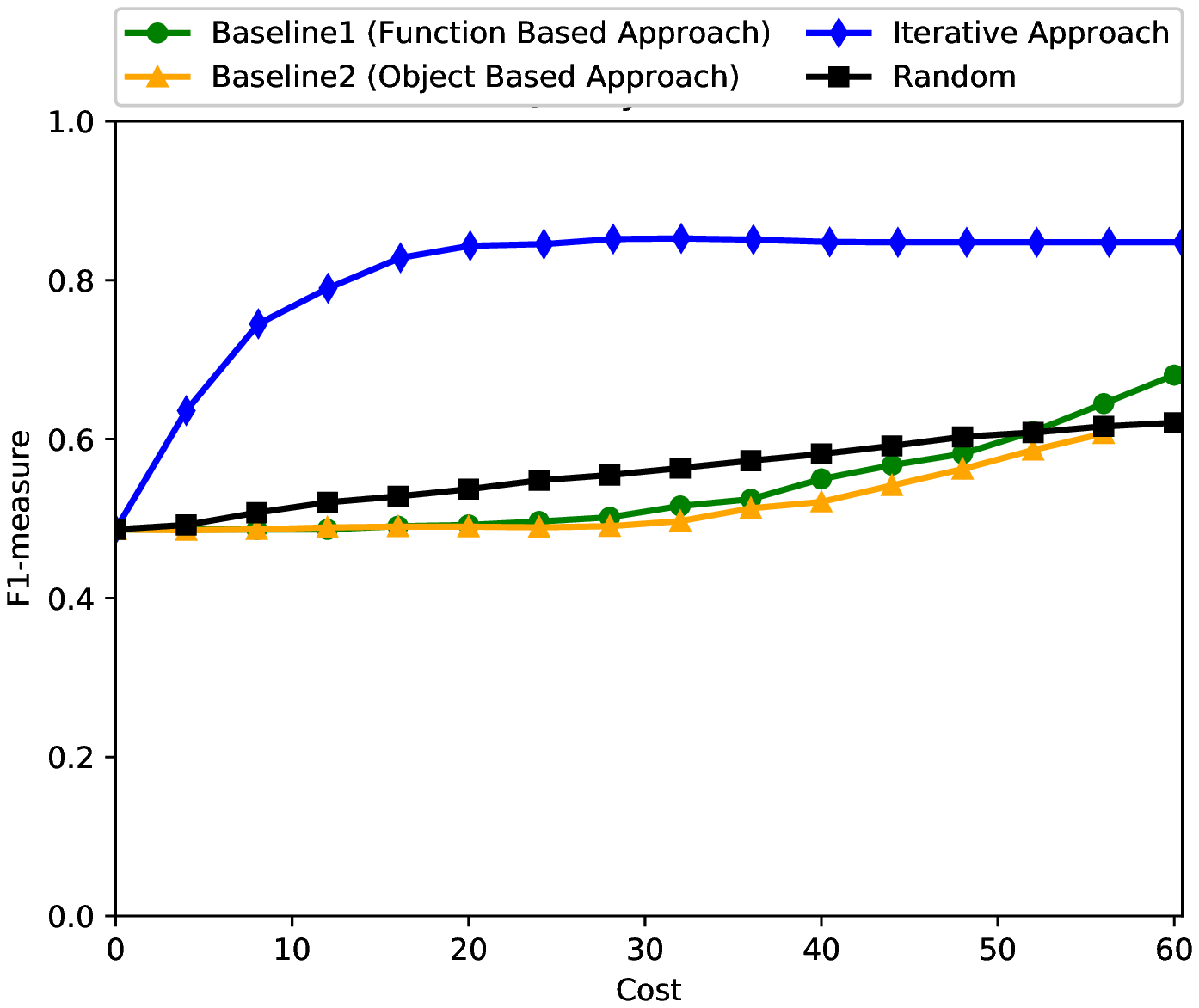}};
   
  \node[below=of img1, node distance=0cm, yshift=1.2cm] (x-label3){\small Time (Seconds)};
  \node[left=of img1, node distance=0cm, rotate=90, anchor=center,yshift=-1cm] {\small $F_1$ measure};
  
   \node[right=of img1,xshift=-1cm] (img2)  {\includegraphics[trim={1.1cm 0.6cm 1cm 1.32cm},clip,width=0.20\textwidth]{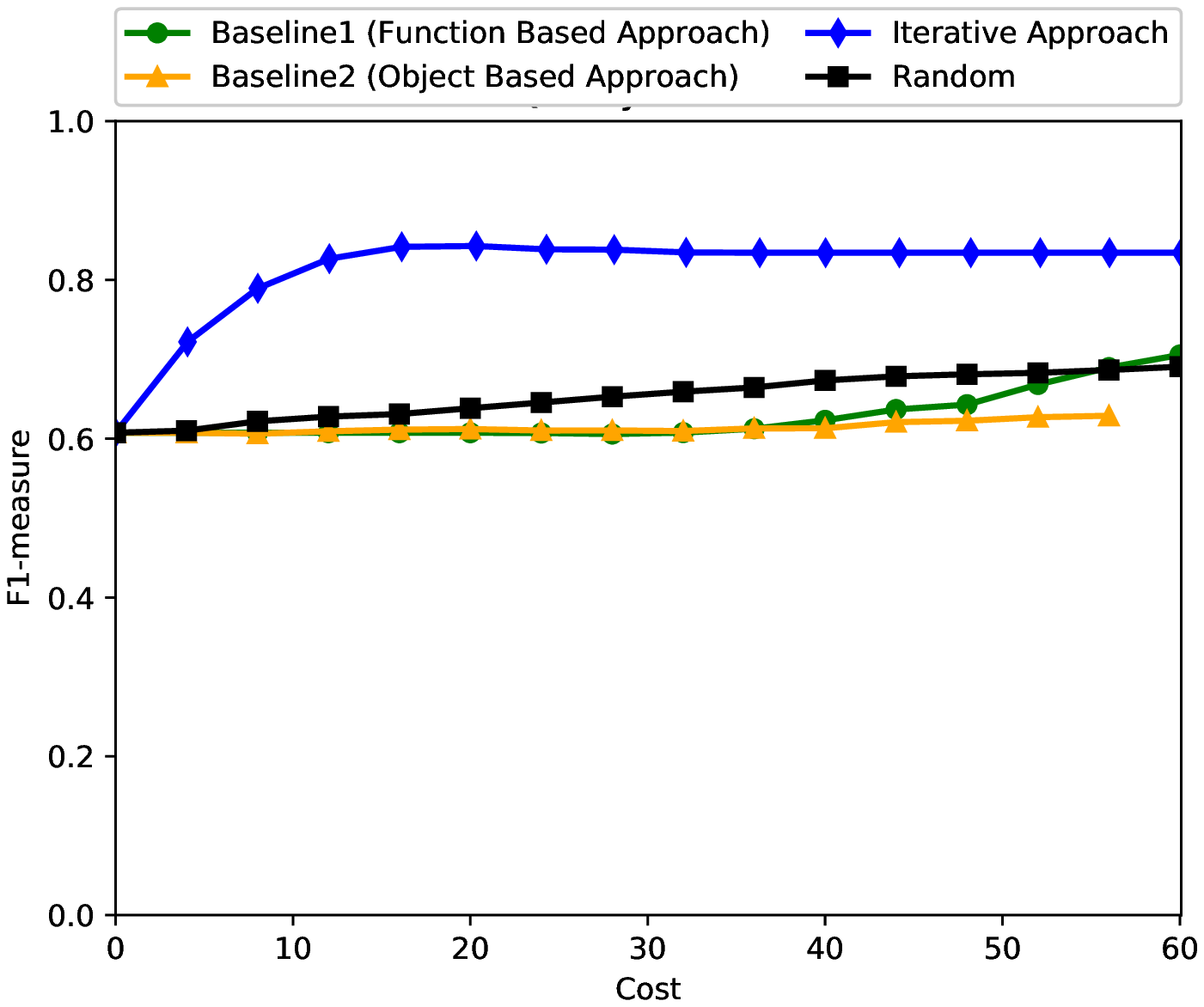}};
  \node[below=of img2, node distance=0cm, yshift=1.2cm](x-label4) {\small Time (Seconds)};
  \node[left=of img2, node distance=0cm, rotate=90, anchor=center,yshift=-1cm] {\small $F_1$ measure};
  
 \node[below=of x-label3,  node distance=0cm, yshift=1.1cm]{\small (a) 25\% caching};
  \node[below=of x-label4,  node distance=0cm, yshift=1.1cm]{\small (b) 50\% caching};
  
  \node[anchor=center, node distance=0cm,yshift= -0.1cm, xshift = -1cm, above = of img1](legend1){\color{blue} \small {$\blacklozenge$}  PIQUE};
    \node[anchor=center, node distance=0cm, right = of legend1](legend2){ \color{ao(english)} \small  { $\bullet$} Baseline 1 };
    
   \node[node distance=0cm, anchor=center, right = of legend2] (legend3)  {\color{amber} \small  { $\blacktriangle$} Baseline 2};
    \node[node distance=0cm, anchor=center, right = of legend3] (legend4)  {\small { $\blacksquare$} Baseline 3};   
  
\end{tikzpicture}
\vspace{-0.4cm}
\caption{\small Comparison of $F_1$ measure (MUCT, 10\% sel.) for Q1.}\label{fig:CachingVariationOfF1Experiment}
\vspace{-0.1cm}
\end{figure}

%%%%%%%%% Caching figure end %%%%%%%%%

%%%%%%%%%%%%%%%%%%%%%%%%%%%%%%%%%%%%%%%%%%%%%%%%%%

\vspace{0.1cm}
\noindent
\textbf{Experiment 4 (Candidate Selection Strategy). }\label{sect:candidate_selection} %\dg{This experiment is updated. Please review.} 
We evaluate our optimized candidate selection strategy as explained in Section~\ref{section:candidate_set_selection}. %Ideally, \yat{question} all the objects in $O$ should be considered as potential candidates for the generation of a plan but this will be computationally expensive. Our optimized approach considers only objects that are outside of the answer set at the end of the previous epoch.
Table~\ref{table:StrategyCompare} (a) shows the results of comparing PIQUE with the default approach (that considers all objects in $O$ for the generation of plans) using their progressiveness scores (Equation~\ref{progressiveness-metric}). As expected, PIQUE outperforms the default strategy for the different selectivity values. As the size of the candidate set becomes smaller, the time taken by the benefit estimation step of PIQUE decreases. This empirically proves the validity of our candidate selection strategy.

\vspace{0.1cm}
\noindent
\textbf{Experiment 5 (Benefit Estimation Strategy).}\label{sect:benefit_estimation} We evaluate our efficient strategy of estimating the benefit of the triples present in $\mathsf{TS}_w$ (Equation~\ref{eqn:benefit}). %\yat{only equation is needed} We derive the benefit of the triples according to Equation \ref{eqn:benefit} using the joint probability value of the objects and the cost of the tagging functions present in the triples. This way, we avoid performing the threshold selection algorithm for estimating the benefit of each triples, and reduce the cost of overall plan generation phase. 

In Table \ref{table:StrategyCompare} (b), we show the progressiveness scores of PIQUE (which uses the efficient strategy) and the default approach that require generating a new answer set in order to estimate the benefit of each triple (see Section~\ref{BenefitEstimation}). As expected, PIQUE outperforms the default strategy significantly because it can estimate the benefit of each triple using only the expression satisfiability probability of the object and the cost of the enrichment function present in the triple which can be performed in $\mathcal O(1)$. (Recall that the default strategy exhibits a time complexity of $\mathcal O(n)$ for estimating the benefit of each triple, where $n$ is the number of objects in the dataset.)

The default approach spends significantly more time in the benefit estimation step of the plan generation phase. In the benefit estimation step, for each triples present in $\mathsf{TS}_w$, we first perform the threshold selection algorithm and then estimate the improvement of the quality of the answer set due to the execution of the triple. As the threshold selection algorithm is performed for each triples, the aggregated time spent in the benefit estimation step becomes much higher compared to our approach.

\begin{comment}
\begin{experiment}  [(Cost-vs-Recall)]
\label{exp:1}
Figure~\ref{figure:timevsquality} plots the duplicate recall as a function of the resolution cost (measured as the end-to-end execution time) for various ER algorithms. In the {\em Static} approach, $P$'s blocks appear first in the sorted list of blocks, then $A$'s blocks, and finally $U$'s blocks. In this experiment, we do not study the benefit of using the lazy resolution strategy in resolving nodes.
\end{experiment}
\end{comment}

\vspace{0.1cm}
%\subsection{Scalability of our Approach}
\noindent
\textbf{Experiment 6 (Scalability of our Approach).} We evaluate the scalability of PIQUE with the help of the queries containing multiple predicates on tags. The results of query $Q3$ on the Multi-PIE dataset  are shown in Figure \ref{fig:VariationOfQuality(MultiFeature)}. The results of queries $Q4$ and $Q5$ on the same dataset are shown in Figure \ref{fig:VariationOfQuality2(MultiFeature)}.

PIQUE outperforms the baseline approaches significantly for all queries since it executes the best triples in each epoch. Thus, it can scale well on the Multi-PIE dataset with the increasing number of predicates containing different tags. We  observed that the cost of the plan generation phase increases with the increase in the number of those predicates. However, the benefit of executing the right set of triples overcomes the extra overhead paid in the plan generation phase.

\vspace{0.1cm}
\noindent
\textbf{Experiment 7 (Caching Previous Query Results).}
\label{exp:caching}
We examine the scenario where the results of some previous queries are cached.
PIQUE deals with caching by setting the starting state of each object $o_k$ to the state that accounts for the enrichment functions run on $o_k$ in the previous queries. For example, assuming there exists a predicate $R^i_j$ in $Q$, instead of setting $s(o_{k}, R^{i}_{j})$ to $[0, 0, 0, 0]$, we set it to $[1, 0, 0, 1]$ if functions $f^i_1$ and $f^i_4$ were run on $o_k$ before. The predicate probability of $o_k$ needs to account for the outputs of those functions.% as well. 

%In the main body of the paper, we assume that the none of the previous query results (i.e., the state, predicate probability of objects, etc.) are cached. Our approach deals with caching by setting the starting state of an object $o_i$ to the cached state that accounts for the tagging functions run on $o_i$ in the previous queries. For example, assuming there exists a predicate $R^j_k$ in $Q$, instead of setting $s(o_{i}, R^{j}_{k})$ to $[0, 0, 0, 0]$, we set it to $[1, 0, 0, 1]$ if functions $f^j_1$ and $f^j_4$ were run on $o_i$ in the previous queries. Furthermore, the predicate probability of $o_i$ needs to account for the outputs of those functions too. 

In this experiment, we study the impact of different levels of caching on the performance of PIQUE and the baseline approaches. We  performed this experiment using query Q1 on the MUCT dataset. We assumed that one enrichment function was executed on a fraction of the dataset (i.e., 10\%, 25\%, 50\%, and 75\%) in a previous query and the properties of the objects (i.e., state, predicate probability, etc.) were cached. 

The results (see Figure~\ref{fig:CachingVariationOfF1Experiment}) show that PIQUE  outperforms the baseline approaches significantly for all caching levels. Since the outputs  of the enrichment function on the objects have been cached, the initial quality of the answer set increases as the level of caching  increases; the initial $F_1$ measure in the sub-figures (from left to right) are $0.49$ and $0.61$ respectively. 

\vspace{0.1cm}
\noindent
\textbf{Experiment 8 (Disk-Based Approach).}\label{exp:disk_based_approach} 
We compare the disk-based version of PIQUE with the baseline approaches. We change Baseline 1 as follows: 1)~Order disk blocks based on PIQUE's block benefit metric; 2)~Load the highest beneficial block from disk to memory and then execute the enrichment function with highest $q^i_j/c^i_j$ value on all of its objects; and~3)~Load the second highest beneficial block and execute the same enrichment function on all of its the objects. This process continues until that enrichment function is executed on the objects of all blocks. Then, we repeat the same process using each of the remaining enrichment functions.

 \begin{figure}[t]
 \centering
 \vspace{-0.4cm}
 \begin{tikzpicture}
  \node (img1)  {\includegraphics[trim={1.1cm 0.6cm 1cm 1.32cm},clip,width=0.20\textwidth]{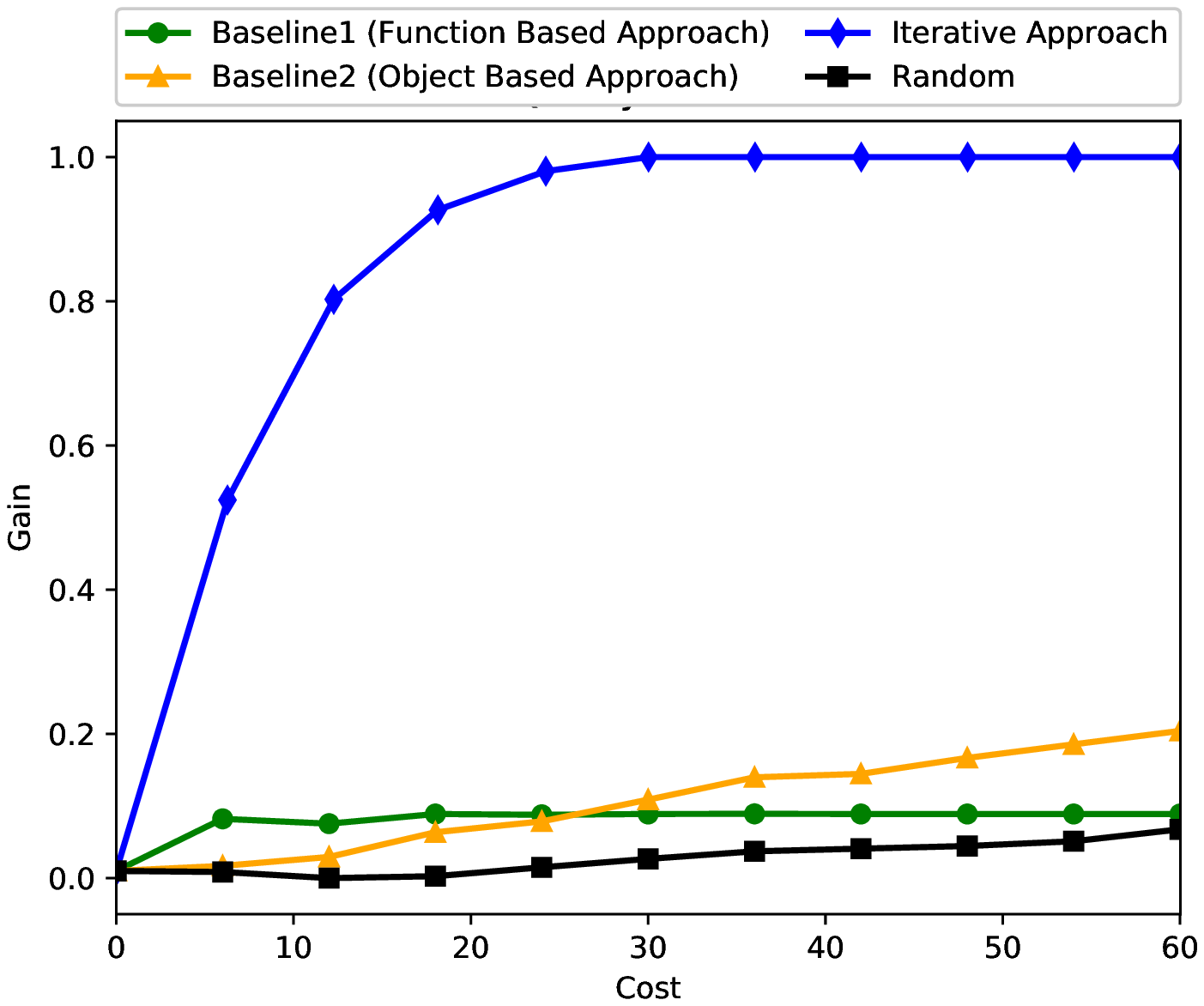}};
  \node[below=of img1, node distance=0cm, yshift=1.2cm,font=\color{black}] {\small Time (Seconds)};
  \node[left=of img1, node distance=0cm, rotate=90, anchor=center,yshift=-0.9cm,font=\color{black}] {\small Quality (Gain)};
  \hfill;
  \node[right=of img1,xshift=-1cm] (img2)  {\includegraphics[trim={1.1cm 0.6cm 1cm 1.32cm},clip,width=0.20\textwidth]{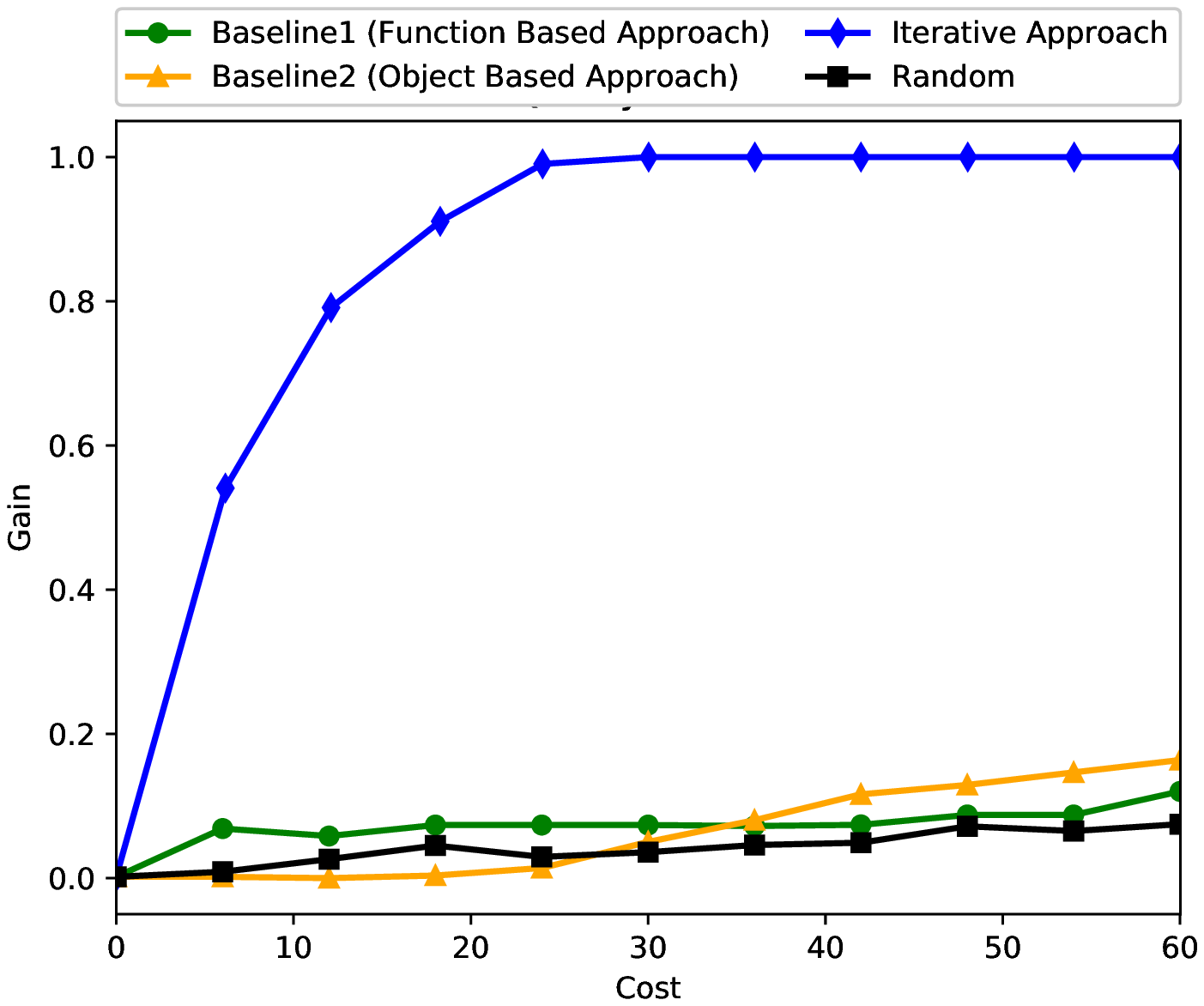}};
  \node[below=of img2, node distance=0cm, yshift=1.2cm] {\small Time (Seconds)};
  \node[left=of img2, node distance=0cm, rotate=90, anchor=center,yshift=-1cm] {\small Quality (Gain)};
  
   \node[anchor=center, node distance=0cm,yshift= -0.1cm, xshift = -1cm, above = of img1](legend1){\color{blue} \small {$\blacklozenge$}  Our Approach };
    \node[anchor=center, node distance=0cm, right = of legend1](legend2){\color{ao(english)} \small  { $\bullet$} Baseline 1 };
    
   \node[node distance=0cm, anchor=center, right = of legend2] (legend3)  {\color{amber} \small  { $\blacktriangle$}  Baseline 2};
   \node[node distance=0cm, anchor=center, right = of legend3] (legend4)  {\small { $\blacksquare$} Baseline 3};
   
    \node[below=of x-label1,  node distance=0cm, yshift=1.2cm, xshift = 1cm, text width=5cm]{\small (a) Query Q1 (50\% sel.)};
  \node[below=of x-label2,  node distance=0cm, yshift=1.2cm, xshift = 1cm, text width=5cm]{\small (b) Query Q1 (75\% sel.)};
 \end{tikzpicture}
 \vspace{-0.8cm}
 \caption{Gain for Disk-Based Approaches (MUCT).}
\label{fig:VariationOfQualityDiskBased}
\end{figure}

In Baseline 2, we make the following changes:~1)~Order the blocks in a decreasing order based on their block benefit values;~2)~Load the top blocks from disk to memory;~3)~Iterate over the objects from the loaded blocks (starting with the object with the highest expression satisfiability probability value) and for each object, execute all the enrichment functions on it;and ~4)~Once all objects are evaluated, load another set of blocks and repeat the same process. This process continues until all objects of all blocks are evaluated.

In Baseline 3, we make the following change. In each epoch, choose a number of blocks randomly. If a chosen block is not in memory, we load it in memory and write to disk the in-memory block that has least block benefit value. Once all chosen blocks are in-memory, we follow the same steps of Baseline 3 as explained in the previous section.

The results of this experiment (see Figure \ref{fig:VariationOfQualityDiskBased}) show that PIQUE performs significantly better than the baselines. This is mainly because PIQUE optimizes the process of selecting which blocks need to be loaded in memory at the beginning of each epoch and the process of choosing which triples should be executed during that epoch whereas the baseline approaches optimize none or only one of these two processes.

%or, loads the highest benefit blocks and executes the triples that have highest benefit values. 
%It chooses the best blocks that should be loaded in memory at a particular epoch based on their block benefit value,  resulting in the execution of more beneficial blocks as compared to the baseline approaches. Furthermore, within each block, PIQUE executes the triples in the decreasing order of their benefit value. which results in the execution of more beneficial triples as compared to the baseline approaches.

\section{Related Works}
\label{sect:relatedWork}
 
%This paper develops a progressive approach for answering queries that require enrichment of the right amount of data and at the right level of enrichment. We are not aware of any work in the literature that directly addresses this problem. However, there exists related works for some of the components of the problem. In our setup, we assume that a set of expensive tagging functions are available for determining a particular tag and they have to be executed appropriately on the objects for answering a query containing such tags. In this regard, the most relevant works in the databases domain are in the context of query optimization with expensive predicates. Furthermore, enrichment of data efficiently using complex signal processing operators have been studied previously in the context of signal oriented data management systems. In machine learning domain, the problem of determining the optimal set of tagging functions for an object has been addressed in the context of dynamic ensemble selection and ensemble pruning problems.

%This paper develops a progressive approach for answering queries that require enrichment of the right amount of data and at the right level of enrichment. We are not aware of any work in the literature that directly addresses this problem. However, there exists related works for some of the components of the problem. 

This paper develops a progressive approach to data enrichment and query evaluation. We are not aware of any work in the literature that directly addresses this problem. We review below works related to different aspects of the problem.

%Another set of relevant works where queries are answered within a time bound and an error bound is done in the context of approximate query processing systems. 

%%%%%%%%% From overview section %%%%%%%
\begin{comment}
Before we explain our approach, a naive way to solve this problem will be to evaluate these expensive predicates involving \emph{tags} in the ascending order of $\frac{selectivity}{cost}$ as mentioned in \cite{Hellerstein:1993:PMO:170036.170078}. However this predicate based approach will not be useful in our scenario for several reasons. First, we assume that evaluation of single predicate can be done through multiple tagging functions. Furthermore, the tagging functions used in our setup for  the evaluation of a predicate are non deterministic, which can result in several false positive objects in the answer set. That is why, our approach is an object based approach which aims to choose a set of objects from \emph{D}, and corresponding functions, to maximize the rate of quality improvement of answer set. 
\end{comment}
%%%%%%%%%%%%%%%%%%%%%%%%%%%%%%%%%%%%%%%%%%%

\vspace{0.2 cm}
\noindent
\textbf{Expensive Predicate Optimization.}  In expensive predicate optimization problem the goal is to optimize a query containing expensive predicates by appropriately ordering the evaluation of such predicates~\cite{OptimizationUserDefinedPredicates,PredicateMigration}. A subset of this problem, more relevant to our context, deals with the \textit{optimization of multi-version predicates}~\cite{Lazaridis:2007:OME:1247480.1247568,AcceleratingMachineLearning,ExploitingCorrelation}. E.g., in \cite{AcceleratingMachineLearning}, authors maintain a set of probabilistic predicates for different query predicates that can be present in an ML query on blobs. Given a query, appropriate probabilistic predicates were chosen to filter out large number of blobs which will not satisfy the query predicate in the future. In these works, the goal is to minimize the overall predicate evaluation cost with the help of a less computationally costly version of the predicate. In contrast with our context, in these works the predicates are considered to be deterministic. Furthermore enrichment of the objects are not considered by the problems of this domain.

%enrichment of the objects are not considered and also the objective is to minimize the cost of queries instead of the quality of the answer set. In contrast, the objective of our objective is to improve the quality of a query answer as quickly as possible by performing enrichment at the query time. 
%Our approach chooses to enrich the objects selectively in such a way that the progressive improvement of quality is achieved by the query answer.
%Probabilistic predicates were developed using linear support vector machine method and kernel density estimation method. 

%In contrast with our context, in these works the predicates are considered to be deterministic.

%Such version of the predicate is used to filter out some objects in the beginning, so that the costly version of the predicate is evaluated only on a small number of objects. 
%\dg{need to update this paragraph. Need to differentiate myself from the Accelerating paper and Joglekar paper}

\vspace{0.2 cm}
\noindent
\textbf{Adaptive Query Processing.}
In adaptive query processing, the goal is to optimize a query at the execution time with the help of intermixing the process of query execution with the process of exploration of the plan space~\cite{Eddies, Karanasos:2014:DOQ:2588555.2610531,Markl:2004:RQP:1007568.1007642}. Broadly, adaptive query processing addresses the problems of how to perform query optimization with missing statistics, unexpected correlations in data, unpredictable cost of operators, etc. Re-optimizing a query during the run time is similar in spirit to our approach, where we identify and execute the best set of triples in each epoch. However, these approaches do not address the issue of evaluating predicates using different non-deterministic functions which vary in cost and quality values. 

%\vspace{0.2cm}
%\noindent
%\textbf{Multi Version Expensive Predicates.}

%\vspace{0.1cm}
%\noindent
%\textbf{Signal oriented Data Management Systems.}
%The problem of joint enrichment of data using complex signal processing functions has been studied in the context of signal oriented data stream management systems~\cite{Bordawekar:2014:AA:2590989.2590993,Girod07thecase,Nikolic:2017:ESP:3035918.3035935}. The objective of these approach is to provide users/developers with a common data processing platform which can be used to specify complex signal processing functions and arbitrary event streams. These approaches are based on processing  data efficiently upon its arrival using complex functions. However, our approach is based on enriching data in the context of a query using different data processing techniques.

%%%%\dg{newly added for AQP}
\vspace{0.05 cm}
\noindent
\textbf{Approximate Query Processing.} In the approximate query processing domain, the objective of the research has been to provide an approximate answers to aggregation queries along with an error bound \cite{BlinkDB,VerdictDB}. Authors used several sampling based methods, compression based methods, multi-resolution data structures to provide the approximate answers. Providing an inaccurate answer quickly and refining it with time is similar in spirit with our objective in PIQUE, but these systems do not consider enrichment of the underlying data before answering a query. Furthermore, the query plans generated in AQP systems are static, as the underlying assumption is once an analyst receives an approximate answer, the analyst will pose more meaningful queries according to the quality requirement. In contrary, PIQUE is an adaptive approach, which evaluates the query in an iterative manner but the interleaving enrichment process ensures that quality of the end result improves automatically $w.r.t.$ time. %Queries considered in AQP systems are aggregation queries whereas PIQUE considers quality improvement of set based queries. 

\section{Conclusions and Future Work}
\label{sect:conclusions}

In this paper, we have developed a progressive approach, entitled PIQUE, that enriches the right amount of data to the right degree so as to maximize the quality of the answer set. The goal is to use different tagging functions, which vary in cost and quality, in such a way that improves the quality of the answer progressively. We have proposed an efficient algorithm that generates a plan in every epoch with the objective of maximizing the quality of the answer at the end of the epoch. We have shown empirically that PIQUE executes the right set of triples in each epoch so that the quality of the answer set improves progressively over time. 

In its current form, PIQUE is a unary operator that enriches one or more tags (using appropriate enrichment functions based on the expression) from a single object collection, although our approach allows multiple PIQUE operators in a single query. We could extend PIQUE to binary operators wherein the expression considers enrichment of tags corresponding to multiple object sets with an appropriate join condition. Thus, an example to identify all the pairs of objects in two datasets with the same value for a tag might be:
\vspace{-0.2cm}
\begin{equation*}
\vspace{-0.2cm}
PIQUE(O1, O2, 4, O1.tag = O2.tag)
\end{equation*}

Extending PIQUE to such operators requires a non-trivial extension. For selection condition, the expression satisfiability probability value of the objects can be calculated in $\mathcal O(n)$ time whereas for join conditions, determining the ESP can be very expensive (i.e., $\mathcal O(n^2)$). Furthermore, the benefit estimation step of our approach needs to be changed as a single object can be part of multiple tuples of the answer set, and an enrichment of a single object will benefit the quality of the answer set in a much different way than the existing setup. Extending PIQUE to such a setup will be an interesting direction of future work.

\begin{appendix}
%\newpage
\section{Proofs}\label{appendix:proofs}
%In this section, we present the proofs of the theorems only. Due to space limitations, we have skipped the proofs of the lemmas and we will present them in a longer version of the paper.

%\vspace{-0.2cm}

\begin{lemma}\label{lemma:thresholdFunction}
 Let  $\mathcal P_1$, $\mathcal P_2$, $\mathcal P_3$, $\dots$, $\mathcal P_{|O|}$ be the ESP values of the objects in $O$ in epoch $i$ such that $\mathcal P_1$ $\geq$ $\mathcal P_2$ $\geq$ $\mathcal P_3$ $\geq$ $\dots$ $\geq$ $\mathcal P_{|O|}$. The  threshold probability $\mathcal P^i_\tau$ is defined as the lowest value $\mathcal P_j$ for which the following conditions hold:
\begin{itemize}
\item $\mathcal P_{j} > \frac{(\mathcal P_1+\mathcal P_2+\dots \mathcal P_{j-1})}{(j-1+k)}$, and
\item $ \mathcal P_{j+1} < \frac{(\mathcal P_1+\mathcal P_2+\dots \mathcal P_j)}{(j+k)}$, where,  $k = \sum\limits_{i=1}^{|O|} \mathcal P_i$
\end{itemize}
\end{lemma}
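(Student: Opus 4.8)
The plan is to recognise the two displayed inequalities as the exact algebraic conditions for the prefix sequence $E(F_\alpha(L^1)),E(F_\alpha(L^2)),\dots$ to still be increasing when the $j$-th object is appended and to start decreasing when the $(j{+}1)$-th is appended, and then to combine this with the unimodality already proved in Theorem~\ref{theorem:threshold} to conclude that the index satisfying both conditions is precisely the peak $\tau$. First I would instantiate Equation~\ref{eqn:expected_f1_measure} on the prefix set $L^j$. Writing $S_j=\sum_{i=1}^{j}\mathcal P_i$ and $k=\sum_{i=1}^{|O|}\mathcal P_i$, this yields
\begin{equation*}
E(F_\alpha(L^j))=\frac{(1+\alpha)\,S_j}{\alpha k+j}.
\end{equation*}
For the $F_1$ weighting $\alpha=1$ used in the statement this becomes $2S_j/(k+j)$, whose denominator $k+j$ is exactly the one appearing in the lemma.

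Next I would compare consecutive prefixes. Cross-multiplying the inequality $E(F_\alpha(L^j))>E(F_\alpha(L^{j-1}))$ (both denominators being positive) and cancelling the common term $S_{j-1}(k+j-1)$ collapses it, after using $S_j=S_{j-1}+\mathcal P_j$, to $\mathcal P_j(k+j-1)>S_{j-1}$, i.e. $\mathcal P_j>S_{j-1}/(j-1+k)$, which is the first condition. The symmetric computation applied to $E(F_\alpha(L^{j+1}))<E(F_\alpha(L^{j}))$ gives $\mathcal P_{j+1}<S_j/(j+k)$, the second condition. Thus the two conditions assert precisely that appending the $j$-th object raised the expected $F_1$ measure while appending the $(j{+}1)$-th lowered it.

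Finally I would invoke Theorem~\ref{theorem:threshold}, which guarantees that the prefix sequence strictly increases up to some index $\tau$ and strictly decreases thereafter. Consequently, for $j<\tau$ the first condition holds but the second fails, for $j>\tau$ the first condition already fails, and both hold simultaneously only at $j=\tau$; this makes $\tau$ the unique index satisfying the pair, so that $\mathcal P^i_\tau=\mathcal P_\tau$ and the ``lowest value'' qualifier merely pins down this single index. The arithmetic of the first two steps is routine; the point needing care is this last step, namely that existence and uniqueness of the threshold are not re-derived here but inherited from the strict unimodality of Theorem~\ref{theorem:threshold}. One should also flag that the statement is written for $\alpha=1$ (otherwise the denominators would read $\alpha k+j$ in place of $k+j$), so the proof specialises the general expression accordingly.
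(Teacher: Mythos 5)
Your proposal is correct and follows essentially the same route as the paper's own proof: both compute the change in $E(F_\alpha)$ between consecutive prefixes $L^{j-1},L^j,L^{j+1}$, show that its sign is exactly equivalent to the two displayed inequalities, and then appeal to the unimodality established in Theorem~\ref{theorem:threshold} to identify the unique index where the sign flips as the threshold $\tau$. Your explicit remark that the lemma's denominators implicitly take $\alpha=1$ (the paper writes $j+k$ where the general formula gives $j+\alpha k$) is a fair observation about a notational looseness that the paper's proof shares.
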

\begin{proof}
As described in Theorem \ref{theorem:threshold}, $E(F_\alpha)$ measure of the answer set will monotonically increase up to a certain object $o_\tau$ with ESP value of $\mathcal P_\tau$ and then it will keep decreasing monotonically. Let us denote, $F_\tau$ as the $E(F_\alpha)$ measure of the answer set if we include $\tau-th$ object and $F_{\tau+1}$ be the $E(F_\alpha)$ measure, if we include $(\tau+1)$-th object in the answer set.  

\begin{equation}
\begin{split}
&\Delta(E(F_\alpha))^{\tau+1}_{\tau} \\&= \frac{(1+\alpha)(\mathcal P_1+\mathcal P_2+...\mathcal P_{\tau+1})}{\tau+1+k} - \frac{(1+\alpha)(\mathcal P_1+\mathcal P_2+...\mathcal P_\tau)}{\tau+k} \ \ \\& where \ \ k = \sum\limits_{i=1}^{|O|} \mathcal P_i \\
&= (1+\alpha)\frac{\splitfrac{(\tau+k)(\mathcal P_1+\mathcal P_2+...+\mathcal P_{\tau+1})}{-(\mathcal P_1+\mathcal P_2+...+\mathcal P_\tau)(\tau+1+k)}}{(\tau+1+k)(\tau+k)} \\
&=  (1+\alpha)\frac{\splitfrac{(\tau+k)(\mathcal P_1+\mathcal P_2+...+\mathcal P_{\tau}) + (\tau+k)\mathcal P_{\tau+1}}{\splitfrac{-(\mathcal P_1+\mathcal P_2+...+\mathcal P_\tau)(\tau+k)}{ - (\mathcal P_1+\mathcal P_2+...+\mathcal P_\tau)}}}{(\tau+1+k)(\tau+k)} \\
&= (1+\alpha)\frac{\mathcal P_{\tau+1}(\tau+k) - (\mathcal P_1+\mathcal P_2+...\mathcal P_\tau)}{(\tau+1+k)(\tau+k)}
\end{split}
\end{equation}

$E(F_\alpha)$ measure will keep increasing as long as the value of $\Delta(E(F_\alpha))^{\tau+1}_{\tau}$ remains positive. Thus the threshold value will be the lowest value of $\tau$ for which $\Delta(E(F_\alpha))^{\tau+1}_{\tau}$ value becomes negative. 

\begin{equation}\label{eqn:DeltaThreshold}
\begin{split}
\Delta(E(F_\alpha))^{\tau+1}_{\tau} < 0 & \Rightarrow \mathcal P_{\tau+1}(\tau+k) - (\mathcal P_1+\mathcal P_2+...\mathcal P_\tau) < 0 \\
& \Rightarrow \mathcal P_{\tau+1}(\tau+k) < (\mathcal P_1+\mathcal P_2+...\mathcal P_\tau) \\
& \Rightarrow \mathcal P_{\tau+1} < \frac{(\mathcal P_1+\mathcal P_2+...\mathcal P_\tau)}{(\tau+k)}
\end{split}
\end{equation}
The object $o_\tau$ with the ESP value of $\mathcal P_\tau$ will be the threshold probability if the following two inequalities hold: 
\begin{equation}\label{eqn:ThresholdInequalityGreater}
\begin{split}
 \mathcal P_{\tau} > \frac{(\mathcal P_1+\mathcal P_2+...\mathcal P_{\tau-1})}{(\tau-1+k)} 
 \end{split}
\end{equation}
\begin{equation}\label{eqn:ThresholdInequalityLess}
\mathcal P_{\tau+1} < \frac{(\mathcal P_1+\mathcal P_2+...\mathcal P_\tau)}{(\tau+k)}
\end{equation}
\end{proof}

\subsection{Proof of Theorem  \ref{theorem:thresholdFunctionInsideObjectAll}}

\begin{proof}
We prove this theorem with the help of the following lemmas. %\dg{updated this part. please review.}

\begin{lemma}\label{theorem:lemmaInsideObject1}
If the ESP value of $o_k$ increases in epoch $w$, then the threshold $\mathcal P^{w}_\tau$ can remain the same as $\mathcal P^{w-1}_\tau$ or increase (some objects which were part of ${\mathsf{A}_{w-1}}$ can move out of ${\mathsf{A}_{w}}$). In both cases, the $E(F_\alpha)$ measure of the answer set will be higher than that of  ${\mathsf{A}_{w-1}}$. 
\end{lemma}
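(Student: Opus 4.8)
The plan is to argue directly from the answer-set characterization of Theorem~\ref{theorem:threshold}, the threshold conditions of Lemma~\ref{lemma:thresholdFunction}, and the closed form for $E(F_\alpha(\mathsf{A}_w))$ in Equation~\ref{eqn:expected_f1_measure}. Let $L$ be the list of objects sorted by ESP, let $o_\tau$ be the threshold object of epoch $w-1$ so that $\mathsf{A}_{w-1}$ is the top-$\tau$ prefix, and write $k=\sum_{o_j\in O}\mathcal P_j$ for the total ESP mass. The key structural observation is that, since $o_k\in\mathsf{A}_{w-1}$, raising its ESP from $\mathcal P_k$ to $\mathcal P_k+\Delta$ with $\Delta>0$ only moves $o_k$ further up (or keeps it at the boundary), so the \emph{set} of top-$\tau$ objects is unchanged and both the interior partial sum $S=\mathcal P_1+\cdots+\mathcal P_{\tau-1}$ and the total $k$ each grow by $\Delta$ (the boundary subcase $o_k=o_\tau$ only makes $o_\tau$ more firmly interior and is easier).

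First I would establish part (a), the monotone behaviour of the threshold. Substituting the updated masses into the defining inequality of Equation~\ref{eqn:ThresholdInequalityGreater} for $o_\tau$, its right-hand side changes from $\frac{S}{T}$ to $\frac{S+\Delta}{T+\Delta}$ with $T=\tau-1+k$. A one-line computation gives $\frac{S+\Delta}{T+\Delta}-\frac{S}{T}=\frac{\Delta(T-S)}{T(T+\Delta)}$, which is positive because $T-S=(\tau-1)+(k-S)>0$. Hence the bar that $o_\tau$ must clear can only rise: either $o_\tau$ still clears it and the threshold is unchanged, or it no longer does and $o_\tau$ (and possibly further boundary objects) drop out, i.e., the threshold increases. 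In no case does it decrease, which is exactly the claim.

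Next I would prove part (b) in the two resulting cases. When the threshold is unchanged, the numerator of Equation~\ref{eqn:expected_f1_measure} increases by $(1+\alpha)\Delta$ and the denominator by $\alpha\Delta$ while $|\mathsf{A}_w|$ is fixed; cross-multiplying the target inequality $\frac{N+(1+\alpha)\Delta}{D+\alpha\Delta}>\frac{N}{D}$ collapses after cancellation to $\alpha\bigl(\sum_{o_j\in O}\mathcal P_j-\sum_{o_i\in\mathsf{A}_{w-1}}\mathcal P_i\bigr)+|\mathsf{A}_{w-1}|>0$, which holds trivially since $\mathsf{A}_{w-1}\subseteq O$ and the answer set is nonempty. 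When the threshold increases, the boundary objects that leave $\mathsf{A}_w$ remove their mass from both numerator and denominator, so the clean cancellation no longer applies; here I would use the two threshold boundary inequalities (Equations~\ref{eqn:ThresholdInequalityGreater} and~\ref{eqn:ThresholdInequalityLess}) to bound the new $E(F_\alpha)$ from below and show it still exceeds $E(F_\alpha(\mathsf{A}_{w-1}))$, mirroring the argument that handles objects crossing the answer-set boundary.

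I expect the threshold-increase subcase of part (b) to be the main obstacle: one must identify exactly which objects leave $\mathsf{A}_w$ and verify that the loss of their contributions is outweighed by the $\Delta$ gain on $o_k$, which requires combining both boundary inequalities rather than a single sign computation. By contrast, the unchanged-threshold case and all of part (a) reduce to routine algebra on the ratio and on the threshold condition.
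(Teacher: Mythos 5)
Your proposal is correct and follows essentially the same route as the paper's own proof: your part (a) is the same substitution into the threshold inequality of Lemma~\ref{lemma:thresholdFunction} (the paper derives an explicit bound on $\Delta$ below which the threshold stays fixed, you show the right-hand side can only rise — equivalent conclusions), and your cross-multiplication in the unchanged-threshold case is a rigorous rendering of the paper's claim that the numerator of Equation~\ref{eqn:expected_f1_measure} grows by $(1+\alpha)\Delta$ while the denominator grows only by $\alpha\Delta$. The one subcase you flag as the main obstacle — the threshold rising and boundary objects leaving the answer set — is in fact silently skipped by the paper's published proof, and it closes more cleanly than your planned grind through both boundary inequalities: since $\mathsf{A}_w$ is by construction the prefix maximizing expected $F_\alpha$ under the new ESP values (Theorem~\ref{theorem:threshold}), it dominates $\mathsf{A}_{w-1}$ evaluated at those same new values, which your fixed-set computation has already shown exceeds $E(F_\alpha(\mathsf{A}_{w-1}))$.
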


\begin{lemma}\label{theorem:lemmaInsideObject2}

If the ESP value of $o_k$ decreases in epoch $w$ but still remains higher than $\mathcal P^{w-1}_\tau$, then the threshold $\mathcal P^{w}_\tau$ can remain the same as $\mathcal P^{w-1}_\tau$ or decrease. This implies that the objects which were already part of ${\mathsf{A}_{w-1}}$ will still remain in ${\mathsf{A}_{w}}$ and some new objects might be added to ${\mathsf{A}_{w}}$. In both the cases, the $E(F_\alpha)$ measure of ${\mathsf{A}_{w}}$ will be lower than ${\mathsf{A}_{w-1}}$.
\end{lemma}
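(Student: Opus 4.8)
The plan is to prove the lemma in two stages: first characterize how the threshold and the membership of $\mathsf{A}_w$ change, and then deduce the claimed strict decrease of the expected $F_\alpha$ measure. Throughout I work with the epoch $w-1$ sorted ESP list and the threshold characterization of Lemma~\ref{lemma:thresholdFunction} (Equations~\ref{eqn:ThresholdInequalityGreater} and~\ref{eqn:ThresholdInequalityLess}). Write $\mathcal P_k' = \mathcal P_k - \Delta$ with $\Delta > 0$, and observe that only the total sum $k = \sum_{o_j \in O}\mathcal P_j$ and the prefix sums that contain $o_k$ are affected, each dropping by exactly $\Delta$.

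For the first two assertions (threshold monotonicity and preserved membership), since $o_k \in \mathsf{A}_{w-1}$ and its value only decreases, I would substitute the decremented prefix sums into Equation~\ref{eqn:ThresholdInequalityGreater}. A short computation shows that the marginal fraction $\tfrac{X}{Y}$ governing the inclusion test (with $X$ the numerator prefix-sum and $Y$ its denominator, where $X<Y$ because each ESP is at most $1$ and $k>0$) strictly decreases under the replacement $X \mapsto X-\Delta$, $Y \mapsto Y-\Delta$. Hence every object that passed the inclusion inequality in epoch $w-1$ still passes it in epoch $w$, so no object leaves the answer set; the same drop of the right-hand side of Equation~\ref{eqn:ThresholdInequalityLess} may in addition flip some of the next objects into the set. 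This yields $\mathsf{A}_{w-1} \subseteq \mathsf{A}_w$ and $\mathcal P^w_\tau \le \mathcal P^{w-1}_\tau$ (the threshold stays equal or decreases). In particular $o_k$ itself remains inside, consistent with the hypothesis $\mathcal P_k' > \mathcal P^{w-1}_\tau$.

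For the strict decrease of $E(F_\alpha)$, rather than splitting into the ``threshold unchanged'' and ``threshold decreased'' sub-cases, I would compare through an intermediate quantity: the value of the \emph{new} answer set $\mathsf{A}_w$ evaluated with the \emph{old} (epoch $w-1$) ESP values, denoted $E^{\mathrm{old}}(F_\alpha(\mathsf{A}_w))$. Two inequalities then chain. First, since $o_k \in \mathsf{A}_w$ but its ESP is lower in epoch $w$, passing from old to new values subtracts $(1+\alpha)\Delta$ from the numerator of Equation~\ref{eqn:expected_f1_measure} and $\alpha\Delta$ from the denominator, and the elementary bound $\alpha k + |\mathsf{A}_w| > \alpha \sum_{o_i\in \mathsf{A}_w}\mathcal P_i$ forces the fraction strictly down, giving $E(F_\alpha(\mathsf{A}_w)) < E^{\mathrm{old}}(F_\alpha(\mathsf{A}_w))$. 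Second, because $\mathsf{A}_{w-1} \subseteq \mathsf{A}_w$ and, in the old indexing, $\mathsf{A}_w$ is a prefix $L^{\tau+p}$ extending the optimal prefix $L^{\tau} = \mathsf{A}_{w-1}$, the unimodality of Theorem~\ref{theorem:threshold} gives $E^{\mathrm{old}}(F_\alpha(\mathsf{A}_w)) \le E^{\mathrm{old}}(F_\alpha(\mathsf{A}_{w-1})) = E(F_\alpha(\mathsf{A}_{w-1}))$. Composing the two bounds establishes $E(F_\alpha(\mathsf{A}_w)) < E(F_\alpha(\mathsf{A}_{w-1}))$.

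The main obstacle is precisely the sub-case where the threshold drops and new, lower-ESP objects enter $\mathsf{A}_w$: the naive argument that fixes the answer set shows $E(F_\alpha)$ falls, while re-optimization can only raise it, so those two bounds point the wrong way. Re-reading $\mathsf{A}_w$ under the old ESP values is the device that resolves this, since it lets me apply Theorem~\ref{theorem:threshold} on a single fixed configuration. The one piece of bookkeeping I must verify is that $\mathsf{A}_w$, viewed with old values, is genuinely a top prefix of the epoch $w-1$ ordering; this holds because the only object whose rank shifts is $o_k$, and it stays strictly above the threshold, so the objects above the new threshold are exactly a contiguous top block $\{o_1,\dots,o_{\tau+p}\}$ of the old ordering.
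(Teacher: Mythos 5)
Your proposal is correct, and on the second half of the lemma it is genuinely more careful than the paper's own argument. For the threshold/membership part you do essentially what the paper does: substitute the decremented prefix sums into Equation~\ref{eqn:ThresholdInequalityGreater} and observe that the governing fraction $\tfrac{X}{Y}$ drops under $X \mapsto X-\Delta$, $Y \mapsto Y-\Delta$ because $X<Y$. (One small imprecision, shared with the paper: for objects ranked \emph{above} $o_k$ the prefix sum is unchanged while the denominator shrinks, so their individual inequalities actually get \emph{harder}; the clean statement is that the test for the old threshold object, which sits below $o_k$, still passes, and the unimodality of Theorem~\ref{theorem:threshold} then forces every earlier marginal to stay positive, so no object can leave.) Where you diverge is the proof that $E(F_\alpha)$ strictly decreases. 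The paper simply compares old and new values of the answer set via Equation~\ref{eqn:expected_f1_measure} --- numerator down by $(1+\alpha)\Delta$, denominator down by $\alpha\Delta$ --- which is valid only for a \emph{fixed} set and silently ignores the sub-case you single out, namely that the threshold may drop and new low-ESP objects may enter $\mathsf{A}_w$, which re-optimization can only push upward. Your intermediate quantity $E^{\mathrm{old}}(F_\alpha(\mathsf{A}_w))$ closes exactly that hole: the first link $E(F_\alpha(\mathsf{A}_w)) < E^{\mathrm{old}}(F_\alpha(\mathsf{A}_w))$ is the paper's numerator/denominator computation (justified by $(1+\alpha)D > \alpha N$), and the second link $E^{\mathrm{old}}(F_\alpha(\mathsf{A}_w)) \le E(F_\alpha(\mathsf{A}_{w-1}))$ follows from Theorem~\ref{theorem:threshold} once you check, as you do, that $\mathsf{A}_w$ read under the old values is a top prefix of the old ordering. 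What each approach buys: the paper's proof is two lines but incomplete as written; yours costs the prefix bookkeeping and an appeal to unimodality, but it covers the re-optimization sub-case and in fact works whether the answer set grows, shrinks, or stays put, needing only that $o_k \in \mathsf{A}_w$ --- which your first part guarantees.
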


\begin{lemma}\label{theorem:lemmaInsideObject3}
If the ESP value of $o_k$ decreases in epoch $w$ and becomes less than $\mathcal P^{w-1}_\tau$, then the threshold $\mathcal P^{w}_\tau$ can increase or decrease. In both the cases, the $E(F_\alpha)$ measure of ${\mathsf{A}_{w}}$ will be lower than that of ${\mathsf{A}_{w-1}}$.
\end{lemma}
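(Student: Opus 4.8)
The plan is to track how the threshold procedure of Lemma~\ref{lemma:thresholdFunction} rebuilds the answer set when only $o_k$'s ESP changes, and then to compare the resulting expected $F_\alpha$ value against the old one through Equation~\ref{eqn:expected_f1_measure}. First I would record two structural facts. Since $o_k \in \mathsf{A}_{w-1}$ had $\mathcal P_k \ge \mathcal P^{w-1}_\tau$ while its new value $\mathcal P_k'$ is strictly below $\mathcal P^{w-1}_\tau$, every other object of $\mathsf{A}_{w-1}$ still has ESP at least $\mathcal P^{w-1}_\tau > \mathcal P_k'$; hence $o_k$ is demoted below all of them, so $o_k \notin \mathsf{A}_w$, and the objects of $\mathsf{A}_{w-1}\setminus\{o_k\}$ remain the top $\tau-1$ entries of the re-sorted list with unchanged ESPs. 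Writing $\delta = \mathcal P_k - \mathcal P_k' > 0$, the only effects on Equation~\ref{eqn:expected_f1_measure} are that the denominator sum $\sum_{o_j \in O}\mathcal P_j$ drops by $\delta$ and that $o_k$'s contribution $\mathcal P_k$ disappears from the numerator of any prefix that used to contain it.

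Next I would split on the two behaviours named in the statement. If the threshold $\mathcal P^w_\tau$ increases, then $\mathsf{A}_w$ is a proper prefix of $\mathsf{A}_{w-1}\setminus\{o_k\}$, so its numerator is a strictly smaller sum of ESPs than that of $\mathsf{A}_{w-1}$; here I would invoke the inclusion inequality~\ref{eqn:ThresholdInequalityGreater} at the old threshold to control how many high objects are shed. If the threshold decreases, then $\mathsf{A}_w$ contains all of $\mathsf{A}_{w-1}\setminus\{o_k\}$ together with objects that were previously excluded; for each newly admitted $o_j$ I would use the exclusion inequality~\ref{eqn:ThresholdInequalityLess}, which held in epoch $w-1$, to bound $\mathcal P_j$ from above by the running average, exactly as in the proof of Theorem~\ref{theorem:threshold}. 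In both cases the aim is to reduce the target inequality $E(F_\alpha(\mathsf{A}_w)) < E(F_\alpha(\mathsf{A}_{w-1}))$ to a cross-multiplied polynomial inequality in $\mathcal P_k$, $\delta$, the shed/added ESPs, $\tau$, and the total sum, and then to discharge it using the two threshold inequalities.

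The main obstacle is that the two consequences of the drop pull in opposite directions: removing $o_k$ costs $\mathcal P_k$ in the numerator, which lowers the ratio, but the simultaneous loss of $\delta$ from the denominator sum \emph{raises} the ratio of every prefix that excludes $o_k$, and a lowered threshold can further inflate the numerator by admitting new objects. A naive ``the numerator decreases, hence $E(F_\alpha)$ decreases'' argument is therefore insufficient, since the denominator moves as well; the crux is to show that the numerator loss from $o_k$, together with the averaging bounds forced by inequalities~\ref{eqn:ThresholdInequalityGreater} and~\ref{eqn:ThresholdInequalityLess}, always dominates the denominator gain. I expect this balancing step to be the delicate part, and it is precisely where the hypothesis $\mathcal P_k \ge \mathcal P^{w-1}_\tau$ (so that $o_k$ was contributing at a level above any admitted object) must be used; strictness of the final inequality then follows from $\delta > 0$ and from $o_k$ leaving the answer set. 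Combined with Lemmas~\ref{theorem:lemmaInsideObject1} and~\ref{theorem:lemmaInsideObject2}, this lemma delivers the ``only if'' direction of the first part of Theorem~\ref{theorem:thresholdFunctionInsideObjectAll}.
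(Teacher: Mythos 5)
Your proposal stops exactly where the proof has to happen, and that is a genuine gap: the "balancing step" you defer — showing that the numerator loss $(1+\alpha)\mathcal P_k$ always dominates the denominator loss $\alpha\delta + 1$ — is not merely delicate, it is false, so no combination of inequalities~\ref{eqn:ThresholdInequalityGreater} and~\ref{eqn:ThresholdInequalityLess} can discharge it. Concretely, take $\alpha = 1$ and ESPs consisting of ten objects at $1.0$, $o_k$ at $0.6$, and one object at $0.05$. Then $\mathsf{A}_{w-1}$ is the eleven objects with ESP $\ge 0.6$ (adding the $0.05$ object lowers the expected $F_1$), the threshold is $\mathcal P^{w-1}_\tau = 0.6$, and $E(F_1(\mathsf{A}_{w-1})) = 21.2/21.65 \approx 0.979$. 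Now let $o_k$ drop to $0.01$, well below the old threshold. The optimal prefix in epoch $w$ is the ten $1.0$-objects alone, giving $E(F_1(\mathsf{A}_{w})) = 20/20.06 \approx 0.997 > 0.979$: the expected $F_1$ \emph{increased}. The mechanism is precisely the tension you identified: the numerator of Equation~\ref{eqn:expected_f1_measure} loses $2\mathcal P_k = 1.2$, but the denominator loses $\alpha\delta + 1 = 1.59$, and since the fraction is close to $1$ the denominator reduction wins. Intuitively, lowering $\mathcal P_k$ shrinks the expected ground-truth mass $\sum_{o_j \in O}\mathcal P_j$, which inflates the expected recall of the surviving answers. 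Hence the lemma (and with it the "only if" half of part 1 of Theorem~\ref{theorem:thresholdFunctionInsideObjectAll}) fails as stated, and your plan cannot be completed without weakening the statement, e.g., by bounding $\delta$ or by measuring quality against a ground-truth mass that is held fixed across epochs.

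For comparison, the paper's own proof is essentially the naive argument you correctly rejected as insufficient: it asserts that the numerator falls by $(1+\alpha)\cdot\mathcal P_k$ while the denominator falls "by a lower amount of $\alpha \cdot \Delta$," and concludes the ratio must fall. This under-counts the denominator change: when $o_k$ exits the answer set, $|\mathsf{A}_w|$ also drops by one, so the denominator falls by $\alpha\Delta + 1$, and the comparison the argument actually needs, $(1+\alpha)\mathcal P_k > \alpha\Delta + 1$, is exactly what fails in the example above ($1.2$ versus $1.59$). So your more careful accounting of the denominator is correct, and your diagnosis of where the difficulty sits is more accurate than the paper's proof; the flaw in the proposal is that it treats as an eventually provable inequality a claim that in fact admits counterexamples.
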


Based on the above lemmas we can conclude that, given an object $o_k$ $\in$ $\mathsf{A}_{w-1}$, the $E(F_\alpha)$ measure of $\mathsf{A}_{w}$ will increase {\em w.r.t.} $\mathsf{A}_{w-1}$ only if the ESP value of $o_k$ increases in epoch $w$. 
\vspace{-0.1cm}

%\vspace{-0.1cm}
\begin{lemma}\label{theorem:lemmaOutsideObject1}
If the ESP value of $o_k$ increases in epoch $w$ and becomes higher than \emph{$\mathcal P^{w-1}_\tau$}, then the threshold $\mathcal P^{w}_\tau$ can remain the same as \emph{$\mathcal P^{w-1}_\tau$} or increase (i.e., some objects which were part of $\mathsf{A}_{w-1}$, might move out of $\mathsf{A}_{w}$). In both the cases, the $E(F_\alpha)$ measure of $\mathsf{A}_{w}$ will be higher than  $\mathsf{A}_{w-1}$.
%\vspace{-0.3cm}
\end{lemma}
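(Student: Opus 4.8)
The plan is to track how the answer-set boundary moves when only $o_k$'s probability mass changes, and then to bound the resulting $E(F_\alpha)$ using the two threshold inequalities that characterize $\mathcal P^{w-1}_\tau$ in Lemma~\ref{lemma:thresholdFunction}. Write $\Delta = \hat{\mathcal P}_k - \mathcal P_k > 0$ for the increase, $S = \sum_{o_i \in \mathsf{A}_{w-1}} \mathcal P_i$, $\tau = |\mathsf{A}_{w-1}|$, and $X = S - \mathcal P^{w-1}_\tau$. Since $o_k \notin \mathsf{A}_{w-1}$ its old mass $\mathcal P_k$ lies outside $S$, so the total grows from $k$ to $k+\Delta$ and $o_k$ migrates above the old threshold object $o_\tau$ in the sorted list (because $\hat{\mathcal P}_k > \mathcal P^{w-1}_\tau = \mathcal P_\tau$). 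Applying the inclusion criterion of Lemma~\ref{lemma:thresholdFunction} (Equation~\ref{eqn:ThresholdInequalityGreater}) to $o_\tau$ in epoch $w$, the object $o_\tau$ stays in the answer set iff $\mathcal P_\tau > \frac{X + \hat{\mathcal P}_k}{(\tau-1)+\alpha k + 1 + \Delta}$. Comparing this updated right-hand side to the epoch-$(w-1)$ value $\frac{X}{(\tau-1)+\alpha k}$, which $\mathcal P_\tau$ strictly exceeded, splits the argument into the two cases named in the statement: the threshold is unchanged when the updated fraction is the smaller one, and it rises otherwise.

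\textbf{Case threshold unchanged.} Here $\mathsf{A}_w = \mathsf{A}_{w-1} \cup \{o_k\}$, so by Equation~\ref{eqn:expected_f1_measure} the claim $E(F_\alpha(\mathsf{A}_w)) > E(F_\alpha(\mathsf{A}_{w-1}))$ reduces, after cross-multiplying the two positive denominators, to the single inequality
\[
\hat{\mathcal P}_k\,(\alpha k + \tau) \;>\; S\,(1 + \alpha\Delta).
\]
I would derive this from the fact that $o_\tau$ satisfied Equation~\ref{eqn:ThresholdInequalityGreater} in epoch $w-1$, which rearranges to $\mathcal P_\tau(\alpha k + \tau) > S$, together with $\hat{\mathcal P}_k > \mathcal P_\tau$ and the structural bound $k - S \ge \mathcal P_k$ (the old mass of $o_k$ is part of the outside mass). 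These three facts are what pin the numerator gain $\hat{\mathcal P}_k$ against the denominator gain $\alpha\Delta + 1$.

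\textbf{Case threshold rises.} Now $o_\tau$ (and possibly lower objects) leaves, so $\mathsf{A}_w$ is a shorter sorted prefix together with $o_k$. The point is that $\mathcal P_\tau$ is bracketed: from below by its epoch-$(w-1)$ inclusion condition (Equation~\ref{eqn:ThresholdInequalityGreater}), $\mathcal P_\tau > \frac{X}{(\tau-1)+\alpha k}$, and from above by the fact that it now fails inclusion, $\mathcal P_\tau < \frac{X + \hat{\mathcal P}_k}{\tau + \alpha(k+\Delta)}$. Substituting these two bounds into the simplified difference $E(F_\alpha(\mathsf{A}_w)) - E(F_\alpha(\mathsf{A}_{w-1}))$ and simplifying should make the difference positive, mirroring exactly the bracketing technique used to prove Lemma~\ref{lemma:thresholdFunction} itself; I would reduce the multi-object case to a telescoping sum so that only the single boundary object $o_\tau$ needs to be controlled.

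The hard part is the strict increase in the unchanged-threshold case. The naive ``the numerator grows by $\hat{\mathcal P}_k$ while the denominator grows only by $\alpha\Delta+1$'' reading is not self-justifying, because raising $\hat{\mathcal P}_k$ simultaneously inflates the recall denominator through the total mass $k+\Delta$, so the precision gain and the recall loss must be weighed against each other through the threshold conditions rather than merely asserted. This balance degrades toward equality precisely when $\mathsf{A}_{w-1}$ already captures almost all of the probability mass (expected recall near $1$); there the bound $k - S \ge \mathcal P_k$ and the strictness $\hat{\mathcal P}_k > \mathcal P_\tau$ must be invoked very carefully, and I expect that closing the inequality cleanly in that near-saturation regime, rather than the threshold bookkeeping, is where the genuine effort lies.
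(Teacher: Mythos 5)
Your reduction of the unchanged-threshold case is correct: writing $S=\sum_{o_i\in\mathsf{A}_{w-1}}\mathcal P_i$, $\tau=|\mathsf{A}_{w-1}|$, and $k=\sum_{o_j\in O}\mathcal P_j$, the claim is exactly $\hat{\mathcal P}_k(\alpha k+\tau)>S(1+\alpha\Delta)$. But the derivation you defer --- obtaining this from $\mathcal P_\tau(\alpha k+\tau)>S$, $\hat{\mathcal P}_k>\mathcal P_\tau$, and $k-S\ge\mathcal P_k$ --- cannot be completed, because those three facts do not imply the target inequality; in fact the lemma's conclusion itself fails. Take $\alpha=1$ and the instance $\mathcal P_1=1$, $\mathcal P_2=0.3$, $\mathcal P_k=0$, plus $105$ objects of probability $0.01$ each (outside mass $1.05$), so $k=2.35$. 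The prefix maximization gives $\mathsf{A}_{w-1}=\{o_1,o_2\}$ with $E(F_1)=2.6/4.35\approx 0.598$ and threshold $\mathcal P^{w-1}_\tau=0.3$ (indeed $0.3\cdot 4.35=1.305>1.3=S$, and the next object satisfies $0.01<1.3/4.35$). Now let $o_k$ rise to $\hat{\mathcal P}_k=0.31>\mathcal P^{w-1}_\tau$, so $\Delta=0.31$. All three of your facts hold ($1.305>1.3$; $0.31>0.3$; $1.05\ge 0$), yet your target inequality demands $0.31\cdot 4.35=1.35>1.3\cdot 1.31=1.70$, which is false. This is not an artifact of the reduction: the new optimal answer set is $\{o_1,o_k,o_2\}$ with $E(F_1)=3.22/5.66\approx 0.569<0.598$, so the expected $F_1$ of $\mathsf{A}_w$ is strictly lower than that of $\mathsf{A}_{w-1}$. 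The failure mode is an object far below a near-tight threshold jumping to just above it: the jump inflates the recall denominator by $\alpha\Delta$ and the cardinality term by $1$, while the numerator gains only $(1+\alpha)\hat{\mathcal P}_k$, which can be smaller. (Note also that your diagnosis of where it degrades is slightly off: in this instance the expected recall is only about $0.55$, not near saturation.)

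Your instinct that this step is the crux, and that it ``must be weighed against each other through the threshold conditions rather than merely asserted,'' is exactly right; what the proposal misses is that the step is unprovable rather than merely delicate. For comparison, the paper's own proof never reaches your inequality: it asserts that the numerator grows by $(1+\alpha)(\mathcal P_k+\Delta)$ while the denominator grows ``by a lower amount'' $\alpha\Delta$, silently dropping the $+1$ that enters the denominator because $|\mathsf{A}_w|=|\mathsf{A}_{w-1}|+1$, and never verifying that the numerator gain exceeds the denominator gain (in the instance above it does not: $0.62<1.31$). So the paper's argument breaks at precisely the point you flagged. A repaired statement needs a stronger hypothesis, for example $(1+\alpha)\hat{\mathcal P}_k>1+\alpha\Delta$, which together with $E(F_\alpha(\mathsf{A}_{w-1}))\le 1$ does force the increase by the standard mediant argument; alternatively one must compare against $\mathsf{A}_{w-1}$ re-evaluated at the epoch-$w$ probabilities, in which case the claim is immediate from optimality of the threshold selection but no longer asserts monotone progress across epochs, which is what the candidate-selection strategy actually relies on.
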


\begin{lemma}\label{theorem:lemmaOutsideObject2}
%\vspace{-0.15cm}
If the ESP value of $o_k$ increases in epoch $w$ but does not become higher than $\mathcal P^{w-1}_\tau$, then the threshold $\mathcal P^{w}_\tau$ will decrease. The $E(F_\alpha)$ measure of $\mathsf{A}_{w}$ will increase as compared to $\mathsf{A}_{w-1}$. 
%\vspace{-0.2cm}
\end{lemma}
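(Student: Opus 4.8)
The plan is to follow the same bookkeeping used for the companion case in Lemma~\ref{theorem:lemmaOutsideObject1}, adapting it to the sub-case where the increased ESP of $o_k$ stays at or below the old threshold $\mathcal P^{w-1}_\tau$. First I would fix notation following Theorem~\ref{theorem:threshold}: sort the objects decreasingly in epoch $w-1$ so that $\mathsf{A}_{w-1} = L^\tau$ with threshold $\mathcal P^{w-1}_\tau = \mathcal P_\tau$, and write $E(F_\alpha(\mathsf{A}_{w-1})) = X/Y$ where, by Equation~\ref{eqn:expected_f1_measure}, $X = (1+\alpha)(\mathcal P_1 + \dots + \mathcal P_\tau)$ and $Y = \alpha k + \tau$ with $k = \sum_{o_j \in O}\mathcal P_j$. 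Let $o_k \notin \mathsf{A}_{w-1}$ have its ESP rise from $\mathcal P_k$ to $\mathcal P_k + \Delta$, with $\Delta > 0$ and $\mathcal P_k + \Delta \le \mathcal P_\tau$.

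The first step is to establish the threshold decrease. Since $\mathcal P_k + \Delta \le \mathcal P_\tau$, the object $o_k$ cannot overtake any member of $\mathsf{A}_{w-1}$, so the prefix sums $\mathcal P_1 + \dots + \mathcal P_{\tau-1}$ and $\mathcal P_1 + \dots + \mathcal P_\tau$ are unchanged while the total mass becomes $k+\Delta$. Feeding this into the defining inequalities of Lemma~\ref{lemma:thresholdFunction}, the running bar in Equation~\ref{eqn:ThresholdInequalityLess} drops from $\tfrac{\mathcal P_1 + \dots + \mathcal P_\tau}{\tau + k}$ to $\tfrac{\mathcal P_1 + \dots + \mathcal P_\tau}{\tau + k + \Delta}$; I would then argue that the smallest index satisfying both inequalities of Lemma~\ref{lemma:thresholdFunction} moves to a strictly later position, so further objects (in particular $o_k$ once it clears the lowered bar) are admitted into $\mathsf{A}_w$, giving $\mathcal P^{w}_\tau < \mathcal P^{w-1}_\tau$.

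The second step is to lower-bound $E(F_\alpha(\mathsf{A}_w))$. Writing the new measure over the grown answer set as $\tfrac{X + (1+\alpha)\cdot(\text{added mass})}{Y + \alpha\Delta + (\text{added count})}$, I would substitute the lower bound $\mathcal P_\tau > X/Y$ coming from Equation~\ref{eqn:ThresholdInequalityGreater} together with the upper bound furnished by the new inclusion condition (each admitted object's ESP exceeds the current running bar), and then simplify exactly as in the worked argument for Lemma~\ref{theorem:lemmaOutsideObject1} to conclude that the grown fraction strictly exceeds $X/Y = E(F_\alpha(\mathsf{A}_{w-1}))$.

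The hard part will be this second step, and specifically controlling the two opposing effects: inflating the total mass $k \to k+\Delta$ enlarges the denominator (which on its own lowers $E(F_\alpha)$), whereas the simultaneous growth of the answer set enlarges the numerator. The net inequality is not self-evident, since it can fail if the admitted objects carry too little mass to offset the denominator growth, so the argument must lean crucially on the inclusion condition to certify that every newly admitted object contributes net-positively. I expect the careful propagation of these two threshold bounds through the algebra of Equation~\ref{eqn:expected_f1_measure}, rather than any single manipulation, to be the delicate core of the proof.
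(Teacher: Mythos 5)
Your proposal takes essentially the same route as the paper's proof: use the two threshold inequalities behind Lemma~\ref{lemma:thresholdFunction} to show that the old threshold object survives while the admission bar drops (so the threshold decreases and new objects enter $\mathsf{A}_w$), then compare the numerator and denominator increments of Equation~\ref{eqn:expected_f1_measure}, leaning on the inclusion condition, to conclude that $E(F_\alpha)$ increases. The delicate points you flag are exactly the ones the paper treats most briskly---it asserts outright that $o_{\tau+1}$ clears the lowered bar and that the numerator gain $(1+\alpha)\cdot\mathcal P_{\tau+1}$ dominates the denominator gain $\alpha\cdot\Delta$---so your sketch matches the published argument, leaps included.
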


\begin{lemma}\label{theorem:lemmaOutsideObject3}
%\vspace{-0.2cm}
If the ESP value of $o_k$ decreases in epoch $w$, then the threshold $\mathcal P^{w}_\tau$ will remain the same as $\mathcal P^{w-1}_\tau$. The $E(F_\alpha)$ measure of $\mathsf{A}_{w}$ will be equal to the $E(F_\alpha)$ measure of $\mathsf{A}_{w-1}$. 
%\vspace{-0.1cm}
\end{lemma}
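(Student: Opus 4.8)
The plan is to prove both assertions of Lemma~\ref{theorem:lemmaOutsideObject3} — that the threshold is unchanged, $\mathcal P^{w}_\tau = \mathcal P^{w-1}_\tau$, and that $E(F_\alpha(\mathsf{A}_{w})) = E(F_\alpha(\mathsf{A}_{w-1}))$ — by exploiting that $o_k$ lies outside the answer set and stays there. Since this is the ``outside object'' branch we have $o_k \not\in \mathsf{A}_{w-1}$, which by the construction in Section~\ref{Answercomputealgorithm} means $\mathcal P_k < \mathcal P^{w-1}_\tau$; i.e., in the decreasing-sorted list $L$ the object $o_k$ sits strictly below the top prefix that forms $\mathsf{A}_{w-1}$. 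When its ESP drops from $\mathcal P_k$ to $\mathcal P_k' < \mathcal P_k$, it can only slide further down $L$, so it remains strictly below the threshold and every object comprising $\mathsf{A}_{w-1}$ — in particular the threshold object $o_\tau$ — keeps its ESP value unchanged. First I would record $\Delta = \mathcal P_k - \mathcal P_k' > 0$ and observe that the only induced change is $k \mapsto k - \Delta$ in the total $k = \sum_{o_j \in O}\mathcal P_j$, while all prefix sums over $\mathsf{A}_{w-1}$ are untouched.

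For the threshold claim I would invoke the characterization of Lemma~\ref{lemma:thresholdFunction}: $o_\tau$ is the threshold iff the two inequalities used in its proof, Equation~\ref{eqn:ThresholdInequalityGreater} and Equation~\ref{eqn:ThresholdInequalityLess}, both hold. These reference only the prefix sums $\mathcal P_1 + \cdots + \mathcal P_{\tau-1}$ and $\mathcal P_1 + \cdots + \mathcal P_\tau$, which are unchanged, together with $k$. Equation~\ref{eqn:ThresholdInequalityLess} has $k$ only in its denominator, so shrinking $k$ to $k-\Delta$ enlarges its right-hand side and the inequality is preserved a fortiori. Equation~\ref{eqn:ThresholdInequalityGreater} is the delicate one: its right-hand side $\frac{\mathcal P_1 + \cdots + \mathcal P_{\tau-1}}{(\tau-1)+(k-\Delta)}$ grows slightly as $k$ shrinks, so I must argue $\mathcal P_\tau$ still clears it. I would establish that the perturbation is second order — $\mathcal P_\tau$ already exceeds the bound at $k$ with a strictly positive margin, and the $\Delta$-induced growth of that bound is dominated by the margin because $\Delta < \mathcal P_k < \mathcal P_\tau$ — so the peak index of $E(F_\alpha(L^{(\cdot)}))$ from Theorem~\ref{theorem:threshold} does not shift. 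This gives $\mathsf{A}_w = \mathsf{A}_{w-1}$ as a set and hence $\mathcal P^{w}_\tau = \mathcal P^{w-1}_\tau$.

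Given the preserved threshold, the quality claim would follow from the closed form in Equation~\ref{eqn:f1_expression_3}, $E(F_\alpha(\mathsf{A}_w)) = \frac{(1+\alpha)\,k_1}{\alpha\,k_2 + \tau}$ with $k_1 = \sum_{o_i \in \mathsf{A}_w}\mathcal P_i$: since the answer set and all its members are unchanged, both $k_1$ and $\tau$ equal their epoch-$(w-1)$ values, and substituting yields exactly $E(F_\alpha(\mathsf{A}_{w-1}))$. I expect this final step to be the main obstacle in making the equality genuinely \emph{exact}, because the recall-denominator term $k_2 = \sum_{o_j \in O}\mathcal P_j$ does fall by $\Delta$, and taken in isolation that would strictly \emph{raise} the ratio rather than leave it fixed. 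Reconciling this with the stated equality requires being explicit that answer quality is assessed through the unchanged contents of $\mathsf{A}_w$ — the precision numerator and cardinality that are invariant under refining an already-excluded object — so that an out-of-answer perturbation registers no quality change. This is precisely the boundary instance of the ``$\ge$'' guarantee in part~2 of Theorem~\ref{theorem:thresholdFunctionInsideObjectAll}, and pinning down that the recall term leaves the value invariant (rather than merely non-decreasing) is the crux I would treat most carefully.
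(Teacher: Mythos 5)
Your proposal is more careful than the paper's own proof, but the step you yourself flag as ``delicate'' --- preserving Equation~\ref{eqn:ThresholdInequalityGreater} after the total mass drops --- contains a genuine gap, and in fact the claim it is meant to rescue is false in general. Your argument is that the margin by which $\mathcal P_\tau$ clears $\frac{\mathcal P_1+\cdots+\mathcal P_{\tau-1}}{\tau-1+k}$ dominates the growth of that bound because $\Delta < \mathcal P_k < \mathcal P_\tau$. But $\Delta < \mathcal P_\tau$ bounds the perturbation only in absolute terms, not relative to the margin, which can be arbitrarily small. Concrete counterexample (with $\alpha = 1$): take ten objects with ESP $1.0$, one object $o_\tau$ with $\mathcal P_\tau = 0.48$, and one outside object $o_k$ with $\mathcal P_k = 0.4$ dropping to $0$. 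Before the drop, $k = 10.88$ and $10/(10+10.88) \approx 0.479 < 0.48$, while $0.4 < 10.48/(11+10.88) \approx 0.479$, so $\mathsf{A}_{w-1}$ contains $o_\tau$ and $\mathcal P^{w-1}_\tau = 0.48$. After the drop, $k = 10.48$ and $10/(10+10.48) \approx 0.488 > 0.48$, so $o_\tau$ is ejected: the answer set shrinks to the ten $1.0$-objects and the threshold jumps to $1.0$. Moreover $E(F_\alpha)$ moves from $20.96/21.88 \approx 0.958$ to $20/20.48 \approx 0.977$ --- it strictly \emph{increases}, so the lemma's asserted equality fails as well. No refinement of your margin argument can close this, because the conclusion itself does not hold at the boundary.

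For comparison, the paper's proof is a two-sentence contribution argument: since $o_k \not\in \mathsf{A}_{w-1}$ it ``did not contribute'' to the quality, its ESP only decreases so it stays out of $\mathsf{A}_w$, hence the measure is unchanged. This silently assumes the rest of the answer set is undisturbed (it is not, as above) and ignores exactly the recall-denominator effect you correctly isolated: in Equation~\ref{eqn:expected_f1_measure} the term $\alpha\sum_{o_j\in O}\mathcal P_j$ falls by $\alpha\Delta$, so even with $\mathsf{A}_w = \mathsf{A}_{w-1}$ the value strictly rises rather than staying fixed. So your diagnosis of the second crux is sharper than the paper's own treatment; what is actually true, and all that the enclosing Theorem~\ref{theorem:thresholdFunctionInsideObjectAll} needs, is the one-sided bound $E(F_\alpha(\mathsf{A}_w)) \geq E(F_\alpha(\mathsf{A}_{w-1}))$. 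That weak inequality has a clean proof along the lines you started: the old answer set's numerator is unchanged while the denominator weakly decreases, so the old prefix already achieves at least the old value, and re-maximizing over prefixes (Theorem~\ref{theorem:threshold}) can only help. You should prove that statement and note that the threshold may stay the same or increase, rather than attempting to force exact invariance.
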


%\vspace{-0.2cm}
According to Lemmas \ref{theorem:lemmaOutsideObject1}, \ref{theorem:lemmaOutsideObject2}, and \ref{theorem:lemmaOutsideObject3}, given an object $o_k$ $\not\in$ $\mathsf{A}_{w-1}$, we can conclude that the $E(F_\alpha)$ measure of $\mathsf{A}_{w}$ increases or remains the same {\em w.r.t.} $\mathsf{A}_{w-1}$ but it never decreases. 
\end{proof}

%%%%% Proof of all Lemmas %%%%%

%%%%%%%%%%%%%%%%%%%%%%%
\subsection{Proof of Lemma \ref{theorem:lemmaInsideObject1} }
\begin{proof}
Let us assume that ESP value of object $o_k$ is increased from $\mathcal P_k$ to $\mathcal P_k + \Delta$. Since $\mathcal P^{w-1}_\tau$ is the threshold probability of epoch $w-1$, the following inequality holds:
\begin{equation}\label{ref:p_(m+1)}
\begin{split}
&\mathcal P^{w-1}_\tau > \frac{(\mathcal P_1+\mathcal P_2+...\mathcal P_{\tau-1})}{(\tau-1+K)} \\
&\Rightarrow \mathcal P_{\tau}(\tau-1+K) > (\mathcal P_1+\mathcal P_2+...\mathcal P_{\tau-1})
\end{split}
\end{equation}

In the above equation, $K = \sum\limits_{i=1}^{|O|} \mathcal P_i$. In epoch $w$, if $o_{\tau}$ is still the threshold even after increasing the ESP value object $o_k$ then the following inequality must hold:
\begin{equation}\label{eqn:deltaP}
\begin{split}
&\mathcal P_{\tau} > \frac{(\mathcal P_1+\mathcal P_2+..+\mathcal P_i+\Delta+....\mathcal P_{\tau-1})}{(\tau-1+K+\Delta)} ,  K = \sum\limits_{j=1}^{|O|} \mathcal P_j \\
&\Rightarrow \mathcal P_{\tau}(\tau-1+K+\Delta) > (\mathcal P_1+\mathcal P_2+..+\mathcal P_i+\Delta+....\mathcal P_{\tau-1}) \\
&\Rightarrow \mathcal P_{\tau}(\tau-1+K) + \Delta \mathcal P_{\tau} > \mathcal P_1+\mathcal P_2+...+\mathcal P_{\tau-1} + \Delta \\
&\Rightarrow \mathcal P_{\tau}(\tau+k) - (\mathcal P_1+\mathcal P_2+...+\mathcal P_{\tau-1}) > \Delta (1-\mathcal P_{\tau}) \\
& \Rightarrow \Delta (1-\mathcal P_{\tau}) < \mathcal P_{\tau}(\tau+K) - (\mathcal P_1+\mathcal P_2+...+\mathcal P_{\tau-1}) \\
&\Rightarrow\Delta < \frac{\mathcal P_{\tau}(\tau+K) - (\mathcal P_1+\mathcal P_2+...+\mathcal P_{\tau-1})}{1-\mathcal P_{\tau}}
\end{split}
\end{equation}

%From equation \ref{ref:p_(m+1)}, we can derive that $\mathcal P_{\tau}(\tau-1+K) - (\mathcal P_1+\mathcal P_2+...+\mathcal P_{\tau-1}) >0$. This implies, both the terms in L.H.S. Substituting this condition in Equation \ref{eqn:deltaP} will be as follows:

The threshold value of epoch $w$ will remain the same as $w-1$, as long as $\Delta$ value is less than $\frac{\mathcal P_{\tau}(\tau+k) - \sum\limits_{i=1}^{\tau-1}\mathcal P_i}{1-\mathcal P_{\tau}}$, otherwise it will be increased. 

From Equation \ref{eqn:expected_f1_measure}, we can see that as the ESP value of $o_k$
increases from $\mathcal P_k$ to $\mathcal P_k + \Delta$, the numerator increases by an amount of $(1+ \alpha) \cdot \Delta$. The denominator also increases but it increases by a smaller amount of $(\alpha \cdot \Delta)$ as compared to the numerator. This implies that the value of $E(F_{\alpha}(\mathsf{A}_w))$ will increase from epoch $w-1$. 
\end{proof}

\subsection{Proof of Lemma \ref{theorem:lemmaInsideObject2} }

\begin{proof}
Suppose the ESP value object $o_k$ decreases from $\mathcal P_k$ to $\mathcal P_k- \Delta$ in epoch $w$. According to the inequality \ref{eqn:ThresholdInequalityGreater}, if object $o_\tau$ is still in the answer set, then the following condition must hold:
\begin{equation}
\begin{split}
\mathcal P_{\tau} > \frac{(\mathcal P_1+\mathcal P_2+...\mathcal P_{\tau-1} - \Delta)}{(\tau-1+K - \Delta)}
\end{split}
\end{equation}
Now since the right side of the above inequality $\frac{(\mathcal P_1+\mathcal P_2+...\mathcal P_{\tau-1} - \Delta)}{(\tau-1+K - \Delta)}$ is reduced from the previous value, i.e. $\frac{(\mathcal P_1+\mathcal P_2+...\mathcal P_{\tau-1})}{(\tau-1+K)}$, this inequality will always hold. This implies that the threshold will remain the same or it will decrease.
%&E(F_\alpha(L^{\tau})) =\frac{(1+\alpha).\frac{k_1}{\tau}.\frac{k_1}{k_2}}{\alpha \cdot\frac{k_1}{\tau}+\frac{k_1}{k_2}}
From Equation \ref{eqn:expected_f1_measure}, we can see that the value in the numerator and denominator both decreases. However, the value of the numerator decreases more as compared to the value of the denominator which implies an overall reduction of $E(F_\alpha)$ measure of the answer set. 
\end{proof}

\subsection{Proof of Lemma \ref{theorem:lemmaInsideObject3} }
\begin{proof}
Suppose the ESP value of object $o_k$ is reduced from $\mathcal P_k$ to $\mathcal P_k-\Delta$ such that the new ESP value $\mathcal P_k -\Delta$ is lower than $\mathcal P^{w-1}_\tau$. From Equation \ref{eqn:expected_f1_measure}, it can be observed that in the new epoch the numerator is reduced by an amount of $(1+\alpha) \cdot (\mathcal P_k)$ whereas the denominator is reduced by a lower amount of $ \alpha \cdot \Delta$. This implies that the $E(F_\alpha)$ measure of the answer set will decrease from previous epoch $w-1$.

\end{proof}

\subsection{Proof of Lemma \ref{theorem:lemmaOutsideObject1} }
\begin{proof}
Suppose the ESP value of $o_k$ is increased from $\mathcal P_k$ to $\mathcal P_k+\Delta$, where the value of $\mathcal P_k + \Delta$ is higher than the previous threshold $\mathcal P^{w-1}_\tau$. In Equation \ref{eqn:expected_f1_measure}, it can be observed that in the new epoch the numerator is increased by an amount of $(1+\alpha) \cdot (\mathcal P_k + \Delta)$ whereas the denominator is increased by a lower amount of $ \alpha \cdot \Delta$. This implies that the $E(F_\alpha)$ measure of the answer set will increase from epoch $w-1$.

\end{proof}

\subsection{Proof of Lemma \ref{theorem:lemmaOutsideObject2} }
\begin{proof}

Let us consider that the ESP value of $o_k$ is increased from $\mathcal P_k$ to $\mathcal P_k+\Delta$, where the value of $\mathcal P_k + \Delta$ is lower than the previous threshold $\mathcal P^{w-1}_\tau$. In order to prove this Lemma, we first prove the condition on threshold probability of the new epoch $w$.

Let us consider $\mathcal P^{w-1}_\tau$ be the threshold probability of the answer set in epoch $w-1$. This implies $\mathcal P^{w-1}_\tau$ is higher than $\frac{(\mathcal P_1+\mathcal P_2+...\mathcal P_{\tau-1})}{(\tau-1+k)}$, according to inequality \ref{eqn:ThresholdInequalityGreater}. For simplicity of notation, we denote this fraction by $\frac{X}{Y}$. In the new epoch, in the R.H.S of Equation \ref{eqn:ThresholdInequalityGreater}, the value of numerator stays same, as no extra object was added in the answer set, whereas denominator increases from $Y$ to $Y+\Delta$. So the new right hand side $\frac{X}{Y+\Delta}$ gets smaller than previous right hand side $\frac{X}{Y}$, which implies previous threshold object $o_\tau$ will remain in the answer set of epoch $w$.

Let us consider the second condition of threshold related to object $o_{\tau+1}$. In epoch $w-1$, the following condition was true for object $o_{\tau+1}$:  $ \mathcal P_{\tau+1} < \frac{(\mathcal P_1+\mathcal P_2+...\mathcal P_\tau)}{(\tau+K)}$, where, $K = \sum\limits_{i=1}^N \mathcal P_i$. Let us denote the numerator and denominator by X and Y respectively. In this case, in new epoch $w$, the denominator Y will be increased as the ESP value of object $\mathcal P_k$ is increased from $\mathcal P_k$ to $\mathcal P_k+\Delta$. This implies that the value in the R.H.S (i.e., $\frac{X}{Y+\Delta}$) became smaller as compared to the previous value in the previous epoch (i.e., $\frac{X}{Y}$). This implies that object $o_{\tau+1}$ will become part of the answer set in epoch $w$. This implies that threshold will decrease. Once the threshold decreases, more number of objects are added to the answer set. This implies, in Equation \ref{eqn:expected_f1_measure}, the value of the numerator increases by at-least an amount of $(1+ \alpha) \cdot \mathcal P_{\tau+1}$. The value of the denominator also increases but only by a smaller amount of $(\alpha \cdot \Delta)$ as compared to the numerator. This implies that the $E(F_\alpha)$ measure of the answer set will increase.
\end{proof}

\subsection{Proof of Lemma \ref{theorem:lemmaOutsideObject3} }
\begin{proof}

Let us consider that the ESP value of $o_k$ is decreased from $\mathcal P_k$ to $\mathcal P_k - \Delta$, where the values of $\mathcal P_k $ and $\mathcal P_k - \Delta$, both are lower than the previous threshold $\mathcal P^{w-1}_\tau$. In this scenario, since the object $o_k$ was not part of the answer set in the previous epoch $w-1$, it did not contribute to the quality of the answer set $\mathsf A_{w-1}$. Since in the new epoch, the ESP value of object $o_k$ further decreases, it will not be part of the answer set of $\mathsf A_{w}$. This implies that the $E(F_\alpha)$ measure of the answer set $\mathsf A_{w}$ will remain the same as epoch $w-1$.
\end{proof}

%%%%%%%%%%%%%%%%%%%%%%%%%%%%%%%%%%%
%%%% Selection criteria with threshold change
%%%%%%%%%%%%%%%%%%%%%%%%%%%%%%%%%%%%%%

%\vspace{-0.2cm}

\subsection{Disk-based Solution}\label{sect:disk_based_approach}
%\dg{added this section of disk based solution. please review.}

%\dg{added this section of disk based solution. please review.}
In this section, we have explained how our approach can be extended to the scenario of disk resident objects where the number of objects that require enrichment can not fit in memory. We have leveraged the algorithms proposed by the authors in \cite{Altowim:20F14:PAR:2732967.2732975} to develop the solution. 
In the following, we have explained our approach.

In the plan generation phase of our approach we add a \emph{block selection} step, where we determine which blocks of objects should be brought in memory from disk. We perform this step after the benefit estimation step of the triples. We associate a benefit metric with each block. This is calculated for a block $b_r$ in epoch $w$ as the summation of the benefit values of all the triples present in $\mathsf{TS_w}$ which contains objects from the block $b_r$. We denote this metric as \emph{Block Benefit} value.

We maintain two priority queues, one for the blocks which are present in memory and one for the blocks on the disk. The priority queue for the blocks present in memory is referred to as $PQ_{mem}$ and the priority queue for the disk resident blocks are referred to as $PQ_{disk}$. The blocks in both the priority queues are ordered based on their block benefit values. 

We create equal sized blocks based on the ESP values of the objects at the beginning of the query execution. We assume that the load time of each blocks are approximately same. We denote the load time of a block as $c_{l}$. Based on the available memory size, we determine how many blocks that can be loaded maximally in an epoch (denoted by $d_0$). The number of blocks loaded from disk to memory in epoch $w$ is denoted by $d_w$.

We generate a number of alternate plans where in each plan we flush a certain number of least beneficial blocks (derived from $PQ_{mem}$) from memory to disk and load equal number of highest beneficial blocks from disk to memory (derived from $ PQ_{disk}$). After the flushing and loading of the blocks, we generate a plan which consists of the triples from $\mathsf{TS_w}$ where the objects are present in the blocks residing in the memory. We vary the number of blocks from 0 to $d_0$ that can be flushed from memory to disk and generate $(d_0+1)$ number of alternate plans. For example, if $d_w=0$, then we do not flush any blocks from memory to disk and generate a plan with the triples containing objects present in memory from the previous epoch. If $d_w=1$, then we flush one block (block with least block benefit value) from memory to disk and load one block from disk to memory (block with highest block benefit value). This plan will contain the triples that can be executed in $(v_w - c_{l})$ amount of time. In this way, we generate $(d_0+1)$ alternate plans where each plan corresponds to flushing of $\{0, 1, \cdots, d_0\}$ number of least beneficial blocks from memory to disk and loading of the same number of highest beneficial blocks from disk to memory. 

We associate a benefit value with each of the generated plans and denote it as \emph{plan benefit} value. This value is calculated as the summation of the benefit values of all the triples present in the plan. The plan with the maximum plan benefit value is chosen in this step. Once the plan is chosen, the list of blocks that need to be flushed to disk are flushed first and the required blocks are loaded from the disk to memory.

We have experimentally compared the performance of our approach $w.r.t.$ the baselines with disk-resident objects.  
In Baseline 1 algorithm, we make the following changes from the approach presented in the main body of the paper. We first choose a tagging function and then execute it on all the objects after loading the required blocks to memory. We order all the blocks in disk based on their block benefit value. We load the top blocks based on their block benefit value to memory and then execute the tagging function on all the objects present in the blocks. We continue this process until all the required blocks are loaded in memory and the tagging function is executed on all the objects. 

In Baseline 2 algorithm, we make the following changes. We order all the blocks based on their block benefit value. We load the top blocks based on their block benefit value from disk to memory. We choose an object from the loaded blocks first and then execute all the tagging functions on the object. The objects were chosen based on their ESP values at the beginning of the query execution time, starting with the object with the highest ESP value.

The results are shown in Figure \ref{fig:VariationOfQualityDiskBased}. We can see that our approach performs significantly better than the baseline approaches. Our approach chooses the best blocks that should be loaded in memory at a particular epoch based on their block benefit value which results in execution of more beneficial blocks as compared to the baseline approaches. Furthermore, within each block we order the execution of triples according to their benefit value which results in the execution of more beneficial triples as compared to the baseline approaches.

%\vspace{-0.1cm}
\section{Data Structure Update} 
\label{appendix:data_structure}

In the plan execution phase of epoch $w$, we only update the object level data structures of the objects corresponding to the triples executed in that epoch. We efficiently update the state hash map, predicate probability hash map and the uncertainty list of these objects. Given a triple $(o_k,R^i_j,f^i_m)$ which was executed in epoch $w$, we update $PQ$ as follows: if the new ESP value of $o_k$ becomes higher than $\mathcal P^{\tau}_{w}$, then we update all the triples in $PQ$ containing $o_k$ using their updated benefit value. If the new ESP value becomes lower than $\mathcal P^{\tau}_{w}$, then we remove all the triples containing $o_k$ from $PQ$.

Given a triple $(o_k,R^i_j,f^i_m)$ which was not executed in epoch $w$, we update $PQ$ as follows: if the ESP value of $o_k$ is not changed $w.r.t.$ previous epoch $w-1$, then we do not update the benefit of the triples containing $o_k$ as the benefit values (i.e., $\nu_k(\mathcal P_k+\Delta \mathcal P_k)$ ) of those triples will remain the same $w.r.t.$ previous epoch. If the ESP value of $o_k$ is changed $w.r.t.$ previous epoch $w-1$, then we recompute the benefit of the triples containing $o_k$ and then add it to $PQ$.

\subsection{Online Learning of the parameters}\label{sect:online_param_learn}

In our experimental setup, we have learned the parameters cost, the quality of the tagging functions and the decision table with the help of a tagged validation dataset. Although this step is used as an offline step in our experiments, this step can be learned online as follows.

We assign a small portion of the first epoch for the learning of these parameters. We execute the tagging functions on a small set of objects. We store the execution time and the probability outputs of the tagging functions on each objects. We use these outputs to learn the cost of the tagging functions and the decision table. 

\vspace{0.1cm}
\noindent
\textbf{Learning the Cost parameter.} Given the execution costs of a tagging function on the objects, we measure the average execution time. We set the cost of the tagging function with this value of average execution time.

\vspace{0.1cm}
\noindent
\textbf{Learning the Decision Table.}
In Section \ref{triple generation}, we have explained the structure of the decision table. Given a range of uncertainty values $(a, b]$, we use the state, imprecise attribute, tag, and uncertainty values of the objects and learn the function that is expected to provide the highest uncertainty reduction among the other remaining functions. Note that, for learning this table we do not require the ground truth information of the data.

%We execute all the available tagging functions on a small part of the dataset.

We use the probability outputs of the tagging functions for generating this table. In a given state, for each objects, we measure the uncertainty value. For an object, we determine which function among the remaining tagging functions reduces the uncertainty value of the object most at that state. We create a number of bins of possible uncertainty values (ranging between 0 and 1) and learn the next function which reduces the uncertainty of the object most within each bins. This value is stored in the fifth column (labeled as \textit{Next Function}). Once the next function is learned, given a bin (e.g., in the range of [0.5, 0.6)) we calculate the average amount of uncertainty reduced by the function. This is calculated by summing up the reduction of the uncertainty values of those objects and dividing it by the count of the objects.

\end{appendix}

%\titlespacing{\section}{0pt}{*0}{*0}
%\titlespacing{\subsection}{0pt}{*0}{*0}
%\titlespacing{\subsubsection}{0pt}{*0}{*0}

%\setlength\belowcaptionskip{-3ex}
%\setlength\textfloatsep{1.25\baselineskip plus 1pt minus 1pt}

%\addtolength{\textheight}{-12cm}   % This command serves to balance the column lengths
% on the last page of the document manually. It shortens
% the textheight of the last page by a suitable amount.
% This command does not take effect until the next page
% so it should come on the page before the last. Make
% sure that you do not shorten the textheight too much.

%
% The code below should be generated by the tool at
% http://dl.acm.org/ccs.cfm
% Please copy and paste the code instead of the example below.
%

%\bibliographystyle{apacite}

\bibliographystyle{abbrv}

\bibliography{references}

%\normalsize
%\vspace{0.6cm}
%\input{Appendix.tex}

\end{document}